
\documentclass{article}

\usepackage{microtype}
\usepackage{graphicx}
\usepackage{subfigure}
\usepackage{booktabs} 

\usepackage{hyperref}



\usepackage[accepted]{icml2025}

\usepackage{amsmath}
\usepackage{amssymb}
\usepackage{mathtools}
\usepackage{amsthm}

\usepackage[capitalize,noabbrev]{cleveref}

\theoremstyle{plain}
\newtheorem{theorem}{Theorem}[section]
\newtheorem{proposition}[theorem]{Proposition}
\newtheorem{lemma}[theorem]{Lemma}
\newtheorem{corollary}[theorem]{Corollary}
\theoremstyle{definition}
\newtheorem{definition}[theorem]{Definition}

\theoremstyle{remark}
\newtheorem{remark}[theorem]{Remark}
\usepackage[flushleft]{threeparttable}

\usepackage[textsize=tiny]{todonotes}

\newcommand{\cA}{\mathcal{A}}

\newcommand{\cG}{\mathcal{G}}

\newcommand{\cO}{\mathcal{O}}

\newcommand{\cR}{\mathcal{R}}

\newcommand{\cZ}{\mathcal{Z}}

\renewcommand{\epsilon}{\varepsilon}

\def\E{\mathop{\mathbb{E}}\displaylimits} 


\newcommand{\paren}[1]{\left( #1 \right)}
\newcommand{\parens}[1]{( #1 )}
\newcommand{\sqb}[1]{\left[ #1 \right]}

\newcommand{\set}[1]{\left\{#1 \right\}}
\newcommand{\sets}[1]{\{#1 \}}

\newcommand{\kets}[1]{\vert #1 \rangle}

\newcommand{\abss}[1]{\left\lvert #1 \right\rvert}
\newcommand{\abs}[1]{\lvert #1 \rvert}

\newcommand{\ket}[1]{\left| {#1}\right \rangle}

\newcommand{\norm}[1]{\left\lVert #1 \right\rVert}
\newcommand{\norms}[1]{\lVert #1 \rVert}

\newcommand{\ang}[1]{\left\langle #1 \right\rangle}
\newcommand{\angs}[1]{\langle #1 \rangle}

\newcommand{\ceils}[1]{\lceil #1 \rceil}

\newcommand{\diag}{{\operatorname{diag}}}
\newcommand{\trace}{{\operatorname{Tr}}}
\newcommand{\poly}{{\operatorname{poly}}}
\newcommand{\polylog}{{\operatorname{polylog}}}

\icmltitlerunning{Quantum Speedup for Hypergraph Sparsification}

\begin{document}

\twocolumn[
\icmltitle{Quantum Speedup for Hypergraph Sparsification}




\begin{icmlauthorlist}
\icmlauthor{Chenghua Liu}{ios,ucas}
\icmlauthor{Minbo Gao}{ios,ucas}
\icmlauthor{Zhengfeng Ji}{thu}
\icmlauthor{Mingsheng Ying}{uts}
\end{icmlauthorlist}

\icmlaffiliation{ios}{Institute of Software, Chinese Academy of Sciences, China}
\icmlaffiliation{ucas}{University of Chinese Academy of Sciences, China}
\icmlaffiliation{thu}{Department of Computer Science and Technology, Tsinghua University, China}
\icmlaffiliation{uts}{University of Technology Sydney, Australia}
\icmlcorrespondingauthor{Zhengfeng Ji}{jizhengfeng@tsinghua.edu.cn}

\icmlkeywords{Machine Learning, ICML}

\vskip 0.3in
]



\printAffiliationsAndNotice{}  

\begin{abstract}
  Graph sparsification serves as a foundation for many algorithms, such as
  approximation algorithms for graph cuts and Laplacian system solvers.
  As its natural generalization, hypergraph sparsification has recently gained
  increasing attention, with broad applications in graph machine learning and
  other areas.
  In this work, we propose the \emph{first} quantum algorithm for hypergraph
  sparsification, addressing an open problem proposed
  by~\citet{apers2022quantum}.
  For a weighted hypergraph with $n$ vertices, $m$ hyperedges, and rank $r$, our
  algorithm outputs a near-linear size $\epsilon$-spectral sparsifier in time
  $\widetilde O\parens{r\sqrt{mn}/\epsilon}$\footnote{We use
  $\widetilde O\parens{f}$ to represent
  $O \paren{f \cdot \polylog\parens{m, n, r, 1/\epsilon}}$ throughout this
  paper to suppress polylogarithmic factors.}.
  This algorithm matches the quantum lower bound for constant $r$ and
  demonstrates quantum speedup when compared with the state-of-the-art
  $\widetilde O(mr)$-time classical algorithm.
  As applications, our algorithm implies quantum speedups for computing
  hypergraph cut sparsifiers, approximating hypergraph mincuts and hypergraph
  $s$-$t$ mincuts.
\end{abstract}

\section{Introduction}

Sparsification serves as a foundational algorithmic paradigm wherein a densely
constructed entity is transitioned to a sparse counterpart while preserving its
inherent characteristics.
Such a process invariably enhances various facets of algorithmic efficiency,
from reduced execution time to optimized space complexity, and even more
streamlined communication.
A typical instance of this paradigm is the graph sparsification, where the
objective is to reduce the number of edges by adjusting edge weights while
(approximately) preserving the spectral properties of the original graph.
Over the past two decades, graph sparsification has experienced several
breakthroughs (see~\citet{spielman2011graph,batson2014twice}), ultimately
leading to the discovery of algorithms that finds spectral sparsifiers of linear
size in nearly-linear time~\citep{lee2018constructing}.
Graph sparsification algorithms are used for crucial tasks such as Laplacian
system solver~\cite{cohen2014solving}, computing random walk
properties~\cite{cohen2016faster}, and solving maximum flow
problem~\cite{chen2022maximum}.
Additionally, they have found widespread applications in machine learning
domains, including computer vision~\cite{simonovsky2017dynamic},
clustering~\cite{peng2015partitioning,laenen2020higher,agarwal2022sublinear},
and streaming machine learning algorithms~\cite{braverman2021adversarial}.

Hypergraphs, a generalization of graphs, naturally emerge in various real-world
scenarios where interactions go beyond pairwise relationships, such as group
dynamics in biochemistry, social networks, and trade networks, as they allow a
single edge to connect any number of vertices (the max number of vertices an
edge contains is called  the rank).
Thus, as a natural generalization of the garph sparsification, the task of
sparsifying hypergraphs are investigated by many researchers, starting
from~\citet{soma2019spectral}.
Hypergraph sparsification significantly reduces the computational cost of
calculating hypergraph energy, a crucial quantity for many machine learning
tasks, including clustering~\citep{zhou2006learning, hein2013total,
  takai2020hypergraph}, semi-supervised learning~\citep{hein2013total,
  zhang2018re, yadati2019hypergcn, li2020quadratic}, and link
prediction~\citep{yadati2020nhp}.

With the rapid development of quantum computing, many graph algorithms and
machine learning tasks have benefited from quantum speedups.
In the context of the sparsification paradigm, the groundbreaking work of
\citet{apers2022quantum} introduced the \emph{first quantum} speedup for graph
sparsification with nearly optimal query complexity, showcasing the potential of
quantum computing for sparsification tasks.
They further proposed several open questions about the potential for quantum
speedups in broader sparsification tasks.
Among these, a natural and important question is:
\begin{quote}
  \textit{Is there a hypergraph sparsification algorithm that enables quantum
  speedups?}
\end{quote}

In this paper, we give an affirmative answer to this question. 
Specifically, we develop a quantum algorithm for constructing an
$\varepsilon$-spectral sparsifier of near-linear size for a weighted hypergraph
with $n$ vertices, $m$ hyperedges, and rank $r$.
Our algorithm achieves near-linear output size—matching the current best
classical algorithm—while operating in sublinear time
$\widetilde O\parens{r\sqrt{mn}/\epsilon}$\footnote{To be more precise, our
  algorithm runs in $\widetilde O\parens{r\sqrt{mn}/\epsilon+r\sqrt{mnr}}$ time,
  of which the second term is usually smaller in most situations.
  See~\cref{re:main-theorem-1} for more detailed discussion.}, which improves
the time complexity $\widetilde{O}(mr)$ of the best known classical
algorithm\footnote{Typically, we assume that $\epsilon \geq \sqrt{n/m}$, as
  sparsification is only advantageous when the number of hyperedges in the
  sparsifier is at most $m$.}.
For the constant rank $r$, this time complexity matches the quantum lower bound
of $\widetilde{\Omega}(\sqrt{mn}/\varepsilon)$ established
by~\citet{apers2022quantum} and contrasts with the classical lower bound of
$\Omega(m)$ (see~\cref{re:main-theorem-2}).
Additionally, for dense hypergraphs, where $m \in \Omega\parens{n^r}$, our
algorithm achieves near-quadratic improvement, reducing the time complexity from
$\widetilde O\parens{rn^r}$ classically to $\widetilde O\parens{rn^{(r+1)/2}}$
quantumly.

\paragraph{Hypergraph Sparsification}
To extend the graph sparsification task to hypergraphs, we need to generalize
the quadratic form of the graph Laplacian.
This generalization leads to the concept of hypergraph energy, first introduced
by \citet{hubert2018spectral,yoshida2019cheeger}.
For a hypergraph $H = \paren{V,E,w}$, the energy of a vector $x\in \mathbb{R}^V$
is defined as a sum over all hyperedges $e \in E$, where each term is the
product of the edge weight $w_e$ and the ``quadratic form'' $Q_e(x)$.
Here, $Q_e(x)$ represents the \emph{maximum} squared difference between any two
vector components $x_u$ and $x_v$ corresponding to vertices $u$ and $v$ in the
hyperedge $e$ (see \cref{def:energy} for a formal description).
The concept of energy captures the spectral properties of the hypergraph, while
its inherent nonlinear structure introduces significant computational
challenges~\cite{hubert2018spectral}.

The task of hypergraph sparsification aims to reduce the number of hyperedges
while maintaining the energy of the original hypergraph, ultimately producing a
hypergraph sparsifier (see~\cref{def:hypergraph-spectral-sparsifier}).
Hypergraph sparsification not only is of great theoretical interest, but also
has wide applications in many fields, especially in graph machine learning.
Consequently, researchers have been developing increasingly sophisticated and
analytically refined algorithms for hypergraph sparsification in recent years.
\citet{soma2019spectral} were the first to demonstrate that an
$\epsilon$-spectral sparsifier of size $\widetilde{O}(n^3 / \epsilon^2)$ could
be constructed in time $\widetilde{O}(nmr + n^3 / \epsilon^2)$.
Subsequently, \citet{bansal2019new} improved the sparsifier size to
$\widetilde{O}(r^3n / \epsilon^2)$ with a construction time of
$\widetilde{O}(mr^2 + r^3n / \epsilon^2)$.
Further advancements by \citet{kapralov2021towards} and
\citet{kapralov2022spectral} reduced the sparsifier size to nearly linear,
achieving $\widetilde{O}(n / \epsilon^4)$ in polynomial time.
The current state-of-the-art algorithm for hypergraph sparsification, proposed
by \citet{jambulapati2023chaining}, produces a sparsifier of size
$\widetilde O\parens{n/\epsilon^2}$ in almost linear time $\widetilde{O}(mr)$.
Independently and concurrently, \citet{lee2023spectral} presented a
polynomial-time algorithm achieving a sparsifier of the same size.
For a detailed comparison of these results, see
\cref{tab:summary-hypergraph-sparsification}.

\begin{table*}[ht]
  \centering
  \caption{Summary of results on hypergraph sparsification}%
  \label{tab:summary-hypergraph-sparsification}
  \begin{threeparttable}[b]
  \begin{tabular}{cccc}
    \toprule
    Reference & Type & Sparsifier size & Time Complexity \\\toprule
    \citet{soma2019spectral} & Classical & $O(n^3\log n/\epsilon^{2})$
        & $\widetilde O(mnr+n^3/\epsilon^2 )$ \\\toprule
      \citet{bansal2019new} & Classical & $O(r^3n\log n /\epsilon^{2})$
        & $\widetilde O\parens{mr^2+r^3n /\epsilon^2}$ \\\toprule
    \citet{kapralov2021towards} & Classical & $nr{(\log n /\epsilon)}^{O(1)}$
        & $ O(mr^2)+ n^{O(1)}$ \\ \toprule
    \citet{kapralov2022spectral} & Classical & $O(n\log^3 n /\epsilon^{4})$
        & $\widetilde O\parens{mr+\poly\parens{n}}$ \\ \toprule
    \citet{jambulapati2023chaining,lee2023spectral} & Classical
        & $O( n \log n \log r /\epsilon^2) $
        & $ \widetilde O(mr)$\tnote{$\dagger$}   \\ \toprule
    This work & Quantum & $O\parens{n \log n \log r /\epsilon^2} $
        & $ \widetilde O\parens{r\sqrt{mnr} +r\sqrt{mn}/\epsilon}$ \\ \toprule
    \toprule
  \end{tabular}
  \begin{tablenotes}
    \item [$\dagger$] This $\widetilde O(mr)$ complexity corresponds to the
    algorithm proposed in~\citet{jambulapati2023chaining}.
    \end{tablenotes}
  \end{threeparttable}
\end{table*}

Prior to the emergence of spectral sparsification, early research focused on a
relatively weaker notion called the cut sparsification.
In the domain of hypergraphs, extensive research has been conducted on
hypergraph cut sparsifiers, yielding significant theoretical and practical
advances~\citep{kogan2015sketching, chekuri2018minimum}.
These sparsifiers have proven particularly valuable in efficiently approximating
cut minimization problems in hypergraphs, facilitating applications across
multiple domains.
Notable applications include VLSI circuit partitioning~\citep{alpert1995recent,
  karypis1997multilevel}, optimization of sparse matrix
multiplication~\citep{akbudak2013hypergraph, ballard2016hypergraph}, data
clustering algorithms~\citep{li2017inhomogeneous, liu2021strongly}, and ranking
data analysis~\citep{li2017inhomogeneous}.

\paragraph{Main Results}
In this work, we propose the first quantum algorithm for hypergraph
sparsification that produces a sparsifier of size
$\widetilde{O}(n/\varepsilon^2)$ in $\widetilde{O}(r\sqrt{mn}/\varepsilon)$
time, which breaks the linear barrier of classical algorithms.

\begin{theorem}[Informal version of~\cref{thm:quantum-hypergraph-sparsification}]
  There exists a quantum algorithm that, given query access to a hypergraph
  $H =\paren{V,E,w}$ with $\abss{V}=n$, $\abss{E}=m$, $w \in \mathbb{R}^E_{\geq 0}$,
  rank $r$ and $\epsilon> 0$, outputs with high probability\footnote{Throughout
  this paper, we say something holds ``with high probability'' if it holds
  with probability at least $1-O\paren{1/n}$.}
  an $\epsilon$-spectral sparsifier of $H$ with
  $\widetilde{O} \parens{n/\epsilon^2}$ hyperedges, in time
  $\widetilde O\parens{r\sqrt{mn}/\epsilon}$.
\end{theorem}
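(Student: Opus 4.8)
The plan is to lift the classical importance-sampling paradigm for hypergraph sparsification into the quantum iterative framework that \citet{apers2022quantum} developed for ordinary graphs, with their graph-theoretic certificates (spanning-forest bundles) replaced by ones that control the \emph{group leverage scores} governing hypergraph energy. I would organize the argument in three layers: (i) recall that sampling hyperedges according to valid overestimates of their importance already yields a near-linear $\epsilon$-spectral sparsifier; (ii) note that such sampling can be simulated by $O\parens{\log(m\,w_{\max})}$ rounds of ``extract a near-linear-size skeleton, then halve the rest,'' so that per round it suffices to find a sub-hypergraph after which every remaining hyperedge is $\Theta(\epsilon^2/\log n)$-light; (iii) implement each round's extraction quantumly within the stated time budget.

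For layer (i) I would invoke as a black box the classical guarantee underlying \citet{jambulapati2023chaining} and \citet{kapralov2022spectral}: given valid group-leverage-score overestimates $\set{\gamma_e}$ with $\sum_e\gamma_e=\widetilde O(n)$, keeping each $e$ independently with probability $p_e=\min\set{1,\,c\,\gamma_e\log n/\epsilon^2}$ and reweighting by $1/p_e$ yields, with high probability, an $\epsilon$-spectral sparsifier with $\widetilde O(n/\epsilon^2)$ hyperedges (a matrix-Chernoff / chaining argument I do not reprove). For layer (ii) I would bucket the sampling by scale: a hyperedge with (normalized) overestimate in $[2^{-i-1},2^{-i})$ is kept with probability $\approx 2^{-i}/\epsilon^2$, so handling scale $i$ means first adding to the sparsifier a sub-hypergraph $Y_i$ that is a union of $k$ ``spanning structures'' and certifies every not-yet-selected hyperedge at that scale to be $O(1/k)$-light w.r.t.\ $Y_i$, then subsampling the remainder at rate $1/2$; summing $\abs{Y_i}$ over the $O(\log)$ scales gives final size $\widetilde O(n/\epsilon^2)$.

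For layer (iii) I would work with the clique-expansion multigraph $G$ of the current hypergraph: each hyperedge $e$ becomes a clique $K_e$ on its vertices with clique-edge weight $w_e/\abs{e}$, so $\tfrac{1}{r^2}\,x^\top L_{K_e}x\le Q_e(x)\le x^\top L_{K_e}x$, and a $\kappa$-factor control of $G$-effective resistances gives an $O(r^2\kappa)$-factor control of hypergraph group leverage scores; moreover $\abs{E(G)}\le\binom r2 m$ and one query to $G$ costs $\widetilde O(r)$ queries to $H$. I would then call the quantum $k$-bundle-spanning-forest subroutine of \citet{apers2022quantum} (built on quantum spanning-forest / $k$-edge-connectivity algorithms), producing $F_1,\dots,F_k$ with $F_j$ a maximal forest of $G\setminus\bigcup_{\ell<j}F_\ell$ in quantum time $\widetilde O\parens{\sqrt{\abs{E(G)}\,n\,k}}=\widetilde O\parens{r\sqrt{mnk}}$, and set $Y=\setcond{e}{K_e\cap\bigcup_j F_j\neq\emptyset}$. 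Charging each of the $\le n-1$ edges of every $F_j$ to one hyperedge gives $\abs{Y}\le k(n-1)$; for $e\notin Y$ every pair $u,v\in e$ was present but unselected in all $F_j$, hence joined by $k$ edge-disjoint $e$-weight paths in $\bigcup_j F_j$, which---fed through the group-leverage-score analysis of \citet{jambulapati2023chaining} rather than a per-pair union bound---makes $e$ be $O(1/k)$-light. Taking $k=\widetilde\Theta(r+1/\epsilon^2)$ (the $\Omega(r)$ part because a hyperedge touches up to $r$ vertices, the $\Theta(\log n/\epsilon^2)$ part for accuracy) makes every remaining hyperedge $\Theta(\epsilon^2/\log n)$-light, and the per-round cost $\widetilde O\parens{r\sqrt{mn(r+1/\epsilon^2)}}=\widetilde O\parens{r\sqrt{mnr}+r\sqrt{mn}/\epsilon}$ telescopes over the geometrically shrinking $m$ to the bound of \cref{re:main-theorem-1}; each $Y_i$ has near-linear size and can be written down explicitly, and the standard $O(\log)$-bucketing handles edge weights.

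The hard part is the lightness certification in layer (iii): a hyperedge corresponds to an entire clique rather than a single edge, so a spanning-forest bundle of $G$ does not \emph{a priori} bound $Q_e$, and the naive remedy---union-bounding leverage scores over the $\binom r2$ pairs of $e$---loses a factor $r$ in the sparsifier size and destroys near-linearity. Making the argument tight requires showing the bundle controls the genuine hypergraph quantity $\gamma_e$, i.e.\ threading the group-leverage-score machinery through the recursive quantum construction. Secondary obstacles I anticipate are equipping each randomly subsampled residual hypergraph with a quantum query oracle at only polylog overhead (so the recursion iterates cleanly), and union-bounding the failure probabilities of all $\widetilde O(1)$ quantum subroutine invocations down to $O(1/n)$; both should be routine but must be verified.
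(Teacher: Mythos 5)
Your route is structurally different from the paper's. The paper computes group\mbox{-}leverage\mbox{-}score overestimates $z_e$ all at once via $O(\log r)$ John\mbox{-}ellipsoid\mbox{-}style iterations (following \citet{cohen2019near,jambulapati2023chaining}) on a \emph{sparse, star\mbox{-}shaped} underlying graph, stores the iterates in QRAM, and then performs a single round of quantum sampling (\cref{cor:quantum-prob-sample}) plus quantum sum estimation, with correctness via the chaining bound of \citet{lee2023spectral}. You instead propose a recursive ``skeleton + halve'' scheme with quantum $k$\mbox{-}bundle spanning forests on the full clique expansion, the graph recipe of \citet{apers2022quantum} transplanted to hypergraphs. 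The difference is not cosmetic; it is exactly what makes the hypergraph case hard.

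The gap you flagged in layer (iii) is the real one, and it is not a matter of ``threading the group\mbox{-}leverage\mbox{-}score machinery through the recursion.'' With a fixed uniform clique weight $c^e_{ij}\approx w_e/\binom{|e|}{2}$, a $k$\mbox{-}bundle certificate gives $R_{ij}\le 1/(k\,c^e_{ij})=\Theta(r^2)/(k w_e)$ for every surviving pair, hence $\ell_e=w_e\max_{\{i,j\}\subseteq e}R_{ij}=O(r^2/k)$ and $\sum_e\ell_e=O(mr^2/k)$ — nowhere near the required $O(n)$ unless $k$ is unacceptably large. The union\mbox{-}bound loss you worried about does not go away by changing the analysis, because the deficiency is in the \emph{underlying\mbox{-}graph weights}, not the counting. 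The paper's fix is to iteratively \emph{rebalance} those weights: the $\mathsf{WeightCompute}$ step reallocates $w_e$ across the star edges proportionally to their current effective resistances, the convexity of $\log R_f(c)$ (\cref{le:effective-resistance-convexity}) shows the averaged weights $\bar c$ satisfy $w_e R_e(\bar c)\lesssim\sum_{g\in S_e}\bar c_{e,g}R_g(\bar c)$, and Foster (\cref{le:Foster}) then gives $\sum_e\ell_e=O(n)$. No spanning\mbox{-}forest bundle of a fixed uniform\mbox{-}weight clique expansion performs this rebalancing. Separately, your time accounting drops a factor: the clique expansion has $\Theta(r^2 m)$ edges and each of its edge\mbox{-}vertex\mbox{-}incidence queries costs $\widetilde O(r)$ through $\cO_H^{\textup{vtx}}$ (the oracle writes all $r$ vertices), so the $k$\mbox{-}bundle call runs in $\widetilde O(r)\cdot\widetilde O\paren{\sqrt{r^2mnk}}=\widetilde O\paren{r^2\sqrt{mnk}}$, a factor $r$ over the paper's $\widetilde O\paren{r\sqrt{mnr}+r\sqrt{mn}/\epsilon}$; the $O(mr)$\mbox{-}edge star decomposition (\cref{def:sparse-underlying-graph}), together with deferring the $1/\epsilon^2$ dependence to the final $M=\widetilde O(n/\epsilon^2)$\mbox{-}sample step rather than baking it into $k$, is what keeps the exponent of $r$ down. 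Finally, your sparsifier size $\widetilde O(kn)=\widetilde O\paren{n(r+1/\epsilon^2)}$ carries a stray $nr$ term that the theorem statement does not allow.
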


As a corollary, our algorithm could be used to construct a cut sparsifier for a
hypergraph in sublinear time.
This enables quantum speedups for
approximating hypergraph mincuts and $s$-$t$ mincuts, achieving sublinear time complexity with respect to the number of hyperedges. For further details, we refer readers to \cref{sec:applications}.

\paragraph{Techiques}
Our algorithms are inspired by a sampling-based framework appearing in classical
hypergraph sparsification algorithms.
The overall idea behind the framework is to compute a proper importance weight
for each hyperedge, and sample the hyperedges based on the weights.

Specifically, we adopt the method proposed in~\citet{jambulapati2023chaining},
where the weight on each hyperedge is set to be the group leverage score
overestimate (which we call hyperedge leverage score overestimate in our paper
to avoid ambiguity).
Classically, computing these overestimates would require $\widetilde{O}(mr)$
time by an iterative/contractive algorithm which mainly follows the algorithm
for computing an approximate John ellipse~\cite{cohen2019near}.
Then, one can sample $\widetilde{O}(n/\varepsilon^2)$ hyperedges in
$\widetilde{O}(n/\varepsilon^2)$ time, and reweight them to get a hypergraph
sparsifier.

We discover that, with the assistance of a series of classical and quantum
techniques, the computation of hyperedge leverage score overestimate can be
accelerated.
To be more precise, we realize that the computation of hyperedge leverage score
overestimate could be executed in a sequence of sparse underlying graphs
(see~\cref{def:sparse-underlying-graph} for more details), and these sparse
underlying graphs can be efficiently constructed (in
$\widetilde{O}(r\sqrt{mnr})$ time) by the quantum graph sparification algorithm
proposed by~\citet{apers2022quantum}.
Thus, we obtain a quantum algorithm (\cref{alg:QHLSO}) that allows efficient
\emph{queries} to the hyperedge leverage score overestimate running in sublinear
time.

The quantum procedure described above, however, introduces a new challenge for
the sampling step: the exact sampling probabilities cannot be directly accessed.
To resolve this issue, we utilize the technique of ``preparing many copies of a
quantum state''~\cite{hamoudi2022preparing} to get
$\widetilde{O}(n/\varepsilon^2)$ samples in
$\widetilde{O}(r\sqrt{mn}/\varepsilon)$ time without explicitly computing the
normalization constant for the sampling probability.
After sampling hyperedges, we encounter a problem in the reweighting stage due
to the unknown normalization constant.
We address this issue partially using the quantum sum estimation procedure
(\cref{thm:quantum-sum-estimate}), while the imprecision introduced during this
procedure necessitates a more rigorous analysis.
We complete the correctness analysis by employing the novel chaining argument
proposed in~\citet{lee2023spectral}.

\paragraph{Related Works}
Quantum algorithms have demonstrated significant potential in graph-theoretic
and optimization tasks through the sparsification paradigm, offering both
theoretical advances and practical applications.

The seminal work of \citet{apers2022quantum} established quantum algorithms for
graph sparsification, demonstrating speedups in cut approximation, effective
resistance computation, spectral clustering, and Laplacian system solving, while
also providing fundamental lower bounds for quantum graph sparsification.
This work catalyzed several breakthrough results for graph problems.
\citet{apers2020quantum} developed a quantum algorithm for exact minimum cut
computation, achieving quantum speedups when the graph's weight ratio is
bounded.
\citet{apers2021sublinear} extended these techniques to solve the exact minimum
$s$-$t$ cut problem.
The versatility of quantum graph sparsification was further demonstrated in
\citet{cade2023quantum}, where it enabled accelerated motif clustering
computations.

A significant theoretical advancement emerged in \citet{apers2023quantum}, which
generalized the quantum graph sparsification framework to quantum spectral
approximation by combining leverage score sampling with Grover search.
This generalization enabled efficient approximation of Hessians and gradients in
barrier functions for interior point methods, yielding quantum speedups for
linear programming under specific conditions.
These spectral approximation techniques have found recent applications in
various machine learning problems.
\citet{song2023revisiting} and \citet{li2024quantum} applied the quantum
spectral approximation and leverage score sampling algorithms to achieve quantum
advantages in linear regression and John ellipsoid approximation, respectively.

\section{Preliminaries}

\subsection{Notation}

For clarity, we use $\sqb{n}$ to represent the set $\sets{1, 2, \ldots, n}$ and
$\sqb{n}_{0}$ to represent the set $\sets{0, 1, \ldots, n-1}$. 
Throughout this paper, we use $G = (V, F, c)$ to denote a graph, where $V$ is
the vertex set, $F$ is the edge set, and $c : F \to \mathbb{R}_{\geq 0}$ represents
the edge weights.
For an undirected weighted hypergraph, we denote it as $H = (V, E, w)$, where
$V$ is the vertex set, $E$ is the hyperedge set, and $w : E \to \mathbb{R}_{\geq0}$
represents the hyperedge weights.
We use $n,m$ to denote the size of $V$ and $E$ respectively.
Typically, we denote edges (consisting of two vertices) with $f$ and $g$, while
reserving $e$ for hyperedges.
Given a hyperedge $e$, we use $\binom{e}{2}$ to denote the corresponding induced
edge set $\sets{f \subseteq e : |f| = 2}$.

\subsection{Laplacian and Graph Sparsification}
For an undirected weighted graph $G = (V, F, c)$, the weighted degree of vertex
$i$ is defined by 
\begin{equation*}
\deg(i) := \sum_{f \in F: i \in f} c_f,
\end{equation*}
where the sum is
taken over all edges $f$ that contain $i$, and $c_f$ represents the weight of
edge $f$.

\begin{definition}[Laplacian]
  The Laplacian of a weighted graph $G=\parens{V,F,c}$ is defined as the matrix
  $L_{G}\in \mathbb{R}^{V\times V}$ such that
  \begin{equation*}
  \parens{L_ G}_{ij}=
  \begin{cases}
    \deg\parens {i} & \textup{if}i=j, \\
    -c_ {ij} & \textup{if}\sets{i,j}\in F, \\
    0 & \textup{otherwise}.
  \end{cases}
  \end{equation*}
\end{definition}

The Laplacian of graph $G$ is given by $L_G= D_G-A_G$, with $A_G$ the
weighted adjacency matrix $\paren{A_G}_{ij}=c_{ij}$ and $D$ the diagonal
weighted degree matrix $D_G=\diag\paren{\deg\paren{i}: i \in V}$.
$L_G$ is a positive semidefinite matrix whenever weight function $c$ is
nonnegative.
The quadratic form of $L_G$ can be written as
\begin{equation}
  \label{eq:quadratic-laplacian}
  x^\top L_ G  x =
  \sum_{\set {i,j}\in  F}c_{ij}\cdot \parens{x_i -x_j}^2
\end{equation}
for arbitrary vector $x \in \mathbb{R}^V$.
Graph sparsification produces a reweighted graph with fewer edges, known as a
graph (spectral) sparsifier.
A graph spectral sparsifier of $G$ is a re-weighted subgraph that closely
approximates the quadratic form of the Laplacian for any vector $x\in \mathbb{R}^V$.

\begin{definition}[Graph Spectral Sparsifier]
  Let $ G=\parens{V,F,c}$ be a weighted graph.
  A re-weighted graph $\widetilde G =\parens{V,\widetilde F, \widetilde c}$ is a
  subgraph of $G$, where $\widetilde c:\widetilde F\to \mathbb{R}_{\geq 0}$ and
  $\widetilde F= \sets{f\in F: \widetilde c_f>0}$.
  For any $\epsilon >0$, $\widetilde G $ is an $\epsilon$-spectral sparsifier of
  $G$ if for any vector $x\in \mathbb{R}^V$, the following holds:
  \begin{equation*}
  \abss{x^\top  L_ {\widetilde G}x  -x^\top L_ {G}  x} \leq
  \epsilon \cdot x^\top L_ {G}  x.
  \end{equation*}
\end{definition}

In the groundbreaking work by~\citet{spielman2011graph}, the authors
demonstrated that graphs can be efficiently sparsified by sampling edges with
weights roughly proportional to their effective resistances.
This importance sampling approach is foundational to graph sparsification and
has also inspired advancements in hypergraph sparsification.
Next, we define the effective resistance.

\begin{definition}[Effective Resistance]
  Given a graph $G =\parens {V,F, c}$, the effective resistance of a pair of
  $i , j \in V $ is defined as
  \begin{equation*}
  R_ {ij} := \parens {\delta_i -\delta_j}^\top L_G^{+} \parens {\delta_i - \delta_j}
  = \norms {L_G^{+/2} \parens {\delta_i -\delta_j}}^2,
  \end{equation*}
  where $L^{+}_G$ denotes the Moore-Penrose inverse of $L_G$, and $\delta_i$
  denotes the vector with all elements equal to 0 except for the $i$-th being 1.
\end{definition}
For further details, including key properties of effective resistance used in
this work, we refer the readers
to~\cref{sec:properties-of-effective-resistance}.

\subsection{Hypergraph Sparsification}
Here, we formally define the fundamental concept in hypergraph sparsification,
namely the energy.
\begin{definition}[Energy]\label{def:energy}
  Let $H= (V, E, w)$ be a weighted hypergraph.
  For every vector $x \in \mathbb{R}^ V$, we define its associated energy in $H$
  as
  \begin{equation}\label{eq:energy}
  Q_H (x) := \sum_{e \in E} w_e \cdot Q_e(x),
  \end{equation}
  where $Q_e \parens{x}:=\max_{\{i,j\} \subseteq e}{(x_i - x_j)}^2$.
\end{definition}

In the special case when the rank of $H$ is $2$, (meaning $H$ is actually a
graph), the energy reduces to the quadratic form of graph Laplacian (see
\cref{eq:quadratic-laplacian}).

Similar to graph sparsification, the goal of hypergraph sparsification is to
produce a hypergraph spectral sparsifier with fewer hyperedges.
The hypergraph spectral sparsifier of hypergraph $H$ is a reweighted subgraph of
$H$ that approximately preserves the energy for any vector $x \in \mathbb{R}^V$.

\begin{definition}[Hypergraph Spectral Sparsifier]\label{def:hypergraph-spectral-sparsifier}
  Let $ H=\parens{V,E,w}$ be a weighted hypergraph.
  A re-weighted hypergraph $\widetilde H= (V, \widetilde E, \widetilde w )$ is a
  subgraph of $H$, where $\widetilde w:\widetilde E\to \mathbb{R}_{\geq 0}$ and
  $\widetilde E= \sets{e\in E: \widetilde w_e>0}$.
  For any $\epsilon >0$, $\widetilde H $ is an $\epsilon$-spectral sparsifier of
  $G$ if for any vector $x\in \mathbb{R}^V$, the following holds:
  \begin{equation*}
    \left| Q_H (x )-Q_{\widetilde H}(x )\right| \leq \epsilon\cdot  Q_H (x ).
  \end{equation*}
\end{definition}

The hypergraph cut sparsifier is a weaker notion of sparsification than the
spectral sparsifier.
Specifically, for a weighted hypergraph $H \paren{V,E,w}$, we restrict
$x \in \mathbb{R}^V$ to be the characteristics vector $1_S \in \sets{0,1}^V$ of a
vertex subset $S \subseteq V$.
The energy $Q_H\paren{1_S}$, or simply $Q_H\parens{S}$, can be expressed as
$Q_H\parens{S}=\sum_{e\in \delta_S}w_e$, where $\delta_S$ denotes the set of
hyperedges crossing the cut $\parens{S,V\setminus S}$.
The \emph{${\epsilon}$-cut sparsifier} $\widetilde H$ of the hypergraph $H$ is
a subgraph that satisfies the following:
\begin{equation}\label{eq:hypergraph-cut}
  \abss{Q_H\paren{S} - Q_{\widetilde H}\paren{S}}
  \leq \epsilon \cdot Q_H \paren{S}, \quad \forall S \subseteq V.
\end{equation}

\subsection{Quantum Computing and Speedup}
In quantum mechanics, a $d$-dimensional quantum state
$\ket{v}=\paren{v_0,\ldots, v_{d-1}}^\top$ is a unit vector in a complex Hilbert
space $\mathbb{C}^d$, namely, $\sum_{i\in \sqb{d}_0} \abss{v_i}^2=1$.
We define the computational basis of the space $\mathbb{C}^d$ by
$\sets{\ket{i}}_{i \in \sqb{d}_0}$, where
$\ket{i}=\parens{0,\ldots,0,1,0,\ldots,0}^{\top}$ with the $i$-th entry
(0-indexed) being 1 and others being 0.
The inner product of quantum states $\ket{u},\ket{v}\in \mathbb{C}^d $ is
defined by $\angs{u | v}=\sum_{i \in \sqb{d}_0}u_i^* v_i$, where $z^*$ denotes
the conjugate of $z \in \mathbb{C}$.
The tensor product of quantum states $\ket{u}\in \mathbb{C}^{d_1}$ and
$\ket{v}\in \mathbb{C}^{d_2}$ is their Kronecker product,
$\ket{u}\otimes\ket{v}=\parens{u_0v_0,u_0v_1,\ldots ,u_{d_1-1}v_{d_2-1}}^{\top}$,
which can be abbreviated as $\ket{u}\ket{v}$.

A quantum bit, or qubit, is a quantum state $\ket{\psi}$ in $\mathbb{C}^2$,
expressible as $\ket{\psi}=\alpha\ket{0}+\beta\ket{1}$, where
$ \alpha,\beta\in \mathbb{C}$ and $\abs{\alpha}^2+\abs{\beta}^2=1$.
Furthermore, an $n$-qubit state is in the tensor product space of $n$ Hilbert
spaces $\mathbb{C}^2$, denoted as
$\parens {\mathbb{C}^2}^{\otimes n}=\mathbb{C}^{2^n}$, with the computational
basis $\sets{\ket{i}}_{i \in \sqb{2^n}_0}$.
To extract classical information from an $n$-qubit state $\ket{\psi}$, we
measure it in the computational basis, yielding outcome $i$ with probability
$p \paren{i}=\abs{\angs{\psi |i}}^2$ for $i \in \sqb{2^n}_0$.
The operations in quantum computing are described by unitary matrices $U$,
satisfying $U U^\dagger =U^{\dagger}U =I$, where $U^\dagger$ is the
Hermitian conjugate of $U$, and $I$ is the identity matrix.

We consider the following edge-vertex incidence oracle $\cO_G$ for the graph
$G= (V, F, c)$ with $n$ vertices and $m$ edges.
This oracle consists of two unitaries, $\cO_G^{\textup{vtx}}$ and
$\cO_G^{\textup{wt}}$, which are defined as follows for any edge
$f=\sets{i,j} \in F$:
\begin{align*}
  \cO_G^{\textup{vtx}} & :\ket{f}\ket{0}\mapsto\ket{f}\ket{i}\ket{j}, \\
  \cO_G^{\textup{wt}} & :\ket{f}\ket{0}\mapsto\ket{f}\ket{c_f},
\end{align*}
where $\ket{f}\in \mathbb{C}^m, \ket{i},\ket{j}\in \mathbb{C}^n$, and $c_f$ is
represented as a floating-point number with
$\ket{c_f} \in \mathbb{C}^{d_{\textup{acc}}}$.
Taking $d_{\textup{acc}}= \widetilde O\parens{1}$ allows for achieving arbitrary
desired floating-point accuracy.
Similarly, we assume access to hyperedge oracle $\cO_H $ for the hypergraph
$H=\paren{V,E,w}$, which consists of three unitaries $\cO_H^{\textup{size}}$,
$\cO_H^{\textup{vtx}}$ and $\cO_H^{\textup{wt}}$.
These unitaries allow for the following queries for any hyperedge $e \in E $:
\begin{align*}
  \cO_H^{\textup{size}} & :\ket{e}\ket{0}\mapsto\ket{e}\ket{\abss{e}},\\
  \cO_H^{\textup{vtx}} & :\ket{e}\ket{0}^{\otimes r} \mapsto\ket{e}
  \Bigl(\bigotimes_{i\in e} \ket{i} \Bigr)\ket{0}^{\otimes \parens {r-\abs{e}}},\\
  \cO_H^{\textup{wt}} & :\ket{e}\ket{0}\mapsto\ket{e}\ket{w_e}.
\end{align*}

In many quantum algorithms, information can be stored and retrieved in
quantum-read classical-write random access memory (QRAM)
(\citet{giovannetti2008quantum}), which is employed in numerous time-efficient
quantum algorithms.
QRAM enables the storage or modification of an array $c_1, \ldots, c_n$ of
classical data while allowing quantum query access via a unitary
$U_{\textup{QRAM}}: \ket{i}\ket{0} \mapsto \ket{i}\ket{c_i}$.
Although QRAM is a natural quantization of the classical RAM model and is widely
utilized, it is important to acknowledge that, given the current advancements of
quantum computers, the feasibility of implementing practical QRAM remains
somewhat speculative.

A quantum (query) algorithm $\cA$ is a quantum circuit consisting of a sequence
of unitaries $U_1, \ldots, U_T$, where each $U_t$ could be a quantum gate, a
quantum oracle, or a QRAM operation.
The time complexity of $\cA$ is determined by the number $T$ of quantum gates,
oracles and QRAM operations it contains.
The algorithm $\cA$ operates on $n$ qubits, starting with the initial state
$\ket{0}^{\otimes n}$.
The unitary operators $U_1, \dots, U_T$ are then applied sequentially to the
quantum state, resulting in the final quantum state
$\ket{\psi} = U_T \dots U_1 \ket{0}^{\otimes n}$.
Finally, a measurement is performed on $\ket{\psi}$ in the computational basis
$\ket{i}$ for $i \in \sqb{2^n}_0$, yielding a classical outcome $i$ with
probability $\abss{\angs{i | \psi}}^2$.

In our paper, we incorporate quantum algorithms from previous research as basic
components of our algorithm.
One such algorithm is quantum graph sparsification, initially proposed
by~\citet{apers2022quantum} using adjacency-list input queries, and later
revisited through alternative techniques in~\citet{apers2023quantum} using
edge-vertex incidence queries.
The query access described below refers to the latter approach.

\begin{theorem}[Quantum Graph Sparsification, Theorem 1 in~\citet{apers2022quantum}]\label{thm:apers-spectral-sparse}
  There exists a quantum algorithm $\mathsf{GraphSparsify}(\cO_G, \varepsilon)$
  that, given query access $\cO_G$ to a weighted graph $G=\paren{V,F,c}$ with
  $\abss{V}=n, \abss{F}=m, c \in \mathbb{R}^F_{\geq 0}$ and
  $\epsilon\geq \sqrt{n /m}$, outputs with high probability the explicit
  description of an $\epsilon$-spectral sparsifier of $G$ with
  $\widetilde O\parens {n /\epsilon^2}$ edges, using
  $\widetilde O\parens{\sqrt{mn}/\epsilon}$ queries to $\cO_G$ and in time
  $\widetilde O\parens{\sqrt{mn}/\epsilon}$.
\end{theorem}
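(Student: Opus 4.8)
The strategy is to pair a classical sampling-based sparsification pipeline that requires only \emph{crude} leverage-score overestimates with quantum search primitives (Grover search and amplitude amplification) that supply those overestimates and draw the samples in sublinear time. As a first reduction I would pass to the unweighted case: partition $F$ into $O(\polylog n)$ buckets of geometrically increasing weight — after discarding edges so light that their total contribution is negligible and handling separately the $O(n)$ heaviest edges — and within each bucket treat all edges as unit-weight up to a common scalar. Since Laplacians are additive and a bucket sparsified to relative error $\epsilon$ contributes spectral error at most $\epsilon$ times its own Laplacian $\preceq \epsilon L_G$, the union of the per-bucket sparsifiers is an $\epsilon$-spectral sparsifier of $G$ of size $\widetilde O(n/\epsilon^2)$. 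So it suffices to $\epsilon$-sparsify an unweighted graph on $n$ vertices with $m$ edges within the claimed time and query budget.

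For the unweighted case I would use the spanner-based pipeline in the spirit of Koutis--Xu. \emph{(i) Peel edge-disjoint spanners and halve the rest.} For $t = 0,1,\dots,O(\log m)$, from the current (implicitly represented) graph $G_t$ extract $\rho = \widetilde O(1)$ successively edge-disjoint spanners of stretch $O(\log n)$ and $\widetilde O(n)$ edges each, keep them with their recorded weight $2^t$, and let $G_{t+1}$ retain each remaining edge of $G_t$ independently with probability $1/2$ (weight doubled). Every leftover edge then has $\rho$ edge-disjoint paths of length $O(\log n)$ among the kept spanners, so by parallel composition and Rayleigh monotonicity its effective resistance — hence, here, its leverage score — is below $1/\polylog n$, which makes uniform halving a valid constant-factor spectral sampling step at each level (matrix Bernstein). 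After $O(\log m)$ rounds no edge survives, and the union $H$ of all kept spanners has $\widetilde O(n)$ edges and is w.h.p.\ a constant-factor spectral approximation of $G$. \emph{(ii) Preconditioned leverage-score sampling.} Since $\sum_{e\in F} w_e R_e^{H} = \trace(L_H^{+} L_G) = O(n)$, sampling each edge $e$ with probability $p_e = \min(1,\,C\epsilon^{-2}\log n \cdot w_e R_e^{H})$ and reweighting by $1/p_e$ gives, again by matrix Bernstein, an $\epsilon$-spectral sparsifier of $G$ with $\widetilde O(n/\epsilon^2)$ edges; the resistances $R_e^{H}$ are read off an $O(\log n)$-dimensional Johnson--Lindenstrauss embedding computed from $O(\log n)$ Laplacian solves on the $\widetilde O(n)$-edge graph $H$. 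Altogether this uses only $O(\polylog n)$ spanner computations on graphs with $\le m$ edges, $O(\log n)$ solves on an $\widetilde O(n)$-edge graph, and one sampling pass over the $m$ edges.

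Now the quantum acceleration. Each spanner computation is implemented by a quantum version of a low-diameter-clustering / Baswana--Sen algorithm: build $O(\log n)$ nested families of cluster centers, and for each vertex use Grover search over its incident edges — accessed via $\cO_G^{\textup{vtx}}$ composed with a QRAM lookup recording in which round each edge still survives — to find a short link to a nearby surviving cluster; batched over all $n$ vertices this costs $\widetilde O(\sqrt{mn})$ queries and time, and the spanner is written to QRAM. The final sampling pass is implemented by repeated amplitude amplification: fix a hash to make ``$e$ is a survivor'' a deterministic predicate computable on the fly from $\cO_G$, $\cO_G^{\textup{wt}}$ and the precomputed embedding, then Grover-search the $m$ edges for the $\widetilde O(n/\epsilon^2)$ survivors at total cost $\widetilde O\!\big(\sqrt{m\cdot n/\epsilon^2}\big) = \widetilde O(\sqrt{mn}/\epsilon)$, recomputing the reweight $1/p_e$ for each found edge. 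Summing over the $O(\polylog n)$ rounds and buckets yields $\widetilde O(\sqrt{mn}/\epsilon)$ queries and time, and the output is the explicit reweighted edge list.

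I expect the main obstacle to be the quantum spanner subroutine: one must show a stretch-$O(\log n)$, $\widetilde O(n)$-edge spanner is computable with $\widetilde O(\sqrt{mn})$ queries — in particular that the $n$ per-vertex ``find an edge into this cluster'' searches can be run together without paying a factor $n$ rather than $\sqrt{mn}$, and that Grover's bounded success probability and the $\widetilde O(1)$-bit precision of the oracles do not corrupt the stretch bound and hence the downstream Bernstein estimates. Secondary difficulties are making the weighted-to-unweighted reduction go through when the weight ratio is super-polynomial, and certifying that the implicitly defined, never-fully-materialised sampling probabilities still give a correctly reweighted sparsifier.
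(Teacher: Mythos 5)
This theorem is imported from the literature — the paper cites it as a black box and does not give a proof of its own — so I am comparing your reconstruction against the source(s) the paper relies on rather than against an in-paper argument. Your proposal is a reasonably faithful reconstruction of the original Apers--de~Wolf pipeline: weight bucketing to reduce to the unweighted case, Koutis-style iterated spanner peeling and halving to build an $\widetilde O(n)$-edge constant-factor preconditioner, then leverage-score sampling against that preconditioner, with Grover search supplying both the quantum spanner construction and the final sampling pass. That is indeed how Theorem~1 of \citet{apers2022quantum} is proved.

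The genuine gap is the oracle model. As the paper explicitly notes just above the statement, the \emph{original} Apers--de~Wolf algorithm works in the adjacency-list/adjacency-array model, while the theorem as restated here uses the edge-vertex incidence oracle $\cO_G^{\textup{vtx}}:\ket{f}\ket{0}\mapsto\ket{f}\ket{i}\ket{j}$ (mapping an edge index to its endpoints, not a vertex to its incident edges). Your quantum spanner step assumes you can ``Grover search over its incident edges'' for each vertex, but with only $\cO_G^{\textup{vtx}}$ the predicate ``$f$ is incident to $v$'' must be tested over all $m$ edge indices; a single Grover find per vertex already costs $\widetilde O(\sqrt m)$ rather than $\widetilde O(\sqrt{m/n})$, and running this for all $n$ vertices blows the budget up to $\widetilde O(n\sqrt m)$, a factor $\sqrt n$ beyond the target $\widetilde O(\sqrt{mn})$. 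There is no cheap reduction from the incidence oracle to an adjacency-list oracle inside this time budget. The version of the theorem compatible with the incidence oracle is the one from \citet{apers2023quantum}, which drops the spanner pipeline entirely and instead combines iterative leverage-score estimation (using an approximate inverse on a growing sample) directly with Grover search over edge indices; your proposal does not touch that route. Concretely, you would need to replace stage (i) and the per-vertex Grover searches with an incidence-model leverage-score subroutine, or else change the oracle model in the statement, for the argument to go through.
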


The next quantum algorithm proposed by~\citet{hamoudi2022preparing} provides an
efficient approach to prepare many copies of a quantum state.

\begin{theorem}[Preparing Many Copies of a Quantum State, Theorem 1 in~\citet{hamoudi2022preparing}]\label{thm:quantum-prob-sample}
  There exists a quantum algorithm that, given oracle access $\cO_{w}$ to a
  vector $w \in \mathbb{R}_{\geq 0}^n$ (0-indexed)
  ($\cO_w:\ket{i}\ket{0}\mapsto \ket{i}\ket{w_i},\forall i \in \sqb{n}$), and
  $ k\in [n]$, with high probability, outputs $k$ copies of the state $\ket{w}$,
  where 
  \[
  \ket{w}=\frac{1}{\sqrt{W}}\sum_{i\in \sqb{n}_0}\sqrt{w_i} \ket{i}
  \]
  with $W={\sum_{i\in \sqb{n}_0}w_i}$.
  The algorithm uses $\widetilde{O} (\sqrt{nk})$ queries to $\cO_{w}$, and runs
  in $\widetilde{O} (\sqrt{nk})$ time.
\end{theorem}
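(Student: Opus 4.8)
The plan is to realize $\ket w$ --- with its correct but unknown normalization $1/\sqrt{W}$ --- as the ``good'' branch of a single short amplitude amplification, after first peeling off the $k$ heaviest coordinates. Let $T$ be the set of the $k$ coordinates with largest weights and $\theta := w_{(k)}$ the $k$-th largest weight, so $w_i \le \theta$ for every $i \notin T$; the elementary fact driving everything is $W_T := \sum_{i \in T} w_i \ge k\theta$ (equivalently $\theta \le W/k$). First I would compute $T$, the weights $\{w_i\}_{i \in T}$ (hence $W_T$ exactly) and $\theta$, by quantum order-statistic search (binary search on the weight value together with quantum approximate counting) followed by Grover search for the $O(k)$ coordinates at or above $\theta$, and store the outcome in QRAM; this one-time step uses $\widetilde{O}(\sqrt{nk})$ queries to $\cO_w$ and $\widetilde{O}(\sqrt{nk})$ time.

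Given the QRAM data, for each requested copy I would build the state-preparation unitary $U$ as follows. Initialize a flag qubit in $\beta_H\ket 0 + \beta_L\ket 1$ with $\beta_H = \sqrt{W_T/(W_T + n\theta)}$ and $\beta_L = \sqrt{n\theta/(W_T + n\theta)}$, both exactly computable from the known $W_T$ and $\theta$. Conditioned on $\ket 0$, load from QRAM the normalized state $\tfrac{1}{\sqrt{W_T}}\sum_{i\in T}\sqrt{w_i}\ket i$ and mark the branch ``good'' ($O(\log k)$ QRAM-controlled rotations, no oracle queries). Conditioned on $\ket 1$, prepare $\tfrac{1}{\sqrt n}\sum_i \ket i$, query $\cO_w$ for $w_i$, look up $\mathbb 1[i\in T]$ in QRAM, and mark the branch ``good'' iff $i \notin T$ and an auxiliary coin of bias $w_i/\theta$ succeeds (legitimate since $w_i \le \theta$ off $T$). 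Since $\beta_H/\sqrt{W_T} = \beta_L/\sqrt{n\theta} = 1/\sqrt{W_T + n\theta}$, a short computation shows that conditioned on ``good'' the state $U\ket 0$ equals $\ket w$ (tensored with ancillas), with success amplitude $\sqrt{W/(W_T + n\theta)} \ge \sqrt{W_T/(W_T + n\theta)} \ge \sqrt{k/(2n)}$, the last inequality using $W_T \ge k\theta$ and $k \le n$. The point is that $W$ is never needed: the branch weights depend only on $W_T$ and $\theta$, and the factor $1/\sqrt{W}$ appears automatically once amplitude amplification projects onto ``good''. I would then run fixed-point amplitude amplification with the a-priori amplitude bound $\sqrt{k/(2n)}$ --- which requires only a lower bound on the amplitude, not its value --- to get a state $\delta$-close to $\ket w$ using $\widetilde{O}(\sqrt{n/k}\log(1/\delta))$ queries, finishing by uncomputing the flag (a QRAM-computable function of $i$) and the $\ket{w_i}$ register with one more query. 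Running this $k$ times on top of the shared QRAM data then costs $\widetilde{O}(\sqrt{nk}) + k \cdot \widetilde{O}(\sqrt{n/k}) = \widetilde{O}(\sqrt{nk})$ queries and time; taking $\delta = 1/\poly(nk)$ makes the $k$ per-copy errors sum to $1/\poly(n)$, and boosting the few classical subroutines (approximate counting, order-statistic search, find-all-marked) to failure probability $1/\poly(n)$ adds only polylog factors.

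The main obstacle I anticipate is precisely the unknown normalization: $W$ cannot be estimated cheaply, since amplitude estimation to relative error $\delta$ costs $\Theta(1/\delta)$ times the inverse amplitude --- far too expensive for $\delta = 1/\poly(n)$. What rescues the plan is that peeling off the top $k$ coordinates makes the residual tame enough that three things hold at once: (i) the \emph{relative} weight between the ``heavy'' QRAM part and the ``light'' residual part is pinned down by the exactly-known $W_T$ and $\theta$; (ii) the overall ``good''-branch amplitude has the clean a-priori lower bound $\sqrt{k/(2n)}$, so fixed-point amplification --- needing only such a bound --- is applicable and costs only $\widetilde{O}(\sqrt{n/k})$ per copy; and (iii) the one-time heavy-coordinate search stays at $\widetilde{O}(\sqrt{nk})$. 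Checking that these requirements are mutually compatible --- in particular pinning the constants in the order-statistic and find-all-marked routines and confirming the amplitude identity for $U$ --- is where I would expect to spend the most effort; the remainder is standard amplitude amplification and error bookkeeping.
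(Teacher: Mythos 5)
This theorem is not proved in the paper; it is imported verbatim as Theorem~1 of \citet{hamoudi2022preparing} and used as a black box. Your reconstruction is correct and captures the essential ingredients of Hamoudi's algorithm: a one-time $\widetilde{O}(\sqrt{nk})$ preprocessing step that locates the $k$ heaviest coordinates and stores $\{(i,w_i)\}_{i\in T}$ together with $W_T$ and $\theta$ in QRAM, a two-branch state-preparation unitary whose ``good'' amplitude is $\sqrt{W/(W_T+n\theta)}$, and per-copy amplitude amplification using only the a-priori lower bound $\sqrt{k/(2n)}$ (so the unknown global normalization $W$ is never estimated). Your algebra checks out: $\beta_H/\sqrt{W_T}=\beta_L/\sqrt{n\theta}=1/\sqrt{W_T+n\theta}$ makes the conditional post-selected amplitudes exactly $\sqrt{w_i}/\sqrt{W_T+n\theta}$ on both branches, $W_T\ge k\theta$ and $k\le n$ give the $\sqrt{k/(2n)}$ bound, and $k\cdot\widetilde{O}(\sqrt{n/k})+\widetilde{O}(\sqrt{nk})=\widetilde{O}(\sqrt{nk})$ yields the claimed complexity. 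This is the same approach as the cited source.
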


By performing measurements on each of the generated quantum state copies, a
sample sequence is produced, where each element $i$ is selected with a
probability proportional to $w_i$.
This leads to the following corollary.

\begin{corollary}\label{cor:quantum-prob-sample}
  There exists a quantum algorithm $\mathsf{MultiSample}(\cO_w, k)$ that, given
  query access $\cO_w$ to a vector $w \in \mathbb{R}_{\geq 0}^n$ and integer
  $k\in [n]$, outputs with high probability a sample sequence
  $\sigma\in \sqb{n}^k$ such that each element $i$ is sampled with probability
  proportional to $w_i$.
  The algorithm uses $\widetilde O\parens {\sqrt{nk}}$ queries to $\cO_w$, and
  runs in $\widetilde O\parens {\sqrt{nk}}$ time.
\end{corollary}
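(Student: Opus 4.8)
The plan is to obtain the corollary directly from \cref{thm:quantum-prob-sample} by appending a computational-basis measurement to each prepared copy of $\ket{w}$. First I would invoke the algorithm of \cref{thm:quantum-prob-sample} with the same oracle $\cO_w$ and the same parameter $k$; with high probability this produces $k$ copies of the state
\[
  \ket{w} = \frac{1}{\sqrt{W}}\sum_{i\in [n]_0}\sqrt{w_i}\,\ket{i},
  \qquad W = \sum_{i\in[n]_0} w_i,
\]
using $\widetilde O(\sqrt{nk})$ queries to $\cO_w$ and $\widetilde O(\sqrt{nk})$ time. I condition on this success event for the rest of the argument.

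Next I would measure each of the $k$ copies in the computational basis $\sets{\ket{i}}$, obtaining outcomes $\sigma_1,\dots,\sigma_k$, and set $\sigma = (\sigma_1,\dots,\sigma_k) \in [n]^k$. By the Born rule, measuring $\ket{w}$ in the computational basis yields outcome $i$ with probability $\abss{\angs{i | w}}^2 = w_i / W$, which is exactly proportional to $w_i$ (and equal to $0$ precisely when $w_i = 0$, as desired). Since the $k$ copies occupy distinct, unentangled registers, the $k$ measurement outcomes are mutually independent, so $\sigma$ is an i.i.d.\ sample sequence drawn from the distribution $i \mapsto w_i/W$.

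Finally I would track the measurement overhead and the failure probability. Each measurement, together with copying its outcome into the output register, costs $O(\polylog(n))$ time, so the $k$ measurements contribute only $\widetilde O(k)$ additional time and no extra queries to $\cO_w$; since $k \le n$ we have $k \le \sqrt{nk}$, and hence the total time and query count remain $\widetilde O(\sqrt{nk})$. The sole source of error is the failure of \cref{thm:quantum-prob-sample}, which happens with probability $O(1/n)$; on its complement the output distribution of $\sigma$ is exactly as claimed, so $\mathsf{MultiSample}(\cO_w, k)$ behaves as stated with high probability. The argument is essentially bookkeeping — the only point that needs a moment's care is verifying that the per-copy measurements are independent and that their cost is dominated by the state-preparation cost, both of which follow immediately from the form of the output guaranteed by \cref{thm:quantum-prob-sample}.
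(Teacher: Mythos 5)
Your proof is correct and takes essentially the same route as the paper: invoke \cref{thm:quantum-prob-sample} to prepare $k$ copies of $\ket{w}$, then measure each copy in the computational basis so that the Born rule gives outcome $i$ with probability $w_i/W$. The paper states this as a one-line observation immediately before the corollary; your write-up simply makes the same reasoning explicit, including the negligible measurement overhead.
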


In addition to the quantum algorithms mentioned above, we also require quantum
sum estimation, which provides a quadratic speedup over classical approaches.

\begin{theorem}[Quantum Sum Estimation, Lemma 3.1 in~\citet{li2019sublinear}]\label{thm:quantum-sum-estimate}
  There exists a quantum algorithm $\mathsf{SumEstimate}\parens{\cO_w,\epsilon}$
  that, given query access $\cO_w$ to a vector $w \in \mathbb{R}^n_{\geq 0}$ and
  $\epsilon >0$, outputs with high probability an estimate $\widetilde s$ for
  $s=\sum_{i\in \sqb{n}}w_i$ satisfying $\abss{\widetilde s -s}\leq \epsilon s$,
  using $\widetilde O\parens{\sqrt{n}/\epsilon}$ queries to $\cO_w$ and in
  $\widetilde O\parens{\sqrt{n}/\epsilon}$ time.
\end{theorem}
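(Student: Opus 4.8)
The plan is to reduce $\mathsf{SumEstimate}$ to quantum amplitude estimation. First I would normalise the input: by one round of quantum maximum finding (cost $\widetilde O(\sqrt n)$ queries to $\cO_w$), or by an a priori bound on the floating-point exponents, I may rescale so that every $w_i \in [0,1]$; and, assuming $s \neq 0$, that $s \ge 1$, using that each entry is specified to $d_{\textup{acc}} = \widetilde O(1)$ bits so a nonzero sum cannot be inverse-superpolynomially small after rescaling. Set $p := s/n \in (0,1]$, so $p \ge 1/n$; it then suffices to estimate $p$ to multiplicative error $\epsilon$ and output $\widetilde s := n\,\widetilde p$, since $\abss{\widetilde s - s} = n\abss{\widetilde p - p} \le n\epsilon p = \epsilon s$.

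Next I would build, from $O(1)$ calls to $\cO_w$ and $\widetilde O(1)$ auxiliary gates, a state-preparation unitary $U$ with
\[
  U : \ket{0}\ket{0} \longmapsto \frac{1}{\sqrt n}\sum_{i\in\sqb n} \ket{i}\paren{\sqrt{w_i}\,\ket{0} + \sqrt{1-w_i}\,\ket{1}} .
\]
This is obtained by placing the index register in uniform superposition (Hadamards), querying $\cO_w$ to write $w_i$ into an ancilla, applying a controlled rotation $\ket{w_i}\ket{0}\mapsto\ket{w_i}\parens{\sqrt{w_i}\,\ket 0 + \sqrt{1-w_i}\,\ket 1}$ (implementable to the required precision with $\widetilde O(1)$ gates since $w_i$ has $\widetilde O(1)$ bits), and uncomputing the $\cO_w$ query. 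Measuring the last qubit of $U\ket{0}\ket{0}$ yields $\ket 0$ with probability exactly $\frac 1n\sum_{i\in\sqb n} w_i = p$, so a relative-error estimate of this acceptance probability is exactly what we want.

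Then I would run amplitude estimation on $U$ with the ``good'' subspace flagged by the last qubit being $\ket 0$, using the Grover operator $Q = -\,U\,R_0\,U^{-1}\,R_{\mathrm{good}}$ (where $R_0$ reflects about $\ket 0\ket 0$ and $R_{\mathrm{good}}$ reflects about the flagged subspace). Since $p$ is unknown, I would use the exponential-search schedule: run the procedure with $M = 2^0, 2^1, 2^2,\dots$ iterations of $Q$, stopping at the first scale where the returned estimate is provably bounded away from $0$. Standard analysis then yields $\widetilde p$ with $\abss{\widetilde p - p}\le \epsilon p$ with probability $\Omega(1)$ using $O\parens{1/(\epsilon\sqrt p)}$ applications of $Q$, hence $O\parens{1/(\epsilon\sqrt p)} = \widetilde O(\sqrt n/\epsilon)$ queries to $\cO_w$ (using $p\ge 1/n$); each iteration of $Q$ adds only $\widetilde O(1)$ gates, so the running time matches. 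Boosting the success probability to $1-O(1/n)$ costs an extra $O(\log n)$ factor (run independent copies and take the median), absorbed into $\widetilde O$. Returning $\widetilde s = n\widetilde p$ completes the construction.

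I expect the main obstacle to be the interaction between \emph{relative} error and the unknown magnitude of $p$: plain amplitude estimation gives additive error, so one must either invoke the variant tailored to multiplicative error or analyse the doubling search carefully, and one must guarantee $\sqrt{n/s} = \widetilde O(\sqrt n)$, which is precisely where the assumption that the $w_i$ are given to $\polylog$ bits (so that a nonzero sum is inverse-polynomial after rescaling) is used. A secondary technical point is implementing the controlled rotation gate to accuracy compatible with the target $\epsilon$ while keeping the gate count $\widetilde O(1)$, which is routine given the oracle precision $d_{\textup{acc}} = \widetilde O(1)$. (An alternative route is to invoke a known quantum mean/relative-error estimation primitive directly on $w$, but the amplitude-estimation reduction above is the most self-contained.)
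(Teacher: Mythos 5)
The paper does not prove this theorem; it imports it verbatim as Lemma~3.1 of \citet{li2019sublinear}. Your reconstruction via amplitude estimation is essentially the standard proof that the cited work (and the broader literature on quantum approximate counting with relative error, going back to Brassard--H{\o}yer--Mosca--Tapp) uses: rescale by the maximum entry so that $s\ge 1$ and hence $p=s/n\ge 1/n$, encode $p$ as an acceptance amplitude via a controlled rotation, and apply the doubling schedule for amplitude estimation to get multiplicative error with $O(1/(\epsilon\sqrt{p}))=\widetilde O(\sqrt{n}/\epsilon)$ Grover iterates, followed by a median boost. The one point worth being explicit about is that the lower bound $p\ge 1/n$ genuinely requires rescaling by (an approximation of) $\max_i w_i$ rather than by an arbitrary a priori upper bound on the floating-point range, since the latter could be much larger than the true maximum and would make $\sqrt{n/s}$ blow up; you do cover this by allowing a quantum maximum-finding preprocessing step at cost $\widetilde O(\sqrt n)$, which is dominated by the main term. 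With that caveat noted, the argument is correct and matches the cited proof in approach.
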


\section{Quantum Algorithm for Leverage Score Overestimates}

In this section, we will introduce the notion of hyperedge leverage scores,
which is a generalization of leverage scores of edges in a graph.
We then define the concept of leverage score overestimates, which are one-side
bounded estimates of hyperedge leverage scores.
Finally, we propose a quantum algorithm that computes the overestimates given
query access to a hypergraph.

To define hyperedge leverage scores, we first introduce the concept of
underlying graphs.
\begin{definition}[Underlying Graph]\label{def:underlying-graph}
  Given an undirected weighted hypergraph $H =\parens{V, E, w}$, an underlying
  graph of $H$ is defined as a multigraph $G = \parens{V, F, c}$ with edge set
  $F =\set{\parens{e,f}: f \in \binom{e}{2}, e \in E} $ and weights
  $c \in \mathbb{R}_{\geq 0}^{F}$, satisfying
  \begin{equation}\label{eq:underlying-graph-constraints}
    w_e=\sum_{{f}\in \binom {e}{2}}c_{e,f}\ , \quad \quad \forall e \in E .
  \end{equation}
\end{definition}
Note that if the hypergraph contains $m$ hyperedges with rank $r$, its
underlying graph can have up to $ mr(r-1)/2$ edges.
The multiple edges in the underlying graph are labeled according to the
hyperedges they originate from.

With this concept, we define the hyperedge leverage score as follows: Given a
hypergraph $H$ and one of its corresponding underlying graphs $G$, the
\emph{leverage score of a hyperedge} $e \in E$, is defined as
\begin{equation}
\ell_e :=w_e R_e, 
\end{equation}
where $R_e =\max\sets {R_f: \forall f \in \binom{e}{2}}$, 
and $R_f$ represents the effective resistance of the edge $f$ in the underlying
graph $G$.

We remark that, the choice of the underlying graph $G$ will greatly influence
the hyperedge leverage scores.
For our sparsification purpose, we want to bound the total sum of hyperedge
leverage scores by $O(n)$, which determines the the size of the resulting
hypergraph sparsifiers.
Therefore, we define the following notion of hyperedge leverage score
overestimates, which are entry-wise upper bounds for leverage scores with a
specific underlying graph, such that the total sum is bounded by a parameter
$\nu = O(n)$.

\begin{definition}[Hyperedge Leverage Score Overestimate, adapted from~{\citet[Definition 1.3]{jambulapati2023chaining}}]\label{def:hy3peredge-overestimate}
  Given a hypergraph $H =\paren{V,E,w}$, we say
  $z \in \mathbb{R}^{E}_{\geq 0}$ is a $\nu$-(bounded hyperedge leverage
  score) overestimate for $H$ if $\norm{z}_1\leq \nu$ and there exists a
  corresponding underlying graph $G =\paren{V,F,c}$, satisfying the constraints
  \cref{eq:underlying-graph-constraints}, such that for all $e \in E$,
  $ z_e \geq \ell_e $.
\end{definition}

To obtain an overestimate, we need to handle the weights of the underlying
graph, which contains $O\parens{mr^2}$ edges.
Nevertheless, it is sufficient to manage only $O\parens{mr}$ edges by replacing
each hyperedge with a sparse subgraph (e.g., a star graph with up to $r-1$
edges) rather than a complete clique (\cref{def:underlying-graph}).
We refer to the resulting graph as a sparse underlying graph.
\begin{definition}[Sparse Underlying Graph]\label{def:sparse-underlying-graph}
  Given an undirected weighted hypergraph $H =\parens{V, E, w}$, a sparse
  underlying graph of $H$ is defined as $G = \parens{V, F, c}$ with edge set
  $ F =\set{\parens{e,f}: f \in S_e , e \in E} $ and weights
  $c \in \mathbb{R}_{\geq 0}^{F}$, satisfying
  \begin{equation}\label{eq:sparse-underlying-graph-constraints}
  w_e  =\sum_{{f}\in S_e}c_{e,f}, \quad \quad \forall e \in E.
  \end{equation}
  For each $e\in E$, we fix an arbitrary vertex $a_e \in e$ and define
  $S_e =\sets{f \in \binom{e}{2} : a_e\in f}$.
\end{definition}

Now we present our quantum algorithm of hyperedge leverage score overestimates,
which is a key step for our main quantum algorithm for hypergraph
sparsification.
Our algorithm is inspired by the approximating John ellipsoid algorithm proposed
in~\citet{cohen2019near} and the group leverage score overestimate algorithm
in~\citet{jambulapati2023chaining}.
The input of our algorithm includes a quantum oracle to the hypergraph, the
number of iterations $T$, the graph sparsification parameter $\alpha_1$, and the
effective resistance approximation factor $\alpha_2$.
The output of our quantum algorithm is a data structure, which could provide a
query access to the hyperedge leverage score overestimates
(see~\cref{prop:overestimate-preparation} for a formal description of the
output).

Recall that the choice of the underlying graph $G$ determines the hyperedge
leverage scores, as well as their overestimates.
Our algorithm iteratively adjusts the edge weights of the underlying graph over
roughly $\log r$ rounds to construct a suitable $G$.
In each iteration, we use quantum graph sparsification to reduce the graph’s
size and reassign hyperedge weights to the edges of the underlying graph
according to edge leverage scores.
To maintain overall efficiency, this process is implemented entirely quantumly
through a series of quantum subroutines.
The process begins with $\mathsf{WeightInitialize}$, which is employed during
the first iteration to establish quantum query access to the weights of the
underlying graph via queries to the original hypergraph
(\cref{prop:weight-init}).
In each iteration, the system utilizes $\mathsf{UGraphStore}$ to provide quantum
query access to the stored weights of the sparsifier obtained from
$\mathsf{GraphSparsify}$ (\cref{prop:quantum-underlying-graph-store}).
Subsequently, $\mathsf{EffectiveResistance}$ enables efficient quantum queries
to the approximate effective resistance of a graph
(\cref{prop:quantum-effective-resistance-oracle}).
For the next iteration, $\mathsf{WeightCompute}$ implements quantum query access
to the updated weights of the sparse underlying graph
(\cref{prop:weight-compute}).
The complete algorithm is presented in \cref{alg:QHLSO}.

\begin{algorithm}[htb]
  \caption{Quantum Hyperedge Leverage Score Overestimates
  $\mathsf{QHLSO}(\cO_H, T,\alpha_1,\alpha_2)$}\label{alg:QHLSO}
  \begin{algorithmic}[1] 
  \REQUIRE{} Quantum Oracle $\cO_H$ to a hypergraph $H=(V,E,w )$ with
  $\abss{V}=n,\abss{E}=m$, rank $r$; the number of episodes $T\in \mathbb{N}$;
  positive real numbers $ \alpha_1,\alpha_2 \in \mathbb{R}$.
  \ENSURE{} An instance $\cZ$ of $\mathsf{QOverestimate}$ which stores the
  vector $z$ being an $O(n )$-overestimate for $H$.
  \STATE{} Let $U _{G (1)} = \mathsf{WeightInitialize}(\cO_H )$.
  \FOR{$t=1$ to $T$} %
  \STATE{}
  $\widetilde G^{(t)}=\parens{V,\widetilde F^{(t)},\widetilde c^{(t)}}
  \gets \mathsf{GraphSparsify}(U_{G (t)}, \alpha_1)$.
  \STATE{} $\cG^{(t)} \gets \mathsf{UGraphStore}\parens{\widetilde G^{(t )}}$.
  \STATE{}
  $\cR^{(t)} \gets \mathsf{EffectiveResistance}(\widetilde G^{(t)}, \alpha_2) $.
  \STATE{}
  $U_{G(t+1)}= \mathsf{WeightCompute} \parens{\cO_H, \cR^{(t)}, \cG^{(t)}}$.
  \ENDFOR{} %
  \STATE{}
  $C_1\gets 2\parens{1+ \frac{\alpha_1+\alpha_2}{1-\alpha_1}}
  \cdot \exp\paren{{\log r} /T} $.
  \STATE{}
  $\cZ \gets \mathsf{QOverestimate}\parens{\sets{\cG^{(t)}: t \in \sqb{T}},
    \sets{\cR^{(t)} : t \in \sqb{T}},\mathcal{O}_H , C_1,T}$.
  \end{algorithmic}
\end{algorithm}

The algorithm's complexity is described in the following theorem.

\begin{theorem}[Quantum Hyperedge Leverage Score Overestimates]\label{thm:overestimate-quantum}
  There exists a quantum algorithm
  $\mathsf{QHLSO}\parens{\cO_H, T ,\alpha_1,\alpha_2}$ that, given integer
  $T=O\parens{\log r}$, positive real numbers $\alpha_1,\alpha_2 < 1$ and query
  access $\cO_{H}$ to a hypergraph $H =\paren{V,E,w}$ with
  $\abss{E}=m, \abss{V}=n, w \in \mathbb{R}^E_{\geq 0}$, and rank $r$, the
  algorithm runs in time $\widetilde O\parens{r \sqrt{mnr}}$.
  Then, with high probability, it provides query access to a $\nu$-overestimate
  $z$ with $\nu=O\parens{n} $, where each query to $z$ requires
  $\widetilde O\parens{r}$ time.
\end{theorem}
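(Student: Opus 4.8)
The plan is to analyze \cref{alg:QHLSO} by combining a correctness argument (the returned $z$ is a genuine $O(n)$-overestimate) with a complexity argument (each of the $T = O(\log r)$ iterations costs $\widetilde O(r\sqrt{mnr})$, and each query to $z$ costs $\widetilde O(r)$). First I would set up the correctness. We track a sequence of sparse underlying graphs $G^{(t)}$, where $G^{(1)}$ has edge weights initialized by $\mathsf{WeightInitialize}$ so that $c^{(1)}_{e,f} = w_e/(|e|-1)$ (or a similar uniform split over the star $S_e$), ensuring the constraint \cref{eq:sparse-underlying-graph-constraints} holds. In iteration $t$, $\mathsf{GraphSparsify}$ produces an $\alpha_1$-spectral sparsifier $\widetilde G^{(t)}$ of $G^{(t)}$, $\mathsf{EffectiveResistance}$ gives $\alpha_2$-multiplicative approximations $\widetilde R^{(t)}_f$ to the effective resistances in $\widetilde G^{(t)}$ (which, by the spectral-sparsifier guarantee, are within a $(1\pm\alpha_1)$ factor of those in $G^{(t)}$), and $\mathsf{WeightCompute}$ reassigns $c^{(t+1)}_{e,f} \propto c^{(t)}_{e,f}\cdot \widetilde R^{(t)}_f$ normalized so $\sum_{f\in S_e} c^{(t+1)}_{e,f} = w_e$. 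This is exactly the contractive/John-ellipsoid iteration of \citet{cohen2019near,jambulapati2023chaining}: the key invariant is that after $t$ rounds the maximum effective resistance within each hyperedge's induced star contracts geometrically toward the ``balanced'' value, so that after $T = \Theta(\log r)$ rounds we have $R_e \le \exp(\log r / T)\cdot(\text{balanced value})$ up to the $(\alpha_1,\alpha_2)$ slack, which is exactly the constant $C_1$ appearing in the algorithm. Defining $z_e := C_1 \cdot w_e \cdot \max_{f\in S_e}\widetilde R^{(T)}_f$ (which is what $\mathsf{QOverestimate}$ computes on the fly from the stored data), one gets $z_e \ge \ell_e$ for the final underlying graph $G^{(T+1)}$, and the geometric contraction plus the standard fact $\sum_f c_f R_f = n-1$ for a connected graph gives $\|z\|_1 = O(n)$. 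I would cite \citet{jambulapati2023chaining} for the precise potential/contraction bookkeeping rather than reproving it.

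Next I would handle the complexity. Each iteration invokes $\mathsf{GraphSparsify}$ on the underlying graph, which has $N := |F| = O(mr)$ edges on $n$ vertices; by \cref{thm:apers-spectral-sparse} this costs $\widetilde O(\sqrt{Nn}/\alpha_1) = \widetilde O(\sqrt{mnr})$ queries — but each ``query'' to the underlying-graph oracle $U_{G(t)}$ must itself be simulated. A query to $U_{G(t)}$ for $t\ge 2$ recursively unwinds through $\mathsf{WeightCompute}$, which calls $\mathsf{EffectiveResistance}$ on $\widetilde G^{(t-1)}$ (an explicitly stored $\widetilde O(n/\alpha_1^2)$-edge graph, so effective resistances can be batch-approximated via Johnson–Lindenstrauss + Laplacian solves in $\widetilde O(1)$ amortized time per edge, all classical and done once per iteration), plus $O(1)$ calls to $\cO_H$ and to the QRAM storing $\widetilde G^{(t-1)}$. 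The factor $r$ in the final bound comes from $\cO_H^{\textup{vtx}}$ returning all $|e|\le r$ vertices of a hyperedge, so manipulating one hyperedge's star costs $\widetilde O(r)$; hence one simulated query to $U_{G(t)}$ costs $\widetilde O(r)$, giving $\widetilde O(r\sqrt{mnr})$ per iteration and $\widetilde O(r\sqrt{mnr})$ total over $T = O(\log r)$ iterations (the $\log r$ absorbed into $\widetilde O$). Finally, a query to $z_e$ requires reading the $\le r-1$ edges of $S_e$ and looking up their approximate resistances $\widetilde R^{(T)}_f$ in the stored structures $\cG^{(T)},\cR^{(T)}$, taking the max, and multiplying by $C_1 w_e$ — all $\widetilde O(r)$ time.

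The main obstacle I anticipate is the recursive query-simulation bookkeeping: showing that the cost of one query to $U_{G(t)}$ does not blow up with $t$. Naively, $U_{G(t)}$ depends on $\widetilde G^{(t-1)}$, which depends on $U_{G(t-1)}$, and so on down to $U_{G(1)}$; if this recursion were not truncated we would pay a factor exponential in $T$. The resolution — and the point that needs care — is that $\widetilde G^{(t-1)}$ is output \emph{explicitly} (it has only $\widetilde O(n/\alpha_1^2)$ edges, stored in QRAM by $\mathsf{UGraphStore}$), so once iteration $t-1$ finishes, $U_{G(t)}$ queries are answered directly from QRAM and $\cO_H$ in $\widetilde O(r)$ time with no recursion. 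I would make this rigorous by an induction on $t$: assuming $\widetilde G^{(t-1)}$ and $\cR^{(t-1)}$ are available as explicit classical/QRAM structures with $\widetilde O(r)$-time access, the iteration-$t$ work (one $\mathsf{GraphSparsify}$ call with $\widetilde O(r)$-cost queries, one classical effective-resistance computation on the explicit sparsifier, and writing the results to QRAM) runs in $\widetilde O(r\sqrt{mnr})$ and re-establishes the hypothesis for $t+1$. A secondary subtlety is propagating the $\alpha_1$ spectral error and $\alpha_2$ resistance error across $T$ iterations without the constant $C_1 = 2(1 + \frac{\alpha_1+\alpha_2}{1-\alpha_1})\exp(\log r/T)$ degrading; since $C_1$ is applied only once at the end and $\alpha_1,\alpha_2<1$ are fixed constants, the per-iteration errors compose into a fixed multiplicative constant, which is absorbed into $C_1$ — I would verify this matches the one-shot error analysis in \citet{jambulapati2023chaining}.
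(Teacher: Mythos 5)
Your complexity analysis is correct and matches the paper: the paper also observes that the cost of simulating a query to $U_{G(t)}$ does not compound across rounds precisely because each $\widetilde G^{(t-1)}$ is explicitly stored (it has only $\widetilde O(n)$ edges), so $\mathsf{UGraphStore}$ and $\mathsf{EffectiveResistance}$ give $\widetilde O(1)$-time access, and the per-query factor of $\widetilde O(r)$ comes from the star $S_e$. The breakdown $\widetilde O(\sqrt{mnr})$ calls $\times$ $\widetilde O(r)$ per call, over $T = O(\log r)$ rounds, is exactly what the paper proves.

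However, your correctness argument has a genuine gap. You define
\[
z_e := C_1 \cdot w_e \cdot \max_{f\in S_e}\widetilde R^{(T)}_f
\]
and justify it via a claimed geometric contraction of the iterated weights to a ``balanced'' fixed point. Neither of these matches what the algorithm actually computes, and the contraction claim is not established (nor is it what \citet{cohen2019near,jambulapati2023chaining} prove). The paper's $\mathsf{QOverestimate}$ returns
\[
z_e = C_1 \cdot \frac{1}{T}\sum_{t\in\sqb{T}}\sum_{g\in S_e}\widetilde c_{e,g}^{(t)}\widetilde R_g^{(t)},
\]
i.e.\ an \emph{average over all $T$ rounds} of the \emph{sum} of per-edge leverage scores inside the star, not the max of resistances from the final round scaled by $w_e$. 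Correspondingly, the certifying underlying graph is not $G^{(T+1)}$ but the \emph{averaged} graph $\bar c = \frac{1}{T}\sum_t \widetilde c^{(t)}$. The proof of $z_e \geq w_e R_e(\bar c)$ does not rely on convergence at all; it uses (i) the convexity of $\log R_f(c)$ in $c$ (\cref{le:effective-resistance-convexity}), which converts $\log R_{f^\star}(\bar c)$ into an average of $\log R_{f^\star}(c^{(t)})$, (ii) the $\mathsf{WeightCompute}$ rule rewritten as $w_e R_{f^\star}(c^{(t)}) \approx \frac{c^{(t+1)}_{e,f^\star}}{c^{(t)}_{e,f^\star}} \sum_{g\in S_e}\ell^{(t)}_{e,g}$, producing a telescoping sum in $t$, and (iii) the bound $c^{(T+1)}_{e,f^\star}/c^{(1)}_{e,f^\star} \le r$, which is where $\exp(\log r/T)$ enters $C_1$. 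The $\ell_1$-bound $\|z\|_1 = O(n)$ then follows per-round from Foster's theorem applied to $\sum_e\sum_{g\in S_e}\widetilde c^{(t)}_{e,g}\widetilde R^{(t)}_g$, which works for the sum-of-leverage-scores formula but not for your $w_e\max_f \widetilde R^{(T)}_f$ (that quantity has no immediate Foster-type bound). So the shape of the overestimate, the certifying underlying graph, and the key lemma (convexity plus telescoping, not contraction) all need to change for the proof to go through.
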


Due to space constraints, the proof of \cref{thm:overestimate-quantum} is
deferred to \cref{sec:proof-of-overestimate}.

\section{Quantum Hypergraph Sparsification}

Assuming query access to a $\nu$-overestimate, we aim to implement the sampling
scheme in a quantum framework.
By leveraging \cref{cor:quantum-prob-sample}, we can sample a sequence
$\sigma=\parens{\sigma_i:\sigma_i \in E}$ where each element $e$ is sampled with
probability proportional to $z_e$.
By combining the information of each sampled $e$ with the normalization factor
obtained via $\mathsf{SumEstimate}$, we assign appropriate weights to the
sampled edges to construct the final sparsifier.
The complete algorithm is outlined in \cref{alg:hyper-sparse}.

\begin{algorithm}[htb]
  \caption{Quantum Hypergraph Sparsification
  $\mathsf{QHypergraphSparse}\parens{\cO_H, \epsilon}$}\label{alg:hyper-sparse}
  \begin{algorithmic}[1]
  \REQUIRE{} Quantum Oracle $\cO_H$ to a hypergraph $H=(V,E,w )$ with
  $\abss{V}=n,\abss{E}=m$, rank $r$; accuracy $ \epsilon >0$.
  \ENSURE{} An $\epsilon$-spectral sparsifier of $H$, denoted by
  $\widetilde H=\parens{V, \widetilde E, \widetilde w}$,
  $\abs{\widetilde E}=O\parens{n\log n \log r/\epsilon^2}$.
  \STATE{}
  $\widetilde E =\emptyset , \widetilde w =0,
  M \leftarrow \Theta \paren{n \log n\log r /\epsilon^2}$.
  \STATE{} $\cZ \gets \mathsf{QHLSO}(\cO_H, {\log \parens{r-1}}, 0.1, 0.1)$.
  \STATE{} $\sigma\gets \mathsf{MultiSample}(\cZ.\mathsf{Query}, M)$.
  \STATE{} $s \gets \mathsf{SumEstimate}(\cZ.\mathsf{Query}, 0.1)$.
  \FOR{$i=1$ to $M$} \STATE{} $w_{\sigma_i}\gets$ measurement outcome of the
  second register of $\cO_H^{\textup{wt}}\ket{\sigma_{i}}\ket{0}$.
  \STATE{} $z_{\sigma_i}\gets$ measurement outcome of the second register of
  $\cZ.\mathsf{Query}\ket{\sigma_{i}}\ket{0}$.
  \STATE{}
  $\widetilde E \leftarrow \widetilde E \cup \sets{\sigma_i},
  \widetilde w_{\sigma_i} \leftarrow \widetilde w_{\sigma_i} +
  w_{\sigma_i} \cdot s/\parens{M z_{\sigma_i}} $.
  \ENDFOR{}
  \end{algorithmic}
\end{algorithm}

\begin{theorem}[Quantum Hypergraph Sparsification]\label{thm:quantum-hypergraph-sparsification}
  There exists a quantum algorithm that, given query access to a hypergraph
  $H =\parens{V,E,w}$ with $\abs{E}=m, \abs{V}=n, w \in \mathbb{R}^E_{\geq 0}$,
  rank $r$, and $\epsilon> 0$, outputs with high probability the explicit
  description of an $\epsilon$-spectral sparsifier of $H$ with
  $ O\parens{n\log n \log r /\epsilon^2}$ hyperedges, in time
  $\widetilde O\parens{r\sqrt{mnr} +r\sqrt{mn}/\epsilon}$.
\end{theorem}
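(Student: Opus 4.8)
## Proof Proposal

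The plan is to verify that Algorithm~\ref{alg:hyper-sparse} is correct by combining three ingredients: (i) the complexity bounds of its quantum subroutines, (ii) an importance-sampling guarantee showing that sampling $M = \widetilde O(n/\epsilon^2)$ hyperedges proportional to a $\nu$-overestimate $z$ with $\nu = O(n)$ yields a spectral sparsifier, and (iii) a perturbation argument absorbing the multiplicative error in the normalization constant $s$ estimated by $\mathsf{SumEstimate}$. First I would handle the running time, which is the easy part: by \cref{thm:overestimate-quantum}, the call $\mathsf{QHLSO}(\cO_H, \log(r-1), 0.1, 0.1)$ runs in $\widetilde O(r\sqrt{mnr})$ time and afterwards each query to $z$ costs $\widetilde O(r)$; by \cref{cor:quantum-prob-sample}, $\mathsf{MultiSample}(\cZ.\mathsf{Query}, M)$ with $M = \widetilde O(n/\epsilon^2)$ costs $\widetilde O(\sqrt{mM}) = \widetilde O(\sqrt{m}\cdot\sqrt{n}/\epsilon)$ queries to $\cZ.\mathsf{Query}$, hence $\widetilde O(r\sqrt{mn}/\epsilon)$ time; by \cref{thm:quantum-sum-estimate}, $\mathsf{SumEstimate}(\cZ.\mathsf{Query}, 0.1)$ costs $\widetilde O(\sqrt{m})$ queries, i.e.\ $\widetilde O(r\sqrt{m})$ time; and the final loop makes $M$ queries to $\cO_H^{\textup{wt}}$ and $\cZ.\mathsf{Query}$, costing $\widetilde O(rn/\epsilon^2)$, which is dominated by the earlier terms under the standing assumption $\epsilon \ge \sqrt{n/m}$. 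Summing gives $\widetilde O(r\sqrt{mnr} + r\sqrt{mn}/\epsilon)$. The output size is $|\widetilde E| \le M = O(n\log n\log r/\epsilon^2)$ by construction.

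For correctness, I would first condition on the high-probability event that $\mathsf{QHLSO}$ succeeds, so $z$ is a genuine $\nu$-overestimate with $\nu = O(n)$ for some underlying graph $G$, meaning $z_e \ge \ell_e = w_e R_e$ for every $e$ and $\norm{z}_1 \le \nu$. The sampling step then draws $M$ i.i.d.\ hyperedges, each $e$ with probability $p_e = z_e/\norm{z}_1$; if $e$ is drawn we add weight $w_e \cdot s/(M z_e)$ to $\widetilde w_e$ (accumulating over repeats). Were $s$ replaced by the exact value $\norm{z}_1$, this would be precisely the standard leverage-score (Bernoulli-style) importance sampler: the estimator $\widetilde w_e$ is unbiased, $\E[Q_{\widetilde H}(x)] = Q_H(x)$ for every fixed $x$, and concentration over all $x$ simultaneously is exactly what is established by the chaining argument of~\citet{lee2023spectral} (adapted to the hyperedge-overestimate setting as in~\citet{jambulapati2023chaining}). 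The key quantitative input is that the "sampling overhead" $\max_e \ell_e / p_e \le \max_e \ell_e \norm{z}_1 / z_e \le \norm{z}_1 = \nu = O(n)$, so $M = \Theta(n\log n\log r/\epsilon^2) = \Theta(\nu \log n \log r/\epsilon^2)$ samples suffice to obtain an $\epsilon$-spectral sparsifier with high probability. I would cite the relevant theorem from~\citet{lee2023spectral,jambulapati2023chaining} as a black box here rather than reproving the chaining bound.

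The one genuinely new point, and the main obstacle, is that the algorithm does not know $\norm{z}_1$ exactly — it uses the estimate $s$ with $|s - \norm{z}_1| \le 0.1\,\norm{z}_1$ from \cref{thm:quantum-sum-estimate}, and moreover $z_{\sigma_i}$ is read off by a (exact, since $\cZ.\mathsf{Query}$ is a unitary returning a classical floating-point value) measurement. I would argue that replacing $\norm{z}_1$ by $s$ simply rescales every reweighted value by the fixed factor $s/\norm{z}_1 \in [0.9, 1.1]$, so $Q_{\widetilde H}(x) = (s/\norm{z}_1)\cdot Q_{\widetilde H'}(x)$ where $\widetilde H'$ is the (idealized) sparsifier built with the exact normalization. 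Since $\widetilde H'$ satisfies $|Q_{\widetilde H'}(x) - Q_H(x)| \le \epsilon' Q_H(x)$ with $\epsilon' = \epsilon/2$ (choosing constants in $M$ accordingly), we get $|Q_{\widetilde H}(x) - Q_H(x)| \le |s/\norm{z}_1 - 1|\cdot Q_{\widetilde H'}(x) + (s/\norm{z}_1)|Q_{\widetilde H'}(x) - Q_H(x)| \le (0.1(1+\epsilon') + 1.1\epsilon')Q_H(x)$, which is not $\le \epsilon Q_H(x)$ — so a constant-factor error in $s$ is \emph{not} absorbable this way, and this is where the "more rigorous analysis" alluded to in the techniques section is needed. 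The correct fix is to run $\mathsf{SumEstimate}$ with accuracy parameter $\Theta(\epsilon)$ rather than $0.1$ (costing $\widetilde O(r\sqrt{m}/\epsilon)$, still within budget), giving $|s/\norm{z}_1 - 1| \le \epsilon/4$; then the displayed bound becomes $\le (\epsilon/4)(1+\epsilon/2) + (1+\epsilon/4)(\epsilon/2) \le \epsilon$ for $\epsilon$ small. I would therefore either (a) note that the "$0.1$" in line~4 of Algorithm~\ref{alg:hyper-sparse} should read "$\Theta(\epsilon)$", or (b) instead fold the rescaling into the chaining analysis directly by treating the per-edge weight $w_e s/(M z_e)$ as an importance weight with a uniformly bounded multiplicative distortion $s/\norm{z}_1$ and re-running the concentration argument with the slightly perturbed first moment — the cleanest route, and the one I expect matches the paper's intent, is to carry the $(1\pm\epsilon/4)$ factor through the chaining bound so that the final deviation is $(1+\epsilon/4)(1+\epsilon/2) - 1 \le \epsilon$. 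Finally I would union-bound over the $O(1)$ failure events (success of $\mathsf{QHLSO}$, of $\mathsf{MultiSample}$, of $\mathsf{SumEstimate}$, and of the chaining concentration), each of probability $1 - O(1/n)$, to conclude the whole algorithm succeeds with high probability.
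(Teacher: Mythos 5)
Your time accounting and overall route (call $\mathsf{QHLSO}$, then $\mathsf{MultiSample}$, then $\mathsf{SumEstimate}$, then the Lee--Jambulapati chaining bound) match the paper's proof, and your instinct about the normalization constant points at what I read as a genuine gap. The paper does pursue what you call option (b): it works with the biased sampler $\mu_e = z_e/\widetilde{s}$, notes $\E[Q_{H_\mu}(x)] = (\widetilde{s}/s)\,Q_H(x)$ with $\widetilde{s}/s \in [0.9,1.1]$, and pushes the ratio through the chaining, concluding $\tau := \E\sup_{x\in T}\lvert Q_H(L^{+/2}x) - Q_{H_\mu}(L^{+/2}x)\rvert \le 20C_0\sqrt{\widetilde{s}\log n\log r/M}$ and hence $\tau\le\epsilon$ once $M=\Theta(n\log n\log r/\epsilon^2)$. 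But that conclusion cannot hold when $\epsilon<\lvert 1-\widetilde{s}/s\rvert$: fixing any $x_0\in T$ with $Q_H(L^{+/2}x_0)=1$ and applying Jensen for the fixed point $x_0$ gives $\tau \ge \lvert Q_H(L^{+/2}x_0)-\E Q_{H_\mu}(L^{+/2}x_0)\rvert = \lvert 1-\widetilde{s}/s\rvert$, a lower bound independent of $M$. The bias is not a symmetrizable fluctuation. Concretely, the offending step is where the paper rewrites the term $\bigl(\tfrac{s}{\widetilde{s}}-1\bigr)\E_{H_\mu}\max_x\bigl\lvert\tfrac{1}{M}\sum_t \phi^2_{e^{(t)}_\mu}(x)\bigr\rvert$ as $\bigl(\tfrac{s}{\widetilde{s}}-1\bigr)\E_\xi\E_{H_\mu}\max_x\bigl\lvert\tfrac{1}{M}\sum_t\xi_t\phi^2_{e^{(t)}_\mu}(x)\bigr\rvert$: this is not an equality. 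Because the $\phi^2$ are nonnegative, the left side equals $\bigl(\tfrac{s}{\widetilde{s}}-1\bigr)\E\max_x Q_{H_\mu}(L^{+/2}x)$, which stays $\Theta(\lvert 1-\widetilde{s}/s\rvert)$; a correct symmetrization of it would read $\E\max_x\lvert Q_{H_\mu}\rvert \le 2\,\E_\xi\E\max_x\lvert\tfrac{1}{M}\sum_t\xi_t\phi^2_t\rvert + \max_x\lvert\E Q_{H_\mu}\rvert$, whose additive term $\max_x\lvert\E Q_{H_\mu}\rvert = \widetilde{s}/s$ is precisely the bias that the paper's ``$=$'' drops.

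Your fix (a) --- run $\mathsf{SumEstimate}$ to accuracy $\Theta(\epsilon)$ rather than $0.1$ --- is therefore the correct repair. It costs $\widetilde{O}(r\sqrt{m}/\epsilon)$, dominated by the $\widetilde{O}(r\sqrt{mn}/\epsilon)$ sampling cost, so the theorem's stated complexity is unaffected. Once $\lvert 1-\widetilde{s}/s\rvert = O(\epsilon)$, the bias term from the correct symmetrization is $O(\epsilon)$ and can be absorbed exactly as you sketch in (b). Note that your (b) already presupposes the $\Theta(\epsilon)$ accuracy, so (a) and (b) should be read as two halves of one fix, not as alternatives. With that change your proposal is sound and, in my reading, more careful than the paper's own argument on this point.
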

The proof of correctness for the algorithm follows closely the chaining argument
in~\citet{lee2023spectral} and~\citet{jambulapati2023chaining}.
Due to space constraints, we provide the detailed proof in
\cref{sec:proof-of-hypergraph-sparsification}.

\begin{remark}\label{re:main-theorem-1}
  We assume $\epsilon \geq \sqrt{n/m}$, as sparsification is beneficial only
  when the sparsifier contains less $m$ hyperedges.
  It is also generally the case that $m \geq nr$, as the objects being
  sparsified are dense.
  It's worth noting that the time complexity
  $\widetilde O\paren{r \sqrt{mnr}+ r\sqrt{mn}/\epsilon}$ simplifies to
  $\widetilde O \paren{r\sqrt{mn}/\epsilon}$ whenever $\epsilon \geq \sqrt{n/m}$
  and $m \geq nr$.
\end{remark}

\begin{remark}\label{re:main-theorem-2}
  For the hypergraph with constant rank $r$, the above complexity contrasts with
  the classical lower bound of $\Omega(m)$.
  This lower bound arises from the fact that, in the case of graphs ($r = 2$),
  there exists an $\Omega\paren{m}$ classical query lower bound for determining
  whether a graph is connected, which establishes the same lower bound for both
  cut sparsifiers and spectral sparsifiers.
\end{remark}

\section{Applications}\label{sec:applications}

As a direct corollary of \cref{thm:quantum-hypergraph-sparsification}, we can
compute a cut sparsifier for a hypergraph in sublinear time.
\begin{corollary}[Quantum Hypergraph Cut Sparsification]\label{cor:quantum-hypergraph-cut}
  There exists a quantum algorithm that, given query access to a hypergraph
  $H =\parens{V,E,w}$ with $\abs{E}=m, \abs{V}=n,w \in \mathbb{R}^E_{\geq 0}$,
  rank $r$, and $\epsilon> 0$, outputs with high probability the explicit
  description of an $\epsilon$-cut sparsifier of $H$ with
  $ O\parens{n\log n \log r /\epsilon^2}$ hyperedges in time
  $ \widetilde O\parens{r\sqrt{mnr} +r\sqrt{mn}/\epsilon}$.
\end{corollary}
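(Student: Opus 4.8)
\textbf{Proof proposal for \cref{cor:quantum-hypergraph-cut}.}

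The plan is to derive this as an immediate corollary of \cref{thm:quantum-hypergraph-sparsification} by observing that every spectral sparsifier is automatically a cut sparsifier. Concretely, run the algorithm of \cref{thm:quantum-hypergraph-sparsification} with the same accuracy parameter $\epsilon$ on the input hypergraph $H = (V, E, w)$. This produces, with high probability, a reweighted subgraph $\widetilde H = (V, \widetilde E, \widetilde w)$ with $|\widetilde E| = O(n \log n \log r / \epsilon^2)$ hyperedges in time $\widetilde O(r\sqrt{mnr} + r\sqrt{mn}/\epsilon)$, satisfying $\abss{Q_H(x) - Q_{\widetilde H}(x)} \leq \epsilon \cdot Q_H(x)$ for \emph{every} $x \in \mathbb{R}^V$.

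The key step is then simply to restrict the vector $x$ to the indicator vectors $x = 1_S$ of vertex subsets $S \subseteq V$. Recall from the discussion following \cref{def:hypergraph-spectral-sparsifier} that for such $x$ the energy reduces to the cut value, namely $Q_H(1_S) = Q_H(S) = \sum_{e \in \delta_S} w_e$, and likewise $Q_{\widetilde H}(1_S) = Q_{\widetilde H}(S) = \sum_{e \in \widetilde\delta_S} \widetilde w_e$, where $\delta_S$ denotes the hyperedges crossing the cut $(S, V\setminus S)$. Since the spectral guarantee holds for all $x$, in particular it holds for $x = 1_S$, so
\begin{equation*}
  \abss{Q_H(S) - Q_{\widetilde H}(S)} = \abss{Q_H(1_S) - Q_{\widetilde H}(1_S)} \leq \epsilon \cdot Q_H(1_S) = \epsilon \cdot Q_H(S)
\end{equation*}
for every $S \subseteq V$, which is exactly the $\epsilon$-cut sparsifier condition \cref{eq:hypergraph-cut}. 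The bound on the number of hyperedges and the running time are inherited verbatim from \cref{thm:quantum-hypergraph-sparsification}, and the success probability is unchanged.

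There is essentially no obstacle here: the only thing to check is that $\widetilde H$ is a genuine subgraph of $H$ (so that it qualifies as a cut sparsifier in the sense of \cref{eq:hypergraph-cut}), which is guaranteed by the output format of \cref{thm:quantum-hypergraph-sparsification} since $\widetilde E \subseteq E$ and $\widetilde w : \widetilde E \to \mathbb{R}_{\geq 0}$. The slight subtlety worth a sentence in the writeup is that cut sparsification is a strictly weaker notion than spectral sparsification — the quantifier ``for all $x \in \mathbb{R}^V$'' is weakened to ``for all $x \in \{0,1\}^V$'' — so no new algorithmic work or error analysis is needed, and the corollary follows directly.
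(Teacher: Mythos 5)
Your proof is correct and matches the paper's (implicit) argument: the paper simply labels \cref{cor:quantum-hypergraph-cut} as a direct corollary of \cref{thm:quantum-hypergraph-sparsification}, and the intended reasoning is exactly what you wrote, namely that restricting the spectral guarantee to indicator vectors $x = 1_S$ yields the cut guarantee of \cref{eq:hypergraph-cut} with no change to the hyperedge count or running time.
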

Similar to the case for graphs, quantum hypergraph cut sparsification
facilitates faster approximation algorithms for cut problems.
Below, we highlight two such applications.

\paragraph{Mincut} Given a hypergraph $H=\parens{V,E,w}$, the hypergraph mincut
problem asks for a vertex set $S:\emptyset\subsetneq S\subsetneq V$ that
minimizes the energy $Q_H \parens{S}$.
To the best of our knowledge, the fastest algorithm for computing the mincut in
a hypergraph without error runs in $\widetilde O\parens{mnr}$
time~\citep{klimmek1996,mak2000fast}.
By applying \cref{cor:quantum-hypergraph-cut}, we first sparsify the hypergraph
and then apply the mincut algorithm to the cut sparsifier.
This gives a quantum algorithm that, with high probability, outputs a
$\parens{1+\epsilon}$-approximate mincut in time
$\widetilde O \parens{r\sqrt{mn}/\epsilon+rn^2}$, which is sublinear in the
number of hyperedges.

\begin{corollary}[Quantum Hypergraph Mincut Solver]
  There exits a quantum algorithm that, given query access to a hypergraph
  $H =\parens {V,E,w}$ with $\abs{E}=m, \abs{V}=n,w \in \mathbb{R}^E_{\geq 0}$,
  rank $r$, and $\epsilon> 0$, outputs with high probability the
  $\parens{1+\epsilon}$-approximate mincut of $H$ in time
  $ \widetilde O\parens{r\sqrt{mnr} +r\sqrt{mn}/\epsilon+rn^2}$.
\end{corollary}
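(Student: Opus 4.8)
The plan is the standard \emph{sparsify-then-solve} reduction: build a cut sparsifier of $H$ quantumly, then run a classical exact mincut algorithm on it. First I would fix $\epsilon_0 := \epsilon/3$ (assuming $\epsilon \le 1$, which is without loss of generality since otherwise one may simply run the procedure with $\epsilon = 1$) and invoke \cref{cor:quantum-hypergraph-cut} to obtain, with high probability, the explicit description of an $\epsilon_0$-cut sparsifier $\widetilde H = \parens{V,\widetilde E,\widetilde w}$ of $H$ with $\abs{\widetilde E} = O\parens{n\log n\log r/\epsilon^2}$ hyperedges, in time $\widetilde O\parens{r\sqrt{mnr}+r\sqrt{mn}/\epsilon}$. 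Since $\widetilde E \subseteq E$, every hyperedge of $\widetilde H$ is a hyperedge of $H$, so $\widetilde H$ has rank at most $r$. I would then run the deterministic exact hypergraph mincut algorithm of \citet{klimmek1996, mak2000fast} on $\widetilde H$ and output the cut $\widehat S$ it returns, together with the value $Q_H\parens{\widehat S}$ computed directly.

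For correctness I would argue as follows. Let $S^\star$ be a mincut of $H$, set $\lambda := Q_H\parens{S^\star}$, and let $\widehat\lambda := Q_{\widetilde H}\parens{\widehat S} = \min_{\emptyset\subsetneq S\subsetneq V} Q_{\widetilde H}\parens{S}$ be the mincut value of $\widetilde H$. Applying the cut-sparsifier inequality \cref{eq:hypergraph-cut} to the set $S^\star$ yields $\widehat\lambda \le Q_{\widetilde H}\parens{S^\star} \le \parens{1+\epsilon_0}\lambda$, and applying it to $\widehat S$ yields $Q_H\parens{\widehat S} \le \tfrac{1}{1-\epsilon_0}\,Q_{\widetilde H}\parens{\widehat S} = \tfrac{\widehat\lambda}{1-\epsilon_0} \le \tfrac{1+\epsilon_0}{1-\epsilon_0}\,\lambda$. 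A short computation shows $\tfrac{1+\epsilon_0}{1-\epsilon_0}\le 1+\epsilon$ for $\epsilon_0=\epsilon/3$ and $\epsilon\le 1$, so, combined with the trivial bound $Q_H\parens{\widehat S}\ge\lambda$, the cut $\widehat S$ is a $\parens{1+\epsilon}$-approximate mincut of $H$. The only randomized ingredient is the call to \cref{cor:quantum-hypergraph-cut}, hence the whole procedure succeeds with high probability.

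For the running time, the sparsification step costs $\widetilde O\parens{r\sqrt{mnr}+r\sqrt{mn}/\epsilon}$ by \cref{cor:quantum-hypergraph-cut}, while running the $\widetilde O\parens{mnr}$-time classical mincut algorithm on $\widetilde H$ — which has $\widetilde O\parens{n/\epsilon^2}$ hyperedges, $n$ vertices, and rank at most $r$ — contributes the additive term $\widetilde O\parens{rn^2}$; summing these gives the claimed $\widetilde O\parens{r\sqrt{mnr}+r\sqrt{mn}/\epsilon+rn^2}$. The reduction is otherwise routine; the only point I expect to need care is that the classical algorithm returns a cut that is optimal for $\widetilde H$ rather than for $H$, so one must use the \emph{two-sided} cut-sparsifier guarantee (as above) to transfer a multiplicative approximation back to $H$, together with the observation that passing to a reweighted subgraph cannot increase the rank — both handled by the computation in the previous paragraph. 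I do not anticipate any substantial obstacle beyond this bookkeeping.
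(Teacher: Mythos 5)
Your proposal is correct and follows exactly the paper's sparsify-then-solve route: construct an $\epsilon_0$-cut sparsifier via \cref{cor:quantum-hypergraph-cut}, run the classical exact hypergraph mincut algorithm on it, and transfer the approximation guarantee back to $H$ using the two-sided cut-sparsifier inequality, which is precisely the paper's (one-paragraph) argument fleshed out. One bookkeeping point you inherit from the paper's own statement: substituting the sparsifier size $\widetilde O(n/\epsilon^2)$ into the $\widetilde O(mnr)$ mincut running time gives $\widetilde O(rn^2/\epsilon^2)$ rather than $\widetilde O(rn^2)$, since $1/\epsilon^2$ is not polylogarithmic under the paper's $\widetilde O$ convention, so the additive term should really carry an $\epsilon^{-2}$ factor (or one must assume $\epsilon$ is a constant).
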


\paragraph{$s$-$t$ mincut}Given a hypergraph $H=\parens{V,E,w}$ and two vertices
$s,t\in V$, the $s$-$t$ mincut problem seeks a vertex set $S\subseteq V$ with
$\abs{S\cap\sets{s,t}}=1$(i.e., either $s\in S$ or $t\in S$) that minimizes the
energy $Q_H\parens{S}$.
The standard approach for computing an $s$-$t$ mincut in a hypergraph is
computing the $s$-$t$ maximum flow in an associated digraph with $O\parens{n+m}$
vertices and $O\parens{mr}$ edges~\citep{Lawler73}.
And an $s$-$t$ maximum flow in such graph can be found in
$\widetilde O\parens{mr \sqrt{m+n}}$ time~\citep{LeeS14}.
By combining \cref{cor:quantum-hypergraph-cut} with the aforementioned approach,
we can compute a $\parens{1+\epsilon}$-approximate $s$-$t$ mincut in time
$\widetilde O\parens{r \sqrt{mn}/\epsilon}$ whenever $m=\Omega\parens{n^2}$,
which is sublinear in number of hyperedges.

\begin{corollary}[Quantum Hypergraph $s$-$t$ Mincut Solver]
  There exits a quantum algorithm that, given query access to a hypergraph
  $H =\parens {V,E,w}$ with $\abs{E}=m, \abs{V}=n,w \in \mathbb{R}^E_{\geq 0}$,
  rank $r$, two vertices $s,t\in V$, and $\epsilon> 0$, outputs with high
  probability the $\parens{1+\epsilon}$-approximate $s$-$t$ mincut of $H$ in
  time $ \widetilde O\parens{r\sqrt{mnr} +r\sqrt{mn}/\epsilon+rn^{3/2}}$.
\end{corollary}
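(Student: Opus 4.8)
The plan is to reduce, via \cref{cor:quantum-hypergraph-cut}, to an exact classical $s$-$t$ mincut computation on a (much smaller) cut sparsifier. Concretely, run the quantum cut-sparsification routine of \cref{cor:quantum-hypergraph-cut} on $H$ with accuracy $\epsilon' = \epsilon/4$ (valid since the standing assumption $\epsilon \gtrsim \sqrt{n/m}$ gives $\epsilon' \geq \sqrt{n/m}$), obtaining with high probability an \emph{explicit} $\epsilon'$-cut sparsifier $\widetilde H = (V, \widetilde E, \widetilde w)$ with $m' := |\widetilde E| = \widetilde O(n/\epsilon^2)$ hyperedges, each of size at most $r$ (since $\widetilde H$ is a reweighted subgraph of $H$), in time $\widetilde O(r\sqrt{mnr} + r\sqrt{mn}/\epsilon)$. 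Then compute an exact $s$-$t$ mincut $\widetilde S$ of $\widetilde H$ classically and output the vertex set $\widetilde S$.

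Correctness follows from the cut-preservation guarantee \cref{eq:hypergraph-cut}. Let $S^{\ast}$ be a true $s$-$t$ mincut of $H$ among all $S$ with $\abs{S \cap \sets{s,t}} = 1$, and let $\widetilde S$ be the cut returned by the classical solver on $\widetilde H$. Since $(1-\epsilon')Q_H(S) \le Q_{\widetilde H}(S) \le (1+\epsilon')Q_H(S)$ for every such $S$,
\[
Q_H(\widetilde S) \le \tfrac{1}{1-\epsilon'} Q_{\widetilde H}(\widetilde S) \le \tfrac{1}{1-\epsilon'} Q_{\widetilde H}(S^{\ast}) \le \tfrac{1+\epsilon'}{1-\epsilon'} Q_H(S^{\ast}) \le (1+\epsilon)\, Q_H(S^{\ast}),
\]
using $\epsilon' = \epsilon/4 \le 1/2$ in the last inequality; hence $\widetilde S$ is a $(1+\epsilon)$-approximate $s$-$t$ mincut of $H$. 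The overall procedure succeeds with high probability because the only randomized step is the sparsification call.

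For the runtime of the classical step, I would invoke the standard reduction of~\citet{Lawler73}: the $s$-$t$ mincut value of $\widetilde H$ equals the $s$-$t$ maximum-flow value in an auxiliary digraph $D$ with $O(n + m')$ vertices and $O(m' r)$ arcs, and a minimum $s$-$t$ cut of $D$ recovers the vertex set $\widetilde S$. Running the maximum-flow algorithm of~\citet{LeeS14}, which takes $\widetilde O(\abs{\text{arcs}}\sqrt{\abs{\text{vertices}}})$ time, on $D$ costs $\widetilde O\bigl(m' r \sqrt{m'+n}\bigr)$; substituting $m' = \widetilde O(n/\epsilon^2)$ this is $\widetilde O(r n^{3/2}/\epsilon^{3}) = \widetilde O(r n^{3/2})$ (suppressing the $\mathrm{poly}(1/\epsilon)$ factor, as in the statement). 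Summing the two phases gives total time $\widetilde O(r\sqrt{mnr} + r\sqrt{mn}/\epsilon + r n^{3/2})$; note that when $m = \Omega(n^2)$ the last term is dominated by $\widetilde O(r\sqrt{mn}/\epsilon)$, recovering the bound $\widetilde O(r\sqrt{mn}/\epsilon)$ sublinear in $m$ mentioned in the text.

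This is an approximation-preserving reduction with no deep obstacle; the only points needing care are the error composition — the factor $\tfrac{1+\epsilon'}{1-\epsilon'}$ loss, absorbed by the $\epsilon' = \epsilon/4$ rescaling — and the accounting of the Lawler digraph's size, in particular the rank factor $r$ multiplying $m'$ in the arc count, which is exactly what produces the $r n^{3/2}$ term. A minor additional check is that \cref{cor:quantum-hypergraph-cut} returns an explicit description of $\widetilde H$, so the subsequent computation can proceed purely classically on $\widetilde H$ without further quantum queries.
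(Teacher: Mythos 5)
Your proposal is correct and follows exactly the route the paper itself sketches in \cref{sec:applications}: quantum cut sparsification via \cref{cor:quantum-hypergraph-cut}, then Lawler's flow reduction and the max-flow algorithm of \citet{LeeS14} run classically on the $\widetilde O(n/\epsilon^2)$-hyperedge sparsifier; your added details (the $\epsilon' = \epsilon/4$ error composition and the explicit arc count of the auxiliary digraph) are the right ones. The only caveat — that the classical phase is really $\widetilde O(rn^{3/2}/\epsilon^3)$ and the stated $rn^{3/2}$ term hides a polynomial (not polylogarithmic) $1/\epsilon$ dependence — is inherited from the corollary as stated in the paper, and you flag it appropriately.
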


\section{Conclusion and Future works}
In this paper, we present a quantum algorithm for hypergraph sparsification with
time complexity $\widetilde O\parens{r\sqrt{mn}/\epsilon+r\sqrt{mnr}}$, where
$m,n,r,\epsilon$ are the number of hyperedges, the number of vertices, rank of
hypergraph and precision of sparsifier, respectively.

Our paper naturally raises several open questions for future work. For
instance:

\begin{itemize}
  \item Quantum graph sparsification has directly led to the development of
    numerous quantum algorithms, including max-cut for
    graphs~\citep{apers2022quantum}, graph minimum cut
    finding~\citep{apers2020quantum}, graph minimum $s$-$t$ cut
    finding~\citep{apers2021sublinear}, motif
    clustering~\citep{cade2023quantum}.
    A natural question arises: for more related problems in hypergraphs,
    such as hypergraph-$k$-cut, minmax-hypergraph-$k$-partition, can we
    design faster quantum algorithms compared to classical ones?

  \item We conjecture that the runtime of our quantum algorithm is tight up to
    polylogarithmic factors when $\epsilon \leq 1/\sqrt{r}$.
    Can we establish a quantum lower bound of
    $\Omega \parens{r\sqrt{mn}/\epsilon}$ for hypergraph sparsification?
    Alternatively, can we improve the time complexity for quantum hypergraph
    sparsification, or further enhance the runtime for hypergraph cut
    sparsification?
    Furthermore, state-of-the-art hypergraph cut sparsification achieves a
    size of $O\parens{n\log n /\epsilon^2}$ (without the $\log r$ factor) in
    $\widetilde O\parens {mn+n^{10}/\epsilon^7}$
    time~\citep{chen2020near}---can we design a faster quantum algorithm
    that matches this size?

  \item Hypergraph sparsification has been extended to various frameworks,
    including online hypergraph sparsification~\citep{soma2024online},
    directed hypergraph sparsification~\citep{oko2023nearly}, generalized
    linear models sparsification~\citep{jambulapati2024sparsifying}, and
    quotient sparsification for submodular
    functions~\citep{quanrud2024quotient}.
    A natural question is can we develop specialized quantum
    algorithms for these settings?
\end{itemize}



\section*{Impact Statement}
This paper presents work whose goal is to advance the field of 
Machine Learning. There are many potential societal consequences 
of our work, none of which we feel must be specifically highlighted here.

\section*{Acknowledgements}
The work was supported by National Key Research and Development
Program of China (Grant No.\ 2023YFA1009403), National Natural Science
Foundation of China (Grant No.\ 12347104), Beijing Natural Science Foundation
(Grant No.\ Z220002), and Tsinghua University.


\bibliography{main}

\begin{thebibliography}{54}
\providecommand{\natexlab}[1]{#1}
\providecommand{\url}[1]{\texttt{#1}}
\expandafter\ifx\csname urlstyle\endcsname\relax
  \providecommand{\doi}[1]{doi: #1}\else
  \providecommand{\doi}{doi: \begingroup \urlstyle{rm}\Url}\fi

\bibitem[Agarwal et~al.(2022)Agarwal, Khanna, Li, and
  Patil]{agarwal2022sublinear}
Agarwal, A., Khanna, S., Li, H., and Patil, P.
\newblock Sublinear algorithms for hierarchical clustering.
\newblock \emph{Advances in Neural Information Processing Systems},
  35:\penalty0 3417--3430, 2022.

\bibitem[Akbudak et~al.(2013)Akbudak, Kayaaslan, and
  Aykanat]{akbudak2013hypergraph}
Akbudak, K., Kayaaslan, E., and Aykanat, C.
\newblock Hypergraph partitioning based models and methods for exploiting cache
  locality in sparse matrix-vector multiplication.
\newblock \emph{SIAM Journal on Scientific Computing}, 35\penalty0
  (3):\penalty0 C237--C262, 2013.
\newblock \doi{10.1137/100813956}.

\bibitem[Alpert \& Kahng(1995)Alpert and Kahng]{alpert1995recent}
Alpert, C.~J. and Kahng, A.~B.
\newblock Recent directions in netlist partitioning: a survey.
\newblock \emph{Integration}, 19\penalty0 (1):\penalty0 1--81, 1995.
\newblock ISSN 0167-9260.
\newblock \doi{https://doi.org/10.1016/0167-9260(95)00008-4}.

\bibitem[Apers \& de~Wolf(2020)Apers and de~Wolf]{apers2022quantum}
Apers, S. and de~Wolf, R.
\newblock Quantum speedup for graph sparsification, cut approximation and
  laplacian solving.
\newblock In \emph{2020 IEEE 61st Annual Symposium on Foundations of Computer
  Science (FOCS)}, pp.\  637--648, 2020.
\newblock \doi{10.1109/FOCS46700.2020.00065}.

\bibitem[Apers \& Gribling(2024)Apers and Gribling]{apers2023quantum}
Apers, S. and Gribling, S.
\newblock Quantum speedups for linear programming via interior point methods,
  2024.
\newblock URL \url{https://arxiv.org/abs/2311.03215}.

\bibitem[Apers \& Lee(2021)Apers and Lee]{apers2020quantum}
Apers, S. and Lee, T.
\newblock Quantum complexity of minimum cut.
\newblock In \emph{Proceedings of the 36th Computational Complexity
  Conference}, 2021.
\newblock \doi{10.4230/LIPIcs.CCC.2021.28}.

\bibitem[Apers et~al.(2024)Apers, Auza, and Lee]{apers2021sublinear}
Apers, S., Auza, A., and Lee, T.
\newblock A sublinear query quantum algorithm for s-t minimum cut on dense
  simple graphs, 2024.
\newblock URL \url{https://arxiv.org/abs/2110.15587}.

\bibitem[Ballard et~al.(2016)Ballard, Druinsky, Knight, and
  Schwartz]{ballard2016hypergraph}
Ballard, G., Druinsky, A., Knight, N., and Schwartz, O.
\newblock Hypergraph partitioning for sparse matrix-matrix multiplication.
\newblock \emph{ACM Trans. Parallel Comput.}, 3\penalty0 (3), December 2016.
\newblock ISSN 2329-4949.
\newblock \doi{10.1145/3015144}.

\bibitem[Bansal et~al.(2019)Bansal, Svensson, and Trevisan]{bansal2019new}
Bansal, N., Svensson, O., and Trevisan, L.
\newblock New notions and constructions of sparsification for graphs and
  hypergraphs.
\newblock In \emph{2019 IEEE 60th Annual Symposium on Foundations of Computer
  Science}, pp.\  910--928, 2019.
\newblock \doi{10.1109/FOCS.2019.00059}.

\bibitem[Batson et~al.(2012)Batson, Spielman, and Srivastava]{batson2014twice}
Batson, J., Spielman, D.~A., and Srivastava, N.
\newblock Twice-ramanujan sparsifiers.
\newblock \emph{SIAM Journal on Computing}, 41\penalty0 (6):\penalty0
  1704--1721, 2012.
\newblock \doi{10.1137/090772873}.

\bibitem[Braverman et~al.(2021)Braverman, Hassidim, Matias, Schain, Silwal, and
  Zhou]{braverman2021adversarial}
Braverman, V., Hassidim, A., Matias, Y., Schain, M., Silwal, S., and Zhou, S.
\newblock Adversarial robustness of streaming algorithms through importance
  sampling.
\newblock In \emph{Advances in Neural Information Processing Systems},
  volume~34, pp.\  3544--3557, 2021.

\bibitem[Cade et~al.(2023)Cade, Labib, and Niesen]{cade2023quantum}
Cade, C., Labib, F., and Niesen, I.
\newblock Quantum {M}otif {C}lustering.
\newblock \emph{{Quantum}}, 7:\penalty0 1046, 2023.
\newblock ISSN 2521-327X.
\newblock \doi{10.22331/q-2023-07-03-1046}.

\bibitem[Chan et~al.(2018)Chan, Louis, Tang, and Zhang]{hubert2018spectral}
Chan, T.-H.~H., Louis, A., Tang, Z.~G., and Zhang, C.
\newblock Spectral properties of hypergraph laplacian and approximation
  algorithms.
\newblock \emph{Journal of the ACM}, 65\penalty0 (3), 2018.
\newblock \doi{10.1145/3178123}.

\bibitem[Chekuri \& Xu(2018)Chekuri and Xu]{chekuri2018minimum}
Chekuri, C. and Xu, C.
\newblock Minimum cuts and sparsification in hypergraphs.
\newblock \emph{SIAM Journal on Computing}, 47\penalty0 (6):\penalty0
  2118--2156, 2018.
\newblock \doi{10.1137/18M1163865}.

\bibitem[Chen et~al.(2022)Chen, Kyng, Liu, Peng, Gutenberg, and
  Sachdeva]{chen2022maximum}
Chen, L., Kyng, R., Liu, Y.~P., Peng, R., Gutenberg, M.~P., and Sachdeva, S.
\newblock Maximum flow and minimum-cost flow in almost-linear time.
\newblock In \emph{Proceedings of the 2022 IEEE 63rd Annual Symposium on
  Foundations of Computer Science}, pp.\  612--623. IEEE, 2022.

\bibitem[Chen et~al.(2020)Chen, Khanna, and Nagda]{chen2020near}
Chen, Y., Khanna, S., and Nagda, A.
\newblock Near-linear size hypergraph cut sparsifiers.
\newblock In \emph{2020 IEEE 61st Annual Symposium on Foundations of Computer
  Science}, pp.\  61--72, 2020.
\newblock \doi{10.1109/FOCS46700.2020.00015}.

\bibitem[Cohen et~al.(2014)Cohen, Kyng, Miller, Pachocki, Peng, Rao, and
  Xu]{cohen2014solving}
Cohen, M.~B., Kyng, R., Miller, G.~L., Pachocki, J.~W., Peng, R., Rao, A.~B.,
  and Xu, S.~C.
\newblock Solving {SDD} linear systems in nearly $m \log^{1/2} n$ time.
\newblock In \emph{Proceedings of the forty-sixth annual ACM symposium on
  Theory of computing}, pp.\  343--352, 2014.

\bibitem[Cohen et~al.(2016)Cohen, Kelner, Peebles, Peng, Sidford, and
  Vladu]{cohen2016faster}
Cohen, M.~B., Kelner, J., Peebles, J., Peng, R., Sidford, A., and Vladu, A.
\newblock Faster algorithms for computing the stationary distribution,
  simulating random walks, and more.
\newblock In \emph{Proceedings of the 2016 IEEE 57th annual symposium on
  foundations of computer science}, pp.\  583--592. IEEE, 2016.

\bibitem[Cohen et~al.(2019)Cohen, Cousins, Lee, and Yang]{cohen2019near}
Cohen, M.~B., Cousins, B., Lee, Y.~T., and Yang, X.
\newblock A near-optimal algorithm for approximating the {J}ohn ellipsoid.
\newblock In Beygelzimer, A. and Hsu, D. (eds.), \emph{Proceedings of the
  Thirty-Second Conference on Learning Theory}, volume~99 of \emph{Proceedings
  of Machine Learning Research}, pp.\  849--873. PMLR, 25--28 Jun 2019.
\newblock URL \url{https://proceedings.mlr.press/v99/cohen19a.html}.

\bibitem[Giovannetti et~al.(2008)Giovannetti, Lloyd, and
  Maccone]{giovannetti2008quantum}
Giovannetti, V., Lloyd, S., and Maccone, L.
\newblock Quantum random access memory.
\newblock \emph{Phys. Rev. Lett.}, 100:\penalty0 160501, Apr 2008.
\newblock \doi{10.1103/PhysRevLett.100.160501}.

\bibitem[Hamoudi(2022)]{hamoudi2022preparing}
Hamoudi, Y.
\newblock Preparing many copies of a quantum state in the black-box model.
\newblock \emph{Phys. Rev. A}, 105:\penalty0 062440, Jun 2022.
\newblock \doi{10.1103/PhysRevA.105.062440}.

\bibitem[Hein et~al.(2013)Hein, Setzer, Jost, and Rangapuram]{hein2013total}
Hein, M., Setzer, S., Jost, L., and Rangapuram, S.~S.
\newblock The total variation on hypergraphs - learning on hypergraphs
  revisited.
\newblock In Burges, C., Bottou, L., Welling, M., Ghahramani, Z., and
  Weinberger, K. (eds.), \emph{Advances in Neural Information Processing
  Systems}, volume~26. Curran Associates, Inc., 2013.
\newblock URL
  \url{https://proceedings.neurips.cc/paper_files/paper/2013/file/8a3363abe792db2d8761d6403605aeb7-Paper.pdf}.

\bibitem[Jambulapati et~al.(2023)Jambulapati, Liu, and
  Sidford]{jambulapati2023chaining}
Jambulapati, A., Liu, Y.~P., and Sidford, A.
\newblock Chaining, group leverage score overestimates, and fast spectral
  hypergraph sparsification.
\newblock In \emph{Proceedings of the 55th Annual ACM Symposium on Theory of
  Computing}, pp.\  196–206, New York, NY, USA, 2023.
\newblock \doi{10.1145/3564246.3585136}.

\bibitem[Jambulapati et~al.(2024)Jambulapati, Lee, Liu, and
  Sidford]{jambulapati2024sparsifying}
Jambulapati, A., Lee, J.~R., Liu, Y.~P., and Sidford, A.
\newblock Sparsifying generalized linear models.
\newblock In \emph{Proceedings of the 56th Annual ACM Symposium on Theory of
  Computing}, STOC 2024, pp.\  1665–1675, New York, NY, USA, 2024.
  Association for Computing Machinery.
\newblock ISBN 9798400703836.
\newblock \doi{10.1145/3618260.3649684}.

\bibitem[Kapralov et~al.(2021)Kapralov, Krauthgamer, Tardos, and
  Yoshida]{kapralov2021towards}
Kapralov, M., Krauthgamer, R., Tardos, J., and Yoshida, Y.
\newblock Towards tight bounds for spectral sparsification of hypergraphs.
\newblock In \emph{Proceedings of the 53rd Annual ACM SIGACT Symposium on
  Theory of Computing}, pp.\  598–611, 2021.
\newblock \doi{10.1145/3406325.3451061}.

\bibitem[Kapralov et~al.(2022)Kapralov, Krauthgamer, Tardos, and
  Yoshida]{kapralov2022spectral}
Kapralov, M., Krauthgamer, R., Tardos, J., and Yoshida, Y.
\newblock Spectral hypergraph sparsifiers of nearly linear size.
\newblock In \emph{2021 IEEE 62nd Annual Symposium on Foundations of Computer
  Science}, pp.\  1159--1170, 2022.
\newblock \doi{10.1109/FOCS52979.2021.00114}.

\bibitem[Karypis et~al.(1999)Karypis, Aggarwal, Kumar, and
  Shekhar]{karypis1997multilevel}
Karypis, G., Aggarwal, R., Kumar, V., and Shekhar, S.
\newblock Multilevel hypergraph partitioning: applications in vlsi domain.
\newblock \emph{IEEE Transactions on Very Large Scale Integration (VLSI)
  Systems}, 7\penalty0 (1):\penalty0 69--79, 1999.
\newblock \doi{10.1109/92.748202}.

\bibitem[Klimmek \& Wagner(1996)Klimmek and Wagner]{klimmek1996}
Klimmek, R. and Wagner, F.
\newblock A simple hypergraph min cut algorithm.
\newblock \emph{Technical Report B}, 02, 1996.
\newblock URL
  \url{http://edocs.fu-berlin.de/docs/servlets/MCRFileNodeServlet/FUDOCS_derivate_000000000297/1996_02.pdf}.

\bibitem[Kogan \& Krauthgamer(2015)Kogan and Krauthgamer]{kogan2015sketching}
Kogan, D. and Krauthgamer, R.
\newblock Sketching cuts in graphs and hypergraphs.
\newblock In \emph{Proceedings of the 2015 Conference on Innovations in
  Theoretical Computer Science}, pp.\  367–376, 2015.
\newblock \doi{10.1145/2688073.2688093}.

\bibitem[Laenen \& Sun(2020)Laenen and Sun]{laenen2020higher}
Laenen, S. and Sun, H.
\newblock Higher-order spectral clustering of directed graphs.
\newblock \emph{Advances in neural information processing systems},
  33:\penalty0 941--951, 2020.

\bibitem[Lawler(1973)]{Lawler73}
Lawler, E.~L.
\newblock Cutsets and partitions of hypergraphs.
\newblock \emph{Networks}, 3\penalty0 (3):\penalty0 275--285, 1973.
\newblock \doi{https://doi.org/10.1002/net.3230030306}.

\bibitem[Lee(2023)]{lee2023spectral}
Lee, J.~R.
\newblock Spectral hypergraph sparsification via chaining.
\newblock In \emph{Proceedings of the 55th Annual ACM Symposium on Theory of
  Computing}, pp.\  207–218, 2023.
\newblock \doi{10.1145/3564246.3585165}.

\bibitem[Lee \& Sidford(2014)Lee and Sidford]{LeeS14}
Lee, Y.~T. and Sidford, A.
\newblock Path finding methods for linear programming: Solving linear programs
  in Õ($\sqrt{rank}$) iterations and faster algorithms for maximum flow.
\newblock In \emph{2014 IEEE 55th Annual Symposium on Foundations of Computer
  Science}, pp.\  424--433, 2014.
\newblock \doi{10.1109/FOCS.2014.52}.

\bibitem[Lee \& Sun(2018)Lee and Sun]{lee2018constructing}
Lee, Y.~T. and Sun, H.
\newblock Constructing linear-sized spectral sparsification in almost-linear
  time.
\newblock \emph{SIAM Journal on Computing}, 47\penalty0 (6):\penalty0
  2315--2336, 2018.
\newblock \doi{10.1137/16M1061850}.

\bibitem[Li \& Milenkovic(2017)Li and Milenkovic]{li2017inhomogeneous}
Li, P. and Milenkovic, O.
\newblock Inhomogeneous hypergraph clustering with applications.
\newblock In \emph{Proceedings of the 31st International Conference on Neural
  Information Processing Systems}, NIPS'17, pp.\  2305–2315, Red Hook, NY,
  USA, 2017. Curran Associates Inc.
\newblock ISBN 9781510860964.

\bibitem[Li et~al.(2020)Li, He, and Milenkovic]{li2020quadratic}
Li, P., He, N., and Milenkovic, O.
\newblock Quadratic decomposable submodular function minimization: Theory and
  practice.
\newblock \emph{Journal of Machine Learning Research}, 21\penalty0
  (106):\penalty0 1--49, 2020.
\newblock URL \url{http://jmlr.org/papers/v21/18-790.html}.

\bibitem[Li et~al.(2019)Li, Chakrabarti, and Wu]{li2019sublinear}
Li, T., Chakrabarti, S., and Wu, X.
\newblock Sublinear quantum algorithms for training linear and kernel-based
  classifiers.
\newblock In Chaudhuri, K. and Salakhutdinov, R. (eds.), \emph{Proceedings of
  the 36th International Conference on Machine Learning}, volume~97 of
  \emph{Proceedings of Machine Learning Research}, pp.\  3815--3824. PMLR,
  09--15 Jun 2019.
\newblock URL \url{https://proceedings.mlr.press/v97/li19b.html}.

\bibitem[Li et~al.(2024)Li, Song, and Yu]{li2024quantum}
Li, X., Song, Z., and Yu, J.
\newblock Quantum speedups for approximating the {J}ohn ellipsoid, 2024.
\newblock URL \url{https://arxiv.org/abs/2408.14018}.

\bibitem[Liu et~al.(2021)Liu, Veldt, Song, Li, and Gleich]{liu2021strongly}
Liu, M., Veldt, N., Song, H., Li, P., and Gleich, D.~F.
\newblock Strongly local hypergraph diffusions for clustering and
  semi-supervised learning.
\newblock In \emph{Proceedings of the Web Conference 2021}, WWW '21, pp.\
  2092–2103, New York, NY, USA, 2021. Association for Computing Machinery.
\newblock ISBN 9781450383127.
\newblock \doi{10.1145/3442381.3449887}.

\bibitem[Mak \& Wong(2000)Mak and Wong]{mak2000fast}
Mak, W.-K. and Wong, D.
\newblock A fast hypergraph min-cut algorithm for circuit partitioning.
\newblock \emph{Integration}, 30\penalty0 (1):\penalty0 1--11, 2000.
\newblock ISSN 0167-9260.
\newblock \doi{10.1016/S0167-9260(00)00008-0}.

\bibitem[Oko et~al.(2023)Oko, Sakaue, and Tanigawa]{oko2023nearly}
Oko, K., Sakaue, S., and Tanigawa, S.-i.
\newblock {Nearly Tight Spectral Sparsification of Directed Hypergraphs}.
\newblock In \emph{50th International Colloquium on Automata, Languages, and
  Programming}, volume 261, pp.\  94:1--94:19, 2023.
\newblock \doi{10.4230/LIPIcs.ICALP.2023.94}.

\bibitem[Peng et~al.(2015)Peng, Sun, and Zanetti]{peng2015partitioning}
Peng, R., Sun, H., and Zanetti, L.
\newblock Partitioning well-clustered graphs: Spectral clustering works!
\newblock In \emph{Conference on learning theory}, pp.\  1423--1455. PMLR,
  2015.

\bibitem[Quanrud(2024)]{quanrud2024quotient}
Quanrud, K.
\newblock Quotient sparsification for submodular functions.
\newblock In \emph{Proceedings of the 2024 Annual ACM-SIAM Symposium on
  Discrete Algorithms}, pp.\  5209--5248, 2024.
\newblock \doi{10.1137/1.9781611977912.187}.

\bibitem[Simonovsky \& Komodakis(2017)Simonovsky and
  Komodakis]{simonovsky2017dynamic}
Simonovsky, M. and Komodakis, N.
\newblock Dynamic edge-conditioned filters in convolutional neural networks on
  graphs.
\newblock In \emph{Proceedings of the IEEE conference on computer vision and
  pattern recognition}, pp.\  3693--3702, 2017.

\bibitem[Soma \& Yoshida(2019)Soma and Yoshida]{soma2019spectral}
Soma, T. and Yoshida, Y.
\newblock Spectral sparsification of hypergraphs.
\newblock In \emph{Proceedings of the 2019 Annual ACM-SIAM Symposium on
  Discrete Algorithms}, pp.\  2570--2581, 2019.
\newblock \doi{10.1137/1.9781611975482.159}.

\bibitem[Soma et~al.(2024)Soma, Tung, and Yoshida]{soma2024online}
Soma, T., Tung, K.~C., and Yoshida, Y.
\newblock Online algorithms for spectral hypergraph sparsification.
\newblock In Vygen, J. and Byrka, J. (eds.), \emph{Integer Programming and
  Combinatorial Optimization}, pp.\  405--417, Cham, 2024. Springer Nature
  Switzerland.
\newblock ISBN 978-3-031-59835-7.

\bibitem[Song et~al.(2023)Song, Yin, and Zhang]{song2023revisiting}
Song, Z., Yin, J., and Zhang, R.
\newblock Revisiting quantum algorithms for linear regressions: Quadratic
  speedups without data-dependent parameters, 2023.
\newblock URL \url{https://arxiv.org/abs/2311.14823}.

\bibitem[Spielman \& Srivastava(2011)Spielman and
  Srivastava]{spielman2011graph}
Spielman, D.~A. and Srivastava, N.
\newblock Graph sparsification by effective resistances.
\newblock \emph{SIAM Journal on Computing}, 40\penalty0 (6):\penalty0
  1913--1926, 2011.
\newblock \doi{10.1137/080734029}.

\bibitem[Takai et~al.(2020)Takai, Miyauchi, Ikeda, and
  Yoshida]{takai2020hypergraph}
Takai, Y., Miyauchi, A., Ikeda, M., and Yoshida, Y.
\newblock Hypergraph clustering based on pagerank.
\newblock In \emph{Proceedings of the 26th ACM SIGKDD International Conference
  on Knowledge Discovery \& Data Mining}, pp.\  1970–1978, 2020.
\newblock \doi{10.1145/3394486.3403248}.

\bibitem[Yadati et~al.(2019)Yadati, Nimishakavi, Yadav, Nitin, Louis, and
  Talukdar]{yadati2019hypergcn}
Yadati, N., Nimishakavi, M., Yadav, P., Nitin, V., Louis, A., and Talukdar, P.
\newblock Hypergcn: A new method for training graph convolutional networks on
  hypergraphs.
\newblock In \emph{Advances in Neural Information Processing Systems},
  volume~32, 2019.
\newblock URL
  \url{https://proceedings.neurips.cc/paper_files/paper/2019/file/1efa39bcaec6f3900149160693694536-Paper.pdf}.

\bibitem[Yadati et~al.(2020)Yadati, Nitin, Nimishakavi, Yadav, Louis, and
  Talukdar]{yadati2020nhp}
Yadati, N., Nitin, V., Nimishakavi, M., Yadav, P., Louis, A., and Talukdar, P.
\newblock Nhp: Neural hypergraph link prediction.
\newblock In \emph{Proceedings of the 29th ACM International Conference on
  Information \& Knowledge Management}, CIKM '20, pp.\  1705–1714, 2020.
\newblock \doi{10.1145/3340531.3411870}.

\bibitem[Yoshida(2019)]{yoshida2019cheeger}
Yoshida, Y.
\newblock Cheeger inequalities for submodular transformations.
\newblock In \emph{Proceedings of the Thirtieth Annual ACM-SIAM Symposium on
  Discrete Algorithms}, pp.\  2582–2601, 2019.

\bibitem[Zhang et~al.(2020)Zhang, Hu, Tang, and Chan]{zhang2018re}
Zhang, C., Hu, S., Tang, Z.~G., and Chan, T.-H.~H.
\newblock Re-revisiting learning on hypergraphs: Confidence interval,
  subgradient method, and extension to multiclass.
\newblock \emph{IEEE Transactions on Knowledge and Data Engineering},
  32\penalty0 (3):\penalty0 506--518, 2020.
\newblock \doi{10.1109/TKDE.2018.2880448}.

\bibitem[Zhou et~al.(2006)Zhou, Huang, and Sch\"{o}lkopf]{zhou2006learning}
Zhou, D., Huang, J., and Sch\"{o}lkopf, B.
\newblock Learning with hypergraphs: Clustering, classification, and embedding.
\newblock In \emph{Advances in Neural Information Processing Systems},
  volume~19, 2006.
\newblock URL
  \url{https://proceedings.neurips.cc/paper_files/paper/2006/file/dff8e9c2ac33381546d96deea9922999-Paper.pdf}.

\end{thebibliography}
\bibliographystyle{icml2025}

\newpage
\appendix
\onecolumn
%

\section{Useful properties of effective resistance}\label{sec:properties-of-effective-resistance}

It's the well-known fact in graph theory that the effective resistance defines a
metric on the vertices of a graph.
Below, we outline several key properties of effective resistance that will be
utilized in this work.

\begin{lemma}[Foster]\label{le:Foster}
  For a weighted graph $G=\parens{V,F,c}$, let $R_{ij}$ represent the effective
  resistance between vertices $i$ and $j$.
  Then, it holds that $\sum_{\sets{i,j} \in F} c_{ij} R_{ij} \leq n $.
\end{lemma}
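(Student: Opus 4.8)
The plan is to rewrite the left-hand side as the trace of a single matrix and then exploit the cyclicity and linearity of the trace together with the edge decomposition of the Laplacian. For an edge $f=\sets{i,j}\in F$ write $b_f := \delta_i-\delta_j$, so that by definition $R_{ij} = b_f^\top L_G^+ b_f$ is a scalar, hence equal to its own trace. Using $\trace\paren{b_f^\top L_G^+ b_f}=\trace\paren{L_G^+ b_f b_f^\top}$ and linearity,
\begin{equation*}
  \sum_{\sets{i,j}\in F} c_{ij} R_{ij}
  = \sum_{f\in F} c_f \trace\paren{L_G^+ b_f b_f^\top}
  = \trace\paren{L_G^+ \sum_{f\in F} c_f b_f b_f^\top}.
\end{equation*}

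The key identity is the edge decomposition $L_G = \sum_{f\in F} c_f b_f b_f^\top$, which follows at once by comparing entries with the definition of $L_G$ (equivalently, from the quadratic form in \cref{eq:quadratic-laplacian}). Substituting it in, the sum becomes exactly $\trace\paren{L_G^+ L_G}$.

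It then remains to bound $\trace\paren{L_G^+ L_G}$. By the defining properties of the Moore--Penrose inverse, $L_G^+ L_G$ is the orthogonal projector onto the column space of $L_G$ (using that $L_G$ is symmetric), and the trace of a projector equals its rank. Since $L_G$ annihilates the all-ones vector, $\operatorname{rank}(L_G)\le n-1<n$, which yields $\sum_{\sets{i,j}\in F} c_{ij} R_{ij}\le n$; in fact $\operatorname{rank}(L_G)$ equals $n$ minus the number of connected components, so the inequality is essentially tight. The only point requiring a little care — and the main (mild) obstacle — is that $R_{ij}$ is finite precisely because every edge joins two vertices in the same connected component, so that $b_f\in\operatorname{im}(L_G)$ and the identities $L_G L_G^+ b_f = b_f$ used implicitly above are valid; for a multigraph the argument goes through verbatim with $F$ treated as a multiset of edges.
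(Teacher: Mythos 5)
Your proof is correct and follows essentially the same route as the paper's: express each $c_{ij}R_{ij}$ as $\trace\paren{c_{ij} L_G^+ b_f b_f^\top}$, collect the sum into $\trace\paren{L_G^+ L_G}$ via the edge decomposition of the Laplacian, and bound this by $\operatorname{rank}(L_G) \le n-1 < n$. The additional remarks on the projector interpretation, connected components, and the multigraph case are fine elaborations but not a different argument.
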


\begin{proof}For a edge $\sets{i,j}$, recall that
  $R_{ij}=\paren{\delta_i -\delta_j}^\top L_G^{+}\paren{\delta_i -\delta_j}$,
  then
  \begin{equation*}
  \sum_{\sets{i,j} \in F} c_{ij} R_{ij} =\sum_{\sets{{i,j}}\in F} \trace \paren{c_{ij}\paren{\delta_i -\delta_j}\paren{\delta_i -\delta_j}^\top L_G^{+}}=\trace\paren{L_G L_G^{+}}\leq n-1
  \end{equation*}
  since rank of $L_G$ is at most $n-1$.
\end{proof}

\begin{lemma}[Convexity, Lemma 3.4 in~\citet{cohen2019near}]\label{le:effective-resistance-convexity}
  For a weighted graph $G=\parens{V,F,c}$, the function $ \log R_f\parens{c}$ is
  convex with respect to $c$.
\end{lemma}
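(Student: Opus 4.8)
The plan is to prove this via the Rayleigh-quotient (variational) characterization of effective resistance, which exhibits $\log R_f$ as a pointwise supremum of a family of convex functions of $c$. Fix an edge $f=\{i,j\}\in F$ and write $L(c)$ for the Laplacian $L_G$ viewed as a (linear) function of the weight vector $c$. First I would establish the identity
\begin{equation*}
  R_f(c)=\sup_{x\in\mathbb{R}^V:\ x^\top L(c)x\neq 0}\ \frac{(x_i-x_j)^2}{x^\top L(c)x}.
\end{equation*}
To verify it, diagonalize $L(c)=\sum_k\lambda_k u_k u_k^\top$ over the nonzero eigenvalues, so $L(c)^+=\sum_k\lambda_k^{-1}u_ku_k^\top$; since $\delta_i-\delta_j$ is orthogonal to the indicator of the component containing $i$ and $j$, it lies in $\mathrm{span}\{u_k\}$, and writing $\delta_i-\delta_j=\sum_k a_k u_k$ gives $R_f(c)=\sum_k a_k^2/\lambda_k$. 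A candidate $x$ may be taken in $\mathrm{span}\{u_k\}$ (its $\ker L(c)$ component affects neither the numerator $\langle\delta_i-\delta_j,x\rangle^2=(x_i-x_j)^2$ nor the denominator), and Cauchy--Schwarz, $(\sum_k a_k b_k)^2\le(\sum_k a_k^2/\lambda_k)(\sum_k\lambda_k b_k^2)$, shows the supremum equals $\sum_k a_k^2/\lambda_k=R_f(c)$, attained at $b_k\propto a_k/\lambda_k$.

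Next I would fix $x$ and note that $x^\top L(c)x=\sum_{\{k,l\}\in F}c_{kl}(x_k-x_l)^2$ is a linear, nonnegative function of $c$ by \cref{eq:quadratic-laplacian}; hence $c\mapsto\log\!\big(x^\top L(c)x\big)$ is concave wherever it is positive, and therefore
\begin{equation*}
  c\longmapsto \log\frac{(x_i-x_j)^2}{x^\top L(c)x}=2\log\abs{x_i-x_j}-\log\!\big(x^\top L(c)x\big)
\end{equation*}
is convex in $c$ for each fixed $x$. Combining with the variational identity, $\log R_f(c)=\sup_x\big(2\log\abs{x_i-x_j}-\log(x^\top L(c)x)\big)$ is a pointwise supremum of convex functions, and is therefore convex.

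The one point requiring care is the domain. The identity and the manipulations above tacitly assume $i$ and $j$ lie in the same connected component of the (positively) weighted graph, so that $R_f(c)<\infty$; this is automatic on the natural domain where the present edges carry positive weight. On the full nonnegative orthant one either restricts to this region or adopts the extended-value convention $R_f(c)=+\infty$ when $i,j$ are disconnected (which is exactly the value the supremum returns, via an indicator vector with $x_i\neq x_j$ and $x^\top L(c)x=0$), under which the supremum characterization---and hence convexity---continues to hold. I expect this connectivity/kernel bookkeeping to be the only mild obstacle; once the Rayleigh-quotient identity is in place, the core argument is just ``$-\log$ of a positive linear form is convex, and a supremum of convex functions is convex.''
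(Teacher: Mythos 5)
The paper states \cref{le:effective-resistance-convexity} only as a citation to \citet{cohen2019near} (their Lemma 3.4) and reproduces no proof, so there is no in-paper argument to compare against. Your proof is correct and self-contained: the Rayleigh-quotient identity $R_f(c)=\sup_x (x_i-x_j)^2/(x^\top L(c)x)$ follows from the eigendecomposition/Cauchy--Schwarz argument you give (noting, as you do, that $\delta_i-\delta_j\perp\ker L(c)$ when $i,j$ lie in the same component), and since $x^\top L(c)x=\sum_{\{k,l\}\in F}c_{kl}(x_k-x_l)^2$ is a nonnegative linear form in $c$, each $c\mapsto 2\log\abs{x_i-x_j}-\log(x^\top L(c)x)$ is convex, so the pointwise supremum $\log R_f(c)$ is convex. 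Your remark on the domain is also right: the set of $c\ge 0$ on which $i$ and $j$ remain connected is convex (supports can only grow under convex combination), and with the extended-value convention $R_f=+\infty$ off that set the supremum characterization still yields a convex function. This is, to my knowledge, essentially the same variational argument used in the cited source.
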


\section{Proof of \cref{thm:overestimate-quantum}}\label{sec:proof-of-overestimate}

For a hyperedge $e$, let $S_e $ represent the set
$\sets{f \in \binom{e}{2} : a\in f}$, where $a$ is a fixed vertex in $e$.
\begin{proposition}[Weight Initialization for Overestimates]\label{prop:weight-init}
  Suppose $H = \parens{V,E,w}$ is a hypergraph with vertex set $V$ of size $n$,
  edge set $E$ of size $m$, weight function $w: E\to \mathbb{R}_{\geq 0}$, and
  $\mathcal{O}_H $ is a quantum oracle to $H$.
  Then, there exists a quantum algorithm
  $\mathsf{WeightInitialize}(\mathcal{O}_H)$, that satisfies
  \[
  \mathsf{WeightInitialize}(\mathcal{O}_H)\ket{e}\ket{f}\ket{0} = \ket{e}\ket{f}\kets{c_{e,f}^{(1)}}
  \]
  with $ c_{e,f}^{(1)}=w_e /\parens{\abss{e}-1}$, for $\forall e \in E$ and
  $\forall f =\sets{i,j}\in S_e$, and performs no action if
  $f=\sets{i,j}\notin S_e$, using $O(1)$ queries to $\mathcal{O}_H$, in
  $\widetilde O(1)$ time.
  Here, we represent $\ket{f}$ as $\ket{i}\ket{j}$ for $f =\sets{i,j}$.
\end{proposition}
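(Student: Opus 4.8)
The plan is to realize $\mathsf{WeightInitialize}(\mathcal{O}_H)$ as a short reversible circuit that calls each component of $\mathcal{O}_H$ a constant number of times, writes the target value $w_e/\parens{\abs{e}-1}$ into the output register, and then uncomputes every scratch register so that the overall map is a bona fide unitary; since the content is just an oracle construction, the work is in the bookkeeping rather than in any nontrivial mathematics.

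Concretely, on input $\ket{e}\ket{i}\ket{j}\ket{0}$ I would first apply $\mathcal{O}_H^{\textup{size}}$ and $\mathcal{O}_H^{\textup{wt}}$ to load $\ket{\abs{e}}$ and $\ket{w_e}$ into ancillas, and apply $\mathcal{O}_H^{\textup{vtx}}$ to load the padded vertex list $\bigl(\bigotimes_{v\in e}\ket{v}\bigr)\ket{0}^{\otimes\parens{r-\abs{e}}}$ of $e$. I fix the distinguished vertex $a_e$ of \cref{def:sparse-underlying-graph} to be the first vertex returned by $\mathcal{O}_H^{\textup{vtx}}$, so $a_e$ is read off deterministically from this register. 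Next I compute, reversibly into a one-qubit flag $b$, the indicator of $f=\sets{i,j}\in S_e$: by definition of $S_e$ this holds exactly when $i\neq j$, exactly one of $i,j$ equals $a_e$, and both $i,j$ occur in the loaded list; each sub-test is a comparison of $\widetilde O(1)$-bit indices (against $a_e$, or against the at most $r$ entries of the list) carried out by a standard reversible comparator. In parallel I compute $q = w_e/\parens{\abs{e}-1}$ by reversible arithmetic to the floating-point precision $d_{\textup{acc}}=\widetilde O(1)$ of the oracle encoding; since $S_e$ is the star at $a_e$ inside $\binom{e}{2}$ we have $\abs{S_e}=\abs{e}-1$, so this choice of $c_{e,f}^{(1)}$ already satisfies the underlying-graph constraint \cref{eq:sparse-underlying-graph-constraints}.

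Controlled on $b=1$ I then add $q$ into the (initially $\ket{0}$) output register, and finally I run the arithmetic, the flag computation, and the three oracle calls in reverse order to restore all ancillas — the vertex list, $\abs{e}$, $w_e$, the comparator scratch, and $b$ — to $\ket{0}$, leaving the output register untouched by the uncomputation. The circuit then realizes $\ket{e}\ket{i}\ket{j}\ket{0}\mapsto\ket{e}\ket{i}\ket{j}\kets{c_{e,f}^{(1)}}$ when $f\in S_e$ and the identity otherwise, which is the claimed map. For the resource count there are three forward and three uncomputation oracle calls, hence $O(1)$ queries to $\mathcal{O}_H$, while the comparators and bounded-width arithmetic add only $\widetilde O(1)$ further gates. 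The one place that needs genuine care — and the only real content of the proof — is the garbage-free uncomputation (together with getting the division to the stated accuracy): every register written by an oracle query or an arithmetic step must be cleared before returning, since otherwise the map is not unitary and will not compose correctly as $U_{G(1)}$ inside \cref{alg:QHLSO}. Beyond this bookkeeping I do not expect any obstacle.
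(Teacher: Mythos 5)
Your circuit is built from the same ingredients as the paper's --- load $|e|$ and $w_e$ via the oracle, reversibly divide, and uncompute --- but you add one ingredient the paper deliberately omits: you call $\mathcal{O}_H^{\textup{vtx}}$, extract $a_e$, and compute a reversible flag for ``$f \in S_e$'' so that the output write can be conditioned on it. The paper's $\mathsf{WeightInitialize}$ uses only $\mathcal{O}_H^{\textup{size}}$ and $\mathcal{O}_H^{\textup{wt}}$ and computes $w_e/(|e|-1)$ unconditionally; it never tests membership. The ``performs no action if $f \notin S_e$'' clause is not literally enforced by the paper's circuit --- it is harmless because $\mathsf{GraphSparsify}$ only ever queries $U_{G(1)}$ on edge labels $(e,f)\in F$ of the sparse underlying graph, where $f\in S_e$ by construction.

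The added check is where your time bound breaks. To decide whether the non-$a_e$ endpoint of $f$ actually lies in $e$, you must compare it against up to $r-1$ entries of the loaded vertex list; that is $\widetilde O(r)$ gates, not $\widetilde O(1)$. Your writeup attributes $\widetilde O(1)$ cost to each individual comparator but never sums over the $r$ list entries, so the claimed $\widetilde O(1)$ per-call time is not achieved. (It would, incidentally, still fit inside the $\widetilde O(r)$-per-query budget the paper later allots to $U_{G(t)}$ in the proof of \cref{thm:overestimate-quantum}, so your version would not sink the overall algorithm --- but it does not prove the proposition as stated.) To fix this, simply drop the membership test and the $\mathcal{O}_H^{\textup{vtx}}$ call, exactly as the paper does, noting that the oracle is only invoked on valid $(e,f)$ pairs; the remaining reversible decrement-and-divide with uncomputation gives $O(1)$ oracle calls and $\widetilde O(1)$ gates, matching the claim.
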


\begin{proof}
  Consider the following procedure: for any $e\in E$ and $f\in S_e$, we have
  \[
  \begin{aligned}
    \ket{e} \ket{f} \ket{0} \ket{0} \ket{0}
    & \xmapsto{\mathcal{O}_H^{\textup{wt}},\mathcal{O}_H^{\textup{size}}}
    \ket{e} \ket{f} \ket{w_e} \ket{\vert e\vert} \ket{0} \\
    & \xmapsto{U_{\textup{fwd}}}
    \ket{e} \ket{f} \ket{w_e} \ket{\abss{e}-1} \ket{0} \\
    & \xmapsto{U_{\textup{div}}}
    \ket{e} \ket{f} \ket{w_e} \ket{\abss{e}-1}  \kets{c_{e,f}^{(1)}} \\
    & \xmapsto{U_{\textup{fwd}}^{\dagger}}
    \ket{e} \ket{f} \ket{w_e} \ket{\abss{e}}  \kets{c_{e,f}^{(1)}} \\
    & \xmapsto{{\mathcal{O}_H^{\textup{size}}}^{\dagger},{\mathcal{O}_H^{\textup{wt}}}^{\dagger}}
    \ket{e} \ket{f} \ket{0} \ket{0}  \kets{c_{e,f}^{(1)}},
  \end{aligned}
  \]
  where $U_{\textup{fwd}}$ satisfies $U_{\textup{fwd}}\ket{i} = \ket{i-1}$,
  $U_{\textup{div}}$ satisfies
  $U_{\textup{div}}\ket{i}\ket{j}\ket{0}=\ket{i}\ket{j}\ket{i/j}$.
  This procedure uses $4$ queries to $\mathcal{O}_H$ and $\widetilde O(1)$
  additional arithmetic operations.
  Therefore, let
  $\mathsf{WeightInitialize}(\mathcal{O}_H) =
  {\cO_{H}^{\textup{wt}}}^\dagger{\cO_{H}^{\textup{size}}}^\dagger U_{\textup{fwd}}^\dagger
  U_{\textup{div}} U_{\textup{fwd}}\cO _H^{\textup{size}}\cO_{H}^{\textup{wt}}$,
  we know it satisfies the requirement stated in the proposition.
\end{proof}

Recall the following classical algorithm for efficiently computing the effective
resistances of a graph proposed by~\citet{spielman2011graph}.

\begin{theorem}[Effective Resistance Oracle, Theorem 2 in~\citet{spielman2011graph}]\label{thm:approximate-resistance-oracle}
  There exists a (classical) algorithm
  $\mathsf{ClassicalEffectiveResistance}(G, \varepsilon)$ such that for any
  $\varepsilon > 0$ and graph $G= \parens{V, F, c}$ with vertex set $V$ of size
  $n$, edge set $F$ of size $m$, and weights $c\in \mathbb{R}_{\ge 0}^F$, with
  high probability, returns a matrix $Z_G$ of size $p \times n $ with
  $p= \ceils{24\log n /\epsilon^2}$ satisfying
  \begin{equation*}
  \parens{1-\epsilon}R_{ab}\leq \norm{Z_G \parens{\delta_a -\delta_b}}^2 \leq
  \parens{1+\epsilon}R_ {ab}
  \end{equation*}
  for every pair of $a, b\in V$, where $R_{ab}$ is the effective resistance
  between $a$ and $b$, in $\widetilde O\parens{m /\epsilon^2}$ time.
\end{theorem}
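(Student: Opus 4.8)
The plan is to reproduce the Spielman--Srivastava construction, which combines a Johnson--Lindenstrauss projection with a nearly-linear-time Laplacian solver. Let $B \in \mathbb{R}^{F\times V}$ be the signed edge--vertex incidence matrix of $G$ and $W = \diag\parens{c_f : f\in F}$, so that $L_G = B^\top W B$. Using the Moore--Penrose identity $L_G^{+} = L_G^{+} L_G L_G^{+} = L_G^{+} B^\top W B L_G^{+}$, one obtains, for every pair $a,b\in V$ in a common connected component,
\begin{equation*}
  R_{ab} = \parens{\delta_a-\delta_b}^\top L_G^{+}\parens{\delta_a-\delta_b}
  = \norm{W^{1/2} B L_G^{+}\parens{\delta_a-\delta_b}}^2
  = \norm{\Pi\parens{\delta_a-\delta_b}}^2,
\end{equation*}
where $\Pi := W^{1/2} B L_G^{+}\in\mathbb{R}^{F\times V}$. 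Forming $\Pi$ exactly is too expensive since it has $m$ rows; the key point is that we only need to preserve the $\binom{n}{2}$ numbers $\norm{\Pi\parens{\delta_a-\delta_b}}^2$, which are the pairwise distances among the $n$ columns $\sets{\Pi\delta_a : a\in V}$.

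Given this reduction, I would combine two standard ingredients. First, the Johnson--Lindenstrauss step: take a random matrix $Q\in\mathbb{R}^{p\times F}$ with independent entries each $\pm 1/\sqrt{p}$ and $p = \ceils{24\log n/\epsilon^2}$; applying the JL lemma to the $n$ vectors $\Pi\delta_a$ yields, with probability at least $1-O(1/n)$, the simultaneous guarantee $\parens{1-\epsilon}\norm{\Pi\parens{\delta_a-\delta_b}}^2 \le \norm{Q\Pi\parens{\delta_a-\delta_b}}^2 \le \parens{1+\epsilon}\norm{\Pi\parens{\delta_a-\delta_b}}^2$ for all $a,b$ at once. So it suffices to output a $p\times n$ matrix $Z_G$ close to $Q\Pi = \parens{QW^{1/2}B} L_G^{+}$. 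Second, the Laplacian-solve step: the matrix $Y := QW^{1/2}B$ has $p$ rows, each a signed combination of rescaled rows of $B$, hence is formed in $\widetilde O\parens{pm} = \widetilde O\parens{m/\epsilon^2}$ time; row $i$ of $Z_G$ is the solution of the consistent system $L_G z = Y_{i,:}^\top$, which a nearly-linear-time Laplacian solver~\citep{cohen2014solving} computes to high relative accuracy in $\widetilde O\parens{m}$ time, for a total of $\widetilde O\parens{m/\epsilon^2}$ over the $p$ rows. The estimate returned for $R_{ab}$ is then $\norm{Z_G\parens{\delta_a-\delta_b}}^2$.

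The main obstacle I anticipate is the clean composition of the two error sources: the approximate solves produce $Z_G$ that is only close to $Q\Pi$, and this perturbation must be folded into the $\parens{1\pm\epsilon}$ JL guarantee without spoiling it. The standard remedy is to run each Laplacian solve to relative accuracy $\poly\parens{\epsilon/n}$ --- using that the spectrum of $L_G$ on the complement of its kernel is bounded by a quantity polynomial in $n$ and the ratio of largest to smallest edge weight --- so that the additional multiplicative distortion of each $\norm{Z_G\parens{\delta_a-\delta_b}}^2$ is $o(\epsilon)$; taking the JL target a constant factor below $\epsilon$ then absorbs it, and since the solver cost grows only polylogarithmically in the inverse accuracy, the overall running time stays $\widetilde O\parens{m/\epsilon^2}$. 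Finally, the ``with high probability'' conclusion follows from a union bound over the $\binom{n}{2}$ pairs for the JL event, together with median amplification of the (constant-success-probability) solver calls.
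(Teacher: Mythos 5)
This statement is imported as a cited theorem (Theorem~2 of Spielman--Srivastava) and the paper gives no proof of its own, so there is nothing to compare against internally. Your reconstruction is the standard Spielman--Srivastava argument --- write $R_{ab}=\norm{W^{1/2}BL_G^{+}(\delta_a-\delta_b)}^2$, sketch the $m$-dimensional vectors down to dimension $p=O(\log n/\epsilon^2)$ by a Johnson--Lindenstrauss projection, and recover the $p$ rows of $Z_G$ by $p$ nearly-linear-time Laplacian solves, with solver accuracy $\poly(\epsilon/n)$ so the perturbation is absorbed into the JL slack --- and it is correct and matches the cited source.
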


We require a quantum variant of this effective resistance computation, as
outlined below.

\begin{proposition}[Quantum Effective Resistance Oracle, {\citet[Claim 7.9]{apers2022quantum}}]\label{prop:quantum-effective-resistance-oracle}
  Let $G= \parens{V, F, c}$ be a graph with a vertex set $V$ of size $n$, an
  edge set $F$ of size $m$, and weights $c: \in \mathbb{R}_{\ge 0}^F$.
  For $\varepsilon > 0$, there is a quantum data structure
  $\mathsf{EffectiveResistance}$, that supports the following operations:
  \begin{itemize}
  \item Initialization: $\mathsf{EffectiveResistance}(G, \varepsilon)$,
      outputs an instance $\mathcal{R}$, in
      $\widetilde O\parens{m /\epsilon^2}$ time.
  \item Query: $\mathcal{R}.\mathsf{Query}$, outputs a unitary satisfying
      \[
      \mathcal{R}.\mathsf{Query} \ket{a} \ket{b} \ket{0} =
      \ket{a}\ket{b}\vert \widetilde R_{ab}\rangle
      \]
      with
      $(1-\epsilon) R_{ab}\leq \widetilde R_{ab}\leq (1+\epsilon) R_{ab} $
      for every pair of vertices $a,b \in V$, and $R_{ab}$ being the
      effective resistance between vertices $a$ and $b$, in
      $\widetilde O(1/\epsilon^2)$ time.
  \end{itemize}
\end{proposition}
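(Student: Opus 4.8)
The plan is to reduce this quantum data structure to the classical effective-resistance oracle of \cref{thm:approximate-resistance-oracle}, together with QRAM storage and coherent reversible arithmetic.

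\textbf{Initialization.} First I would run $\mathsf{ClassicalEffectiveResistance}(G, \epsilon/3)$ to obtain, with high probability, a matrix $Z_G \in \mathbb{R}^{p\times n}$ with $p = \ceils{24\log n/(\epsilon/3)^2} = \widetilde O(1/\epsilon^2)$ satisfying $(1-\epsilon/3)R_{ab} \le \norms{Z_G(\delta_a-\delta_b)}^2 \le (1+\epsilon/3)R_{ab}$ for every pair $a,b \in V$; by \cref{thm:approximate-resistance-oracle} this takes $\widetilde O(m/\epsilon^2)$ time. I would then write all $pn = \widetilde O(n/\epsilon^2)$ entries of $Z_G$ into QRAM; since the classical oracle already produces this output within its $\widetilde O(m/\epsilon^2)$ running time, the write step stays within the stated initialization budget. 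The returned instance $\cR$ is a handle to this QRAM together with the parameters $n$ and $p$.

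\textbf{Query.} I would implement $\cR.\mathsf{Query}$ as the unitary that, on input $\ket{a}\ket{b}\ket{0}$, coherently evaluates $\widetilde R_{ab} := \norms{Z_G(\delta_a-\delta_b)}^2 = \sum_{k\in[p]}\parens{(Z_G)_{ka}-(Z_G)_{kb}}^2$. Concretely, for $k=1,\dots,p$: use two $U_{\textup{QRAM}}$ calls to load $(Z_G)_{ka}$ and $(Z_G)_{kb}$ into ancilla registers, compute their difference and its square by reversible floating-point arithmetic with $\widetilde O(1)$ bits of precision, add the result into an accumulator register, and then uncompute the square, the difference, and the two QRAM loads by applying their inverses. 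After $p$ iterations the accumulator holds $\widetilde R_{ab}$ and all ancillas are returned to $\ket{0}$, so the circuit is a unitary of the required form. Each iteration costs $O(1)$ QRAM calls and $\widetilde O(1)$ gates, for a total of $\widetilde O(p) = \widetilde O(1/\epsilon^2)$ time per query.

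\textbf{Correctness and the main difficulty.} Conditioned on successful initialization, $Z_G$ is fixed, so each query is deterministic; the value returned differs from $\norms{Z_G(\delta_a-\delta_b)}^2$ only by the rounding error of the $\widetilde O(1)$-bit arithmetic, which is a multiplicative $(1\pm o(\epsilon))$ factor once enough bits are used, and combining this with the $(1\pm\epsilon/3)$ guarantee yields $(1-\epsilon)R_{ab}\le \widetilde R_{ab}\le(1+\epsilon)R_{ab}$ as claimed. There is no deep obstacle here — the construction is essentially ``store a classically precomputed $O(\log n/\epsilon^2)$-dimensional linear sketch in QRAM and evaluate its squared norm coherently''; the only points needing care are (i) arranging the norm evaluation as a genuine unitary by uncomputing every intermediate register and QRAM load rather than measuring or discarding them, and (ii) bookkeeping the finite-precision arithmetic so its error is absorbed into the $\epsilon$ budget. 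Both are routine.
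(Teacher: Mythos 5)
Your construction is essentially identical to the paper's: both invoke $\mathsf{ClassicalEffectiveResistance}$ to precompute the $p\times n$ sketch $Z_G$, store its entries in QRAM, and implement $\cR.\mathsf{Query}$ as a reversible coherent evaluation of $\norms{Z_G(\delta_a-\delta_b)}^2$ using $O(p)=\widetilde O(1/\epsilon^2)$ QRAM loads followed by uncomputation. The only cosmetic differences are that you accumulate the sum term-by-term while the paper loads whole rows at once, and you budget the accuracy more explicitly by running the classical oracle at $\epsilon/3$ and reserving the rest for finite-precision arithmetic, which is a minor refinement rather than a different route.
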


\begin{proof}
  First, we use the algorithm
  $\mathsf{ClassicalEffectiveResistance}\parens{G , \epsilon}$ to obtain the
  matrix $Z_G$ and store each entries of matrix in QRAM\@.
  This allows us to access the matrix through a unitary $U^{\textup{}}_{Z_G}$
  such that
  \begin{equation*}
  U_{Z_G}\ket{i}\ket{j}\ket{0}=\ket{i}\ket{j}\ket{Z_G \parens{i,j}}
  \end{equation*}
  where $Z_G\paren{i,j}$ is the entry in the $i$-th row and $j$-th column of the
  matrix $Z_G$, $i \in \sqb{n}, j \in \sqb{q}$ with
  $q=\ceils{24 \log n /\epsilon^2}$.
  The time of computing and storing $Z_G$ is
  $\widetilde O\parens{m /\epsilon^2}$, and each query of $U_{Z_G}$ has a time
  complexity of $\widetilde O\parens{1}$.
  Consider the following procedure:
  \begin{align*}
  \ket{i} \ket{j} \ket{0} \ket{0} \ket{0}
  & \xmapsto{U_{Z_G}^\prime} \ket{i}\ket{j}\Bigl(\bigotimes_{k=1}^{q}
    \kets{Z_G\paren{i,k}}\Bigr) \Bigl(\bigotimes_{k=1}^{q}
    \kets{Z_G\parens{j,k}}\Bigr) \kets{0}\\
  & \overset{\textup{denote}}{=}\ket{i}\ket{j}\kets{Z_G^{i}} \kets{Z_G^{j}}\ket{0}\\
  & \xmapsto{U_ {\textup{minus}}} \ket{i}\ket{j}\kets{Z_G^{i}}
    \kets{Z_G^{j}}\kets{Z_G^{i}-Z_G^{j}}\\
  & \xmapsto{U_ {\textup{square}}} \ket{i}\ket{j}\kets{Z_G^{i}}
    \kets{Z_G^{j}}\kets{\widetilde R_{ij}}\\
  & \xmapsto{{U_ {Z_G}^\prime}^{\dagger}}
    \ket{i}\ket{j}\kets{0}\kets{0}\kets{\widetilde R_{ij}}
  \end{align*}
  where $Z_G^i$ is the $i$-th row of the matrix $Z_G$, and $U_{Z_G}^\prime$ can
  be implemented using $ O\parens{q}$ queries of $ U_{Z_G}$;
  $U_{\textup{minus}}$ satisfies
  $U_{\textup{minus}}\ket{i}\ket{j}\ket{0} =\ket{i}\ket{j} \ket{i-j}$,
  $U_{\textup{square}}$ satisfies $U_{\textup{square}}\ket{i}\ket{0}=\ket{i}\ket{i^2}$.
  This procedure uses $\widetilde O\parens{1/\epsilon^2}$ queries to $U_{Z_G}$
  and $\widetilde O(1)$ additional arithmetic operations.
  Therefore, let
  $\cR.\mathsf{Query} = {U_{Z_G}^\prime}^\dagger
  U_{\textup{square}} U _{\textup{minus}}U_{Z_G}^\prime$,
  we confirm that this satisfies the requirements outlined in the proposition.
\end{proof}

The following proposition formalizes the procedure of storing a graph in QRAM,
which is made straightforward by the capabilities of QRAM\@.
\begin{proposition}[Quantum Underlying Graph Storage]\label{prop:quantum-underlying-graph-store}
  Let $H = \parens{V,E,w}$ is a hypergraph with vertex set $V$ of size $n$, edge
  set $E$ of size $m$, weights $w\in \mathbb{R}_{\geq 0}^E$.
  Suppose $G= \parens{V, F, c}$ is a sparse underlying graph $G$ of $H$.
  There is a quantum data structure $\mathsf{UGraphStore}$, that supports the
  following operations:
  \begin{itemize}
  \item Initialization: $\mathsf{UGraphStore}(G)$, outputs an instance
      $\mathcal{G}$, in $\widetilde O\parens{mr}$ time.
  \item Query: $\mathcal{G}.\mathsf{Query}$, outputs a unitary satisfying
      \[
      \mathcal{G}.\mathsf{Query} \ket{e}\ket{f} \ket{0} =
      \ket{e}\ket{f}\ket{c_{e,f}}
      \]
      for every edge $\parens{e,f} \in F$, where $f = \sets{i,j}\in S_e$,
      and performs no action if $\parens{e,f} \notin F$, in
      $\widetilde O(1)$ time.
      Here, we represent $\ket{f}$ as $\ket{i}\ket{j}$ for $f =\sets{i,j}$.
  \end{itemize}
\end{proposition}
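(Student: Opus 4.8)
The plan is to realize $\mathsf{UGraphStore}$ as a thin wrapper around QRAM, exploiting that a sparse underlying graph is small. First I would observe that, by \cref{def:sparse-underlying-graph}, $|F| = \sum_{e\in E}|S_e| = \sum_{e\in E}(|e|-1) \le m(r-1) = O(mr)$, and I would assume (as is the case wherever this data structure is invoked, cf.\ \cref{alg:QHLSO}, where the input is the explicit output of $\mathsf{GraphSparsify}$) that $G$ is handed over as an explicit classical list of its $O(mr)$ weighted edges, each recorded as a tuple $(e,\{i,j\},c_{e,f})$ with $f=\{i,j\}\in S_e$ and, say, $i<j$ fixing a canonical representation of the pair.

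For the initialization $\mathsf{UGraphStore}(G)$, I would allocate a QRAM addressed by keys $(e,i,j)$ with $e\in[m]$, $i<j\in[n]$, with every cell initialized to the default value $0$, and then loop over the $O(mr)$ listed edges, writing $c_{e,f}$ into the cell with key $(e,i,j)$. Each write is a single QRAM operation costing $\widetilde O(1)$ time, so the loop runs in $\widetilde O(mr)$ time; the returned instance $\mathcal{G}$ is simply the resulting QRAM handle. For the query, $\mathcal{G}.\mathsf{Query}$ is the QRAM read unitary $\ket{e}\ket{i}\ket{j}\ket{0}\mapsto\ket{e}\ket{i}\ket{j}\ket{c_{e,\{i,j\}}}$, preceded by $O(1)$ comparator gates that sort $(i,j)$ so the pair is presented in the same canonical form used at write time. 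If $(e,f)\in F$ this returns $c_{e,f}$; if $(e,f)\notin F$ the addressed cell still holds $0$, so the weight register is left in $\ket{0}$ and the unitary acts trivially there, matching the ``performs no action'' requirement. The read is one QRAM operation plus $O(1)$ gates, i.e.\ $\widetilde O(1)$ time, and it is automatically a unitary (the XOR-into-register form of QRAM read on the whole Hilbert space).

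The only genuine point to address — and hence the main obstacle, though a mild one — is that the key $(e,i,j)$ ranges over $\Theta(mn^2)$ values while only $O(mr)$ cells are ever written, so one should not pay for the entire address range. Two standard fixes suffice: either (i) use a QRAM variant that stores only the written cells, keyed by a balanced search tree or hash on addresses, whose initialization cost scales with the number of writes rather than with the address space; or (ii) re-index each star $S_e$ by a local label $k\in[\,|e|-1\,]$ obtained from the fixed vertex $a_e\in e$ together with the vertex list of $e$ (both available through $\cO_H^{\textup{vtx}}$, or through an auxiliary $O(mr)$-size table built once in $\widetilde O(mr)$ time), so that the primary table has only $O(mr)$ addresses and the query composes this constant-time re-indexing with the QRAM read. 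In either case the initialization stays $\widetilde O(mr)$ and each query stays $\widetilde O(1)$, which is exactly the claim.
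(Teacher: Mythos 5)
Your proof is correct and takes essentially the same approach as the paper: observe $|F|\le m(r-1)$, write the weights $c_{e,f}$ into a QRAM-backed array during initialization in $\widetilde O(mr)$ time, and implement $\cG.\mathsf{Query}$ as the QRAM read unitary in $\widetilde O(1)$ time. The one place you go beyond the paper's (very terse) proof is in explicitly raising and resolving the addressing issue—that the naive key space $(e,i,j)$ has size $\Theta(mn^2)$ even though only $O(mr)$ cells are ever written—via either a sparse/hashed QRAM or a local re-indexing of each star $S_e$; the paper implicitly assumes a QRAM of size $\widetilde O(mr)$ without saying how the keys are mapped into it, so your added detail closes a small gap rather than taking a different route.
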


\begin{proof}
  For the graph $G$ with edge set $F$ of size at most $m \parens{r-1}$, the
  initialization step is to store all the weights $c_{e,f}$ for the edges with
  indices $(e,f)\in F $ into an array using a QRAM of size $\widetilde O(mr)$
  and in $\widetilde O(mr)$ QRAM classical write operations.

  For the query operation, the $\cG.\mathsf{Query}$ is the QRAM quantum query
  operation to the above array.
\end{proof}

The following proposition concerns weight updates in the quantum overestimation
algorithm.
\begin{proposition}[Weight Computation for Overestimates]\label{prop:weight-compute}
  Let $H = \parens{V,E,w}$ be a hypergraph with vertex set $V$ of size $n$, edge
  set $E$ of size $m$, weights $\in \mathbb{R}_{\geq 0}^E$.
  Suppose $\mathcal{O}_H $ is a quantum oracle to $H$, $\cG$ and $\cR$ represent
  the instances of $\mathsf{UGraphStore}$ and $\mathsf{EffectiveResistance}$ for
  the sparse underlying graph $G =\parens{V,F,c}$, respectively.
  Then, there exists a quantum algorithm
  $\mathsf{WeightCompute}(\mathcal{O}_H, \cG, \cR )$, such that
  \[
  \mathsf{WeightCompute}(\mathcal{O}_H, \cG, \cR )\ket{e}\ket{f}\ket{0} =
  \ket{e}\ket{f}\kets{c_{e,f}^\prime}
  \]
  where
  \begin{equation}\label{eq:weight-compute}
    c_{e,f}^\prime = \frac{c_{e,f} R_f} {\sum_{g \in S_e} c_{e,g} R_g}\cdot w_e,
  \end{equation}
  and $R_f$ is the query result of $\cR$ on vertices of $f$.
  The algorithm requires $O\parens{1}$ queries to $\cO_H$, $O(r)$ queries to
  both $\cR$ and $\cG$, and runs in $\widetilde O\parens{r}$ time.
\end{proposition}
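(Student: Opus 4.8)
The plan is to build $c_{e,f}'$ from its numerator $N:=c_{e,f}\,\widetilde R_f\,w_e$ and denominator $D:=\sum_{g\in S_e}c_{e,g}\,\widetilde R_g$, each formed in ancilla registers by reversible arithmetic, then to divide, copy the quotient into the output register, and finally uncompute every ancilla by running the earlier steps in reverse. All the ingredients are the reversible objects already at hand: $\cO_H$, $\cG.\mathsf{Query}$, $\cR.\mathsf{Query}$, and standard reversible circuits for subtraction, multiplication and division, each carried to the fixed accuracy $d_{\textup{acc}}=\widetilde O(1)$ so that every arithmetic step costs $\widetilde O(1)$ time and the final output is a valid floating-point number.

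Concretely, starting from $\ket{e}\ket{f}\ket{0}$ I would first call $\cO_H^{\textup{size}}$ and $\cO_H^{\textup{vtx}}$ to write $\abs{e}$ and $\bigl(\bigotimes_{i\in e}\ket{i}\bigr)\ket{0}^{\otimes(r-\abs{e})}$ into ancillae, and extract the designated vertex $a_e$ using the same canonical tie-break (e.g.\ smallest index) that fixes $S_e$ in \cref{def:sparse-underlying-graph}. Then I run a length-$r$ loop over the slots of the vertex register: for each real vertex $j\ne a_e$ I form the label $g_j=\{a_e,j\}\in S_e$, apply $\cG.\mathsf{Query}$ to obtain $c_{e,g_j}$ and $\cR.\mathsf{Query}$ to obtain $\widetilde R_{g_j}$, compute the product $c_{e,g_j}\widetilde R_{g_j}$, add it into the accumulator holding $D$, and uncompute the product together with $c_{e,g_j}$ and $\widetilde R_{g_j}$ before the next iteration, so only the running sum survives. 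This costs $O(r)$ queries each to $\cG$ and $\cR$ and $\widetilde O(r)$ arithmetic. One further call each to $\cG.\mathsf{Query}$ on $(e,f)$, $\cR.\mathsf{Query}$ on $f$, and $\cO_H^{\textup{wt}}$ on $e$ supplies $c_{e,f},\widetilde R_f,w_e$, from which $N$ is formed; a reversible division then writes $q=N/D$ into a fresh ancilla, which is copied by CNOTs into the output register. (If $f\notin S_e$ then $\cG.\mathsf{Query}$ leaves $c_{e,f}=0$, so $N=0$ and the output is $0$; and taking $G$ connected, as in the classical constructions, all $\widetilde R_g>0$, hence $D>0$ and the division is well posed.) A final pass that undoes the division and all the register-loading steps—everything except the copy to the output—clears $q$, the vertex register, $\abs{e}$, $D$, $N$, $c_{e,f}$, $\widetilde R_f$ and $w_e$ back to $\ket{0}$, leaving exactly $\ket{e}\ket{f}\ket{c_{e,f}'}$. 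The totals are $O(1)$ queries to $\cO_H$, $O(r)$ to each of $\cG$ and $\cR$, and $\widetilde O(r)$ gates, as claimed.

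There is no genuine obstacle here—the statement is a mechanical composition of reversible primitives—so the only points needing care are (i) carrying out the sum over $S_e$ reversibly, uncomputing each summand's scratch space inside the loop rather than at the end, and (ii) ensuring that the extraction of $a_e$ and the formation of the labels $g_j$ use exactly the convention of \cref{def:sparse-underlying-graph}, so that the indices handed to $\cG.\mathsf{Query}$ coincide with those under which $\mathsf{UGraphStore}$ stored the weights, and that $\cR.\mathsf{Query}$ is invoked on the vertex pair of $g_j$ rather than on the label. I expect the bookkeeping in (i) to be the most error-prone part to write out carefully, though it is conceptually routine.
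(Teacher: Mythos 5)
Your proposal is correct and follows essentially the same approach as the paper: use $\cO_H$ to expose the vertices of $e$ and $w_e$, form the star labels $g \in S_e$, query $\cG$ and $\cR$ to obtain each $c_{e,g}$ and $\widetilde R_g$, compute the products and their sum $\Delta_e = \sum_{g\in S_e} c_{e,g}\widetilde R_g$, divide, and uncompute. The only differences are organizational—you accumulate $\Delta_e$ by a sequential loop with inline scratch uncomputation rather than holding all $|S_e|$ terms in parallel registers, and you are more explicit than the paper about the final uncomputation that returns all ancillae to $\ket{0}$—but both give the claimed $O(1)$ queries to $\cO_H$, $O(r)$ queries to $\cG$ and $\cR$, and $\widetilde O(r)$ time.
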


\begin{proof}
  Consider the following procedure: for any $e\in E$ and $f\in S_e$, we have
  \[
  \begin{aligned}
    \ket{e} \ket{f} \ket{0}
    & \xmapsto{\cO_H}
    \ket{e} \ket{f} \ket{0}\paren{\otimes_{i \in e} \ket{i}}\ket{0} \ket{w_e} \ket{0} \\
    & \xmapsto{U_\textup{star}}
    \ket{e} \ket{f} \ket{0}\paren {\otimes_{g\in S_e}  \ket{g} \ket{0}} \ket{w_e} \ket{0} \\
    & \xmapsto{\cG.\mathsf{Query}} \ket{e} \ket{f} \ket{c_{e,f}}\ket{0}
    \paren{\otimes_{g \in S_e}  \ket{g}\ket{c_{e,g}}\ket{0}} \ket{w_e}\ket{0} \\
    & \xmapsto{\cR.\mathsf{Query}} \ket{e} \ket{f} \ket{c_{e,f}}\ket{R_f}\ket{0}
    \paren {\otimes_{g \in S_e}\ket{g} \ket{c_{e,g}} \ket{R_g}\ket{0}} \ket{w_e}\ket{0} \\
    & \xmapsto{U_{\textup{mult}}} \ket{e} \ket{f} \ket{c_{e,f}}\ket{R_f}\ket{w_e c_{e,f} R_f}
    \paren {\otimes_{g \in S_e}\ket{g}  \ket{c_{e,g}} \ket{R_g}\ket{c_{e,g} R_g}} \ket{w_e}\ket{0} \\
    & \xmapsto{\cO_H^\dagger,\cG.\mathsf{Query}^\dagger, \cR.\mathsf{Query}^\dagger}
    \ket{e} \ket{f} \ket{w_e c_{e,f}  R_f}\paren {\otimes_{g \in S_e} \ket{c_{e,g} R_g}} \ket{0} \\
    & \xmapsto{U_{\textup{sum}}} \ket{e} \ket{f} \ket{w_e c_{e,f} R_f}
    \paren {\otimes_{g \in S_e} \ket{c_{e,g}  R_g}} \ket{\Delta_e}\ket{0} \\
    & \xmapsto{U_{\textup{div}}}
    \ket{e} \ket{f} \ket{w_e c_{e,f} R_f} \paren {\otimes_{g \in S_e} \ket{c_{e,g}  R_g}}
    \ket{\Delta_e}\ket{c_{e,f}^\prime}
  \end{aligned}
  \]
  where $\Delta_e $ represents the sum $\sum_{g \in S_e}c_{e,g} R_g$,
  $U_{\textup{star}}$ satisfies
  $U_{\textup{star}}\paren{\otimes_{i \in e}\ket{i}}\ket{0}=\otimes_{g \in S_e}\ket{g}\ket{0}$,
  and $U_{\textup{mult}},U_{\textup{sum}},U_{\textup{div}} $ denote basic
  arithmetic operations---multiplication, addition, and division, respectively,
  as previously.
  It's clear that this procedure meets the requirements stated in the
  proposition, since $\abss{S_e}= O\parens{r}$ for $\forall e \in E$.
\end{proof}

\begin{proposition}[Preparation for Overestimates]\label{prop:overestimate-preparation}
  Let $T \in \mathbb{N}$, $ C \in \mathbb{R}$, $\epsilon \in \mathbb{R}$, and
  $H = \parens{V,E,w}$ be a hypergraph with rank $r$.
  Assume $\mathcal{O}_H $ is a quantum oracle to $H$.
  Suppose $\sets{\cG^{(t)}: t \in \sqb{T}} $ represents a sequence of
  instances of $\mathsf{UGraphStore}$ for the sparse underlying graphs
  $G^{(t)}= \parens{V,F^{(t)} ,c^{(t)}}$ of $H$, and
  $\sets{\cR^{(t)} : t \in \sqb{T}}$ represents a sequence of instances of
  $\mathsf{EffectiveResistance}$ for the corresponding underlying graphs
  $G^{(t)}$ and $\epsilon$.
  Then, there is a quantum data structure $\mathsf{QOverestimate}$, that
  supports the following operations:
  \begin{itemize}
  \item Initialization:
        $\mathsf{QOverestimate}\parens{\sets{\cG^{(t)}: t \in \sqb{T}},
          \sets{\cR^{(t)} : t \in \sqb{T}},\mathcal{O}_H ,C,T}$,
      outputs an instance $\mathcal{Z}$ in
      $\widetilde O\parens{\sum_{t \in \sqb{T}} \tau_{t}}$ time, where
      $\tau_{t}$ denotes the needed time of both
      $\mathsf{UGraphStore}\parens{G^{(t)}}$ and
      $\mathsf{EffectiveResistance}\parens{G^{(t)},\epsilon}$.
  \item Query: $\cZ.\mathsf{Query}$, outputs a unitary such that, for every
      $e \in E $
      \begin{equation*}
      \cZ.\mathsf{Query}\ket{e}\ket{0}=\ket{e}\ket{z_e}
      \end{equation*}
      with
      \begin{equation*}
      z_e =  {C\cdot \frac{1}{T}\sum_{t \in \sqb{T}} \sum_{g \in S_e}  \ell_{e,g}^{(t)}}
      \end{equation*}
      where $ \ell_{e,g}^{(t)} = c_{e,g}^{(t)} R_g^{(t)}$, and
      $ {R_g}^{(t)}$ is the query result of $\cR^{(t)}$ on vertices of $g$.
      This query is executed in $\widetilde O\parens{r \sum \iota_t}$ time,
      where $\iota_t$ represents the time required for querying both
      $\mathsf{UGraphStore}$ and $\mathsf{EffectiveResistance}$ for
      $G^{(t)}$.
  \end{itemize}
\end{proposition}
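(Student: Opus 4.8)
The plan is to construct the data structure $\mathsf{QOverestimate}$ by simply \emph{bundling} the precomputed instances $\sets{\cG^{(t)}}$ and $\sets{\cR^{(t)}}$ together with the oracle $\cO_H$ and the scalars $C,T$, and to implement $\cZ.\mathsf{Query}$ as a coherent arithmetic circuit that, for a given $\ket{e}$, computes the double sum $\frac{C}{T}\sum_{t\in[T]}\sum_{g\in S_e}\ell_{e,g}^{(t)}$ reversibly. Since the initialization does no work beyond invoking the subordinate initializations, its time cost is exactly $\widetilde O\parens{\sum_{t\in[T]}\tau_t}$ by definition of $\tau_t$; I would state this in one line and move on.

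The heart of the argument is the query circuit. For a fixed iteration $t$, I would first apply $\cO_H$ (specifically $\cO_H^{\textup{vtx}}$ and $\cO_H^{\textup{size}}$) to obtain the vertices of $e$, then a unitary $U_{\textup{star}}$ to produce the at most $r-1$ edges $g\in S_e$ in a register of $r-1$ slots (padding with $\ket{0}$ when $\abs{e}<r$), exactly as in the proof of \cref{prop:weight-compute}. Then I query $\cG^{(t)}.\mathsf{Query}$ on each pair $(e,g)$ to obtain $c_{e,g}^{(t)}$, query $\cR^{(t)}.\mathsf{Query}$ on the vertices of each $g$ to obtain $R_g^{(t)}$, multiply to form $\ell_{e,g}^{(t)}=c_{e,g}^{(t)}R_g^{(t)}$ with $U_{\textup{mult}}$, and accumulate these with $U_{\textup{sum}}$ over the $O(r)$ edges in $S_e$; doing this for each $t\in[T]$ and summing gives $\sum_{t}\sum_{g}\ell_{e,g}^{(t)}$, after which a final multiplication by $C/T$ yields $z_e$. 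The padding edges contribute $c_{e,g}^{(t)}=0$ (by the ``performs no action'' clause, read as outputting $\ket{0}$ on out-of-range inputs) and so do not corrupt the sum. Crucially, after computing $z_e$ into the output register I would \emph{uncompute} every intermediate register by running the inverse circuit, so that the net effect is the claimed map $\ket{e}\ket{0}\mapsto\ket{e}\ket{z_e}$ with all ancillas returned to $\ket{0}$; this is the standard ``compute–copy–uncompute'' pattern and is what makes $\cZ.\mathsf{Query}$ a genuine unitary on the two named registers.

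For the complexity of a single query: within iteration $t$ there are $O(r)$ invocations each of $\cG^{(t)}.\mathsf{Query}$ and $\cR^{(t)}.\mathsf{Query}$ (whose costs are $\iota_t$ up to logs) plus $\widetilde O(r)$ elementary arithmetic operations on $\widetilde O(1)$-bit numbers, and $O(1)$ calls to $\cO_H$; uncomputation only doubles this. Summing over $t\in[T]$ gives query time $\widetilde O\parens{r\sum_t \iota_t}$, matching the stated bound. The correctness of the output value $z_e$ is immediate from the definition $\ell_{e,g}^{(t)}=c_{e,g}^{(t)}R_g^{(t)}$ together with the fact that $\cR^{(t)}.\mathsf{Query}$ returns the (approximate) effective resistance on the vertices of $g$ by \cref{prop:quantum-effective-resistance-oracle}, and from $\cG^{(t)}$ returning the stored weights by \cref{prop:quantum-underlying-graph-store}.

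The main obstacle I anticipate is bookkeeping rather than conceptual: one must be careful that $U_{\textup{star}}$ produces the edge list in a canonical order (so the circuit is well-defined across all branches of a superposition over $e$), that out-of-range queries to $\cG^{(t)}$ and $\cR^{(t)}$ return $\ket{0}$ rather than doing something unpredictable, and that \emph{every} scratch register — including the per-$g$ and per-$t$ partial sums and the vertex/size registers from $\cO_H$ — is uncomputed so that no garbage remains entangled with the output. The ``imprecision'' of $\cR^{(t)}$ is not a concern here: the proposition only asserts that $z_e$ equals the stated expression in terms of the oracle's (approximate) outputs $R_g^{(t)}$, so no error analysis is needed at this stage; the fact that these approximate values still yield a valid $O(n)$-overestimate is the content of the surrounding proof of \cref{thm:overestimate-quantum}, not of this proposition.
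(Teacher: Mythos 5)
Your proposal is correct and matches the paper's own proof essentially step for step: the same decomposition into $\cO_H$, $U_{\textup{star}}$, per-$t$ queries to $\cG^{(t)}.\mathsf{Query}$ and $\cR^{(t)}.\mathsf{Query}$, $U_{\textup{mult}}$, $U_{\textup{sum}}$, and a final scaling by $C/T$ (the paper realizes the latter via a small QRAM-stored unitary $U_{C,T}$, which is cosmetic). Your added remarks on uncomputation of scratch registers, canonical ordering in $U_{\textup{star}}$, and the fact that $\cR^{(t)}$'s approximation error is deferred to \cref{thm:overestimate-quantum} are all sound and make the paper's implicit details explicit.
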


\begin{proof}
  The case of initialization operation is straightforward.
  Aside from initializing for $\mathsf{UGraphStore}$ and
  $\mathsf{EffectiveResistance}$, we store $C,T$ in QRAM in
  $\widetilde O\parens{1}$ time, allowing access through a unitary
  $U_{C,T}:\ket{i}\ket{0}\to \ket{i}\ket{C/T}$.
  For the query operation, we consider the following procedure:
  \begin{align*}
  \ket{e}\ket{0}
  & \xmapsto{\cO_H, U_{\textup{star}}}
    \ket{e}\paren{\otimes_{t=1}^{T}\paren {\otimes_{g\in S_e} \ket{g} \ket{0}}}\ket{0} \\
  & \xmapsto{\cG^{(t)}.\mathsf{Query},\cR^{(t)}.\mathsf{Query}}
    \ket{e}\paren{\otimes_{t=1}^{T}\paren {\otimes_{g\in S_e} \ket{g} \kets{c_{e,g}^{(t)}}
    \kets{R_g^{(t)}}\kets{0}}}\ket{0} \\
  & \xmapsto{U_{\textup{mult}}}
    \ket{e}\paren{\otimes_{t=1}^{T}\paren {\otimes_{g\in S_e} \ket{g} \kets{c_{e,g}^{(t)}}
    \kets{R_g^{(t)}}\kets{\ell_{e,g}^{(t)}}}}\ket{0} \\
  & \xmapsto{U_{\textup{sum}}}
    \ket{e}\paren{\otimes_{t=1}^{T}\paren {\otimes_{g\in S_e}  \ket{g} \kets{c_{e,g}^{(t)}}
    \kets{R_g^{(t)}}\kets{\ell_{e,g}^{(t)}}}}\kets{\sum_{t}\Delta_e^{(t)}}\ket{0} \\
  & \xmapsto{U_{C,T}, U_{\textup{mult}}}
    \ket{e}\paren{\otimes_{t=1}^{T}\paren {\otimes_{g\in S_e}  \ket{g} \kets{c_{e,g}^{(t)}}
    \kets{R_g^{(t)}}\kets{\ell_{e,g}^{(t)}}}}\kets{\sum_{t}\Delta_e^{(t)}}\ket{z_e}
  \end{align*}
  where $\Delta_e^{(t)} $ represents the sum
  $\sum_{g \in S_e}c_{e,g}^{(t)} R_g^{(t)}$, and
  $U_{\textup{clique}}, U_{\textup{mult}},U_{\textup{sum}} $ denote basic
  arithmetic operations of clique generation, multiplication, and addition,
  respectively, as previously.
  The procedure can be executed in $O\parens{r \sum_{t \in \sqb{T}}\iota_t}$
  time, since $\abss{S_e}= O\parens{r}$ for $\forall e \in E$.
\end{proof}

\begin{proposition}\label{prop:claim-QHLSO}
  Let $\cZ$ be the output of
  $\mathsf{QHLSO}\parens{\cO_H,T, \alpha_1, \alpha_2}$ (\cref{alg:QHLSO}).
  Then, the vector $z$ stored in $\cZ$ is a $\nu$-overestimate for $H$, where
  $\nu=\parens{1+\alpha_2}C_1n$ and $C_1$ is
  determined by $ \alpha_1, \alpha_2,r,T$, as defined in line 8 of
  \cref{alg:QHLSO}.
\end{proposition}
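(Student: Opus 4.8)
The plan is to check the two conditions in \cref{def:hy3peredge-overestimate}: the bound $\norms{z}_1\le\nu$, and the existence of an underlying graph $G^{\star}$ of $H$ with $z_e\ge\ell_e$ for every $e\in E$, where $\ell_e=w_eR_e$ and $R_e=\max_{f\in\binom e2}R_f$ are computed in $G^{\star}$. By \cref{prop:overestimate-preparation} the stored vector has entries
\[
z_e=C_1\cdot\tfrac1T\sum_{t\in[T]}\Delta_e^{(t)},\qquad \Delta_e^{(t)}:=\sum_{g\in S_e}\widetilde c_{e,g}^{(t)}\,R_g^{(t)},
\]
where $\widetilde c^{(t)}$ is the weight vector of the sparsifier $\widetilde G^{(t)}$ held in $\cG^{(t)}$ and $R_g^{(t)}$ is the estimate from $\cR^{(t)}$. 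I would argue on the event, of probability $1-O(1/n)$ by a union bound over the $T=O(\log r)$ rounds, that every $\mathsf{GraphSparsify}$ and $\mathsf{EffectiveResistance}$ call succeeds; then for each $t$ the graph $\widetilde G^{(t)}$ is an $\alpha_1$-spectral sparsifier of the pre-sparsification underlying graph $G^{(t)}=(V,F^{(t)},c^{(t)})$ (\cref{thm:apers-spectral-sparse}) and $(1-\alpha_2)\widehat R_g^{(t)}\le R_g^{(t)}\le(1+\alpha_2)\widehat R_g^{(t)}$ for the true effective resistance $\widehat R_g^{(t)}$ of $g$ in $\widetilde G^{(t)}$ (\cref{prop:quantum-effective-resistance-oracle}). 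The key structural fact is that each $G^{(t)}$ is a genuine sparse underlying graph of $H$: for $t=1$ because $c_{e,f}^{(1)}=w_e/(\abs e-1)$ (\cref{prop:weight-init}), and for $t\ge2$ because the $\mathsf{WeightCompute}$ update $c_{e,f}^{(t+1)}=\widetilde c_{e,f}^{(t)}R_f^{(t)}w_e/\Delta_e^{(t)}$ (\cref{prop:weight-compute}) is normalized so that $\sum_{f\in S_e}c_{e,f}^{(t+1)}=w_e$. Hence the arithmetic mean $\widehat c_{e,f}:=\tfrac1T\sum_{t\in[T]}c_{e,f}^{(t)}$ is again a valid sparse underlying graph, and I would take $G^{\star}:=(V,F,\widehat c)$ as the witness.

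For the norm bound, apply Foster's theorem (\cref{le:Foster}) to each sparsifier: $\sum_{e\in E}\Delta_e^{(t)}\le(1+\alpha_2)\sum_{f\in\widetilde F^{(t)}}\widetilde c_f^{(t)}\widehat R_f^{(t)}\le(1+\alpha_2)(n-1)$, and averaging over $t$ gives $\norms{z}_1=C_1\tfrac1T\sum_t\sum_e\Delta_e^{(t)}\le(1+\alpha_2)C_1 n=\nu$.

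The entrywise bound is the substantive part. Fix $e$ and a star edge $g\in S_e$. By convexity of $\log R_g(\cdot)$ (\cref{le:effective-resistance-convexity}) and Jensen's inequality, $R_g(\widehat c)\le\bigl(\prod_{t\in[T]}R_g(c^{(t)})\bigr)^{1/T}$, and combining the sparsifier guarantee with the resistance-oracle error, $R_g(c^{(t)})\le(1+\alpha_1)\widehat R_g^{(t)}\le\frac{1+\alpha_1}{1-\alpha_2}R_g^{(t)}$, so $R_g(\widehat c)\le\frac{1+\alpha_1}{1-\alpha_2}\bigl(\prod_t R_g^{(t)}\bigr)^{1/T}$. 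Next, for a star edge $g$ that survives all $T$ rounds (so $c_{e,g}^{(t)}>0$ throughout), the recursion gives $\frac{c_{e,g}^{(t+1)}}{c_{e,g}^{(t)}}=\frac{\widetilde c_{e,g}^{(t)}}{c_{e,g}^{(t)}}\cdot\frac{w_eR_g^{(t)}}{\Delta_e^{(t)}}\ge\frac{w_eR_g^{(t)}}{\Delta_e^{(t)}}$, using the bound $\widetilde c_{e,g}^{(t)}\ge c_{e,g}^{(t)}$ for edges retained by $\mathsf{GraphSparsify}$; multiplying over $t\in[T]$ and using $c_{e,g}^{(T+1)}\le w_e$, $c_{e,g}^{(1)}=w_e/(\abs e-1)$ yields $\prod_t R_g^{(t)}\le\frac{c_{e,g}^{(T+1)}}{c_{e,g}^{(1)}}\prod_t\frac{\Delta_e^{(t)}}{w_e}\le(r-1)\prod_t\frac{\Delta_e^{(t)}}{w_e}$, whence by AM--GM $\bigl(\prod_t R_g^{(t)}\bigr)^{1/T}\le(r-1)^{1/T}\cdot\frac1{w_e}\cdot\frac1T\sum_t\Delta_e^{(t)}=(r-1)^{1/T}\cdot\frac{z_e}{C_1w_e}$. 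Finally, effective resistance is a metric on $V$, so every $f=\{u,v\}\in\binom e2$ obeys $R_f(\widehat c)\le R_{\{u,a_e\}}(\widehat c)+R_{\{a_e,v\}}(\widehat c)\le2\max_{g\in S_e}R_g(\widehat c)$, hence $R_e\le2\max_{g\in S_e}R_g(\widehat c)\le 2\,\frac{1+\alpha_1}{1-\alpha_2}(r-1)^{1/T}\cdot\frac{z_e}{C_1 w_e}$. Since $C_1=2\,\frac{1+\alpha_2}{1-\alpha_1}\exp\!\bigl(\tfrac{\log r}{T}\bigr)\ge 2\,\frac{1+\alpha_1}{1-\alpha_2}(r-1)^{1/T}$ --- a direct check, using $\exp(\tfrac{\log r}{T})\ge(r-1)^{1/T}$ and the regime $\alpha_1=\alpha_2$ of \cref{alg:QHLSO} (more generally $\alpha_2\le\alpha_1$) --- this gives $\ell_e=w_eR_e\le z_e$.

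I expect the crux to be the interface between the sparsification and the reweighting. Two gaps in the sketch above need care. First, the telescoping used $\widetilde c_{e,g}^{(t)}\ge c_{e,g}^{(t)}$ for each retained star edge, a property that must be extracted from the construction of $\mathsf{GraphSparsify}$ (leverage-score sampling with upward rescaling). Second, and more delicate, the telescoping only controls star edges surviving all $T$ rounds; a star edge dropped by some intermediate sparsifier still carries weight $\ge w_e/(T(r-1))$ in $G^{\star}$ (inherited from $c^{(1)}$), so its effective resistance in $G^{\star}$ must be bounded separately. The natural route is: an edge dropped in round $t$ had leverage score below the sparsification threshold, hence small effective resistance relative to its weight in $G^{(t)}$; transferring this to $G^{\star}$ through $L_{G^{\star}}\succeq\tfrac1T L_{G^{(t)}}$ (which holds entrywise since $\widehat c\ge\tfrac1T c^{(t)}$) and Rayleigh monotonicity then shows such edges do not realize the maximum in $R_e$. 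Making this rigorous while keeping the $\alpha_1$- and $\alpha_2$-errors from compounding across the $O(\log r)$ rounds is the technically demanding step.
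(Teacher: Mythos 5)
Your sketch tracks the paper's argument closely: the same averaged witness graph (you correctly use the pre-sparsification weights $c^{(t)}$, which satisfy the underlying-graph constraint), Foster's theorem plus the $(1+\alpha_2)$ factor for $\norm{z}_1$, and, for the entrywise bound, the metric property of effective resistance, convexity of $\log R_f$ with Jensen, a telescoping product driven by the $\mathsf{WeightCompute}$ recursion, and AM--GM. The structural divergence is in how the telescope is carried out. The paper telescopes $\frac{1}{T}\sum_t\log\!\bigl(c_{e,f^\star}^{(t+1)}/\widetilde c_{e,f^\star}^{(t)}\bigr)$ (written there as $\widetilde c^{(t+1)}/\widetilde c^{(t)}$, eliding that the numerator is a pre-sparsification weight while the denominator is a sparsifier weight), asserting that the sum collapses to $\frac{1}{T}\log\!\bigl(c^{(T+1)}/c^{(1)}\bigr)$. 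You instead telescope the genuinely collapsing ratio $c_{e,g}^{(t+1)}/c_{e,g}^{(t)}$, at the cost of inserting the inequality $\widetilde c_{e,g}^{(t)}\ge c_{e,g}^{(t)}$. These are mathematically equivalent routes once that inequality is granted, and the constants match (your $\frac{1+\alpha_1}{1-\alpha_2}\le 1+\alpha_3$ check is correct for $\alpha_2\le\alpha_1$).

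The two gaps you flag are genuine and you are right to worry about them. The bound $\widetilde c_{e,g}^{(t)}\ge c_{e,g}^{(t)}$ for retained edges is \emph{not} a consequence of \cref{thm:apers-spectral-sparse}, which only promises a spectral sparsifier; it requires opening up $\mathsf{GraphSparsify}$ and holds for the specific importance-sampling construction but is not a property of spectral sparsifiers in general. And the dropped-edge case does need a separate argument: once a star edge $g$ falls out of $\widetilde F^{(t_0)}$, the $\mathsf{WeightCompute}$ recursion sets $c_{e,g}^{(t)}=0$ for all $t>t_0$, and the multiplicative chain you build (as well as the paper's identity $\widetilde R_{f^\star}^{(t)}=c_{e,f^\star}^{(t+1)}\Delta_e^{(t)}/(\widetilde c_{e,f^\star}^{(t)}w_e)$) becomes $0/0$. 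Both of these issues are present in the paper's own proof as written, which does not distinguish $c^{(t)}$ from $\widetilde c^{(t)}$ in the telescoping step and does not treat dropped edges at all; so your reformulation exposes, rather than introduces, the places where the argument is underspecified. To make the proof airtight you would need either to open $\mathsf{GraphSparsify}$ to get the monotonicity property and then execute the dropped-edge argument you sketch (using $L_{G^\star}\succeq\frac1T L_{G^{(t)}}$ and Rayleigh monotonicity), or to reorganize the algorithm so that $\mathsf{WeightCompute}$ is driven by the pre-sparsification oracle $U_{G^{(t)}}$ and the sparsifier $\widetilde G^{(t)}$ is used only inside $\mathsf{EffectiveResistance}$, which makes the ratio $c^{(t+1)}/c^{(t)}$ appear natively at the cost of a mildly worse constant $(1+\alpha_3)$ in the $\norm{z}_1$ bound.
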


\begin{proof}
  In the algorithm, $\widetilde G^{(t)}$ is a $\alpha_{1}$-spectral sparsifier
  of $G^{(t)}$.
  Let $R^{(\tilde t)}_{f}$ and $R^{(t)}_{f}$ be the effective resistances of
  $\widetilde G^{(t)}$ and $G^{(t)}$ respectively.
  Since effective resistances correspond to quadratic forms in the
  pseudo-inverse of the Laplacian, we have
  $\frac{1}{1+\alpha_1} R_f^{(t)} \leq R^{(\tilde t)}_f \leq
  \frac{1}{1-\alpha_1} R_f^{(t)} , \forall f \in F$.
  According to \cref{prop:quantum-effective-resistance-oracle}, we know that
  $\paren{1-\alpha_2}R_f^{(\tilde t)} \leq \widetilde R_f^{(t)} \leq
  \paren{1+\alpha_2}R_f^{(\tilde t)} ,\forall f \in F$.
  Combining two inequalities we obtain
  \begin{equation*}
  \frac{1-\alpha_2}{1+\alpha_1}\cdot R_f^{(t)} \leq \widetilde R_f^{(t)}
  \leq \frac{1+\alpha_2}{1- \alpha_1}\cdot R_f^{(t)},\quad \forall f\in F.
  \end{equation*}
  Let
  $\alpha_3:=\max \bigl\{1- \frac{1-\alpha_2}{1+\alpha_1},
  \frac{1+\alpha_2}{1- \alpha_1}-1 \bigr\} = \frac{\alpha_1+\alpha_2}{1-\alpha_1}$,
  then $\widetilde R_f^{(t)} $ is an $\alpha_3$-approximate of $R_f^{(t)}$ for
  $\forall f \in F$.
  We define
  $\widetilde \ell_{e,f}^{(t)}= \widetilde c_{e,f}^{(t)} \widetilde R_f^{(t)} $.

  We will show that $z$ is a $\nu$-overestimate with corresponding underlying
  graph $G =\paren{V,F , \bar c}$, where
  $\bar c = \frac{1}{T}\sum_{t\in \sqb{T}} \widetilde c^{(t)} $.
  Specifically, we need to verify the following two conditions:
  \begin{enumerate}
  \item $\norm{z}_1 \leq \nu$,
  \item $z_e \geq w_e R_{e}$ for all $e \in E$ where
      $R_e=\max\sets{R_f: f \in \binom{e}{2}}$.
  \end{enumerate}

  We show the first condition first.
  As $\widetilde \ell_{e,f}^{(t)} = \widetilde c_{e,f}^{(t)} \widetilde
  R_{f}^{(t)} \le (1+\alpha_{2}) \widetilde c_{e,f}^{(t)} R_{f}^{(\tilde t)}$, we have
  \begin{equation*}
  \norm{z}_1 =\sum_{e \in E}  C_1\frac{1}{T}\sum_{t \in \sqb{T}}
  \sum_{g \in S_e}\widetilde \ell_{e,g}^{(t)}
  = C_1\cdot \frac{1}{T}\sum_{t \in \sqb{T}}
  \paren{\sum_{e \in E}\sum_{g \in S_e}\widetilde \ell_{e,g}^{(t)}}
  \leq C_1 \parens{1+\alpha_2} n
  \end{equation*}
  where the final inequality is derived from \cref{le:Foster}.

  We now prove that the second condition also holds.
  For any $e \in E $ we fix $a \in e$.
  Since effective resistance is a metric on vertices, for any $u,v \in e $, it
  follows that
  \begin{equation*}
  R_{u v}\leq R_{u  a}+ R_{av}.
  \end{equation*}
  Consequently, at least one of the two terms on the RHS must be at least
  $R_{u v}/2$.
  By taking the maximum on both sides, we obtain
  \begin{equation*}
  \max_{u,v \in e} R_{uv}\leq 2 \max_{u \in e} R_{au}
  \end{equation*}
  Thus, for any $ e \in E$, we have
  \begin{align*}
  \log \paren{w_e R_e \parens{\overline c}}
  & \leq \log \paren{w_e \cdot 2\max\sets{{R_f(\overline c)}:f \in S_e}} \\
  & \overset{\textup{denote}}{=}\log \parens{w_e \cdot 2 R_{f^{\star}}\parens{\overline c}} \\
  & \overset{(a)}{\leq} \frac{1}{T}\sum_{t \in \sqb{T}} \log \paren{2w_e R_{f^\star} \paren{c^{(t)}}} \\
  & \leq \frac{1}{T}\sum_{t \in \sqb{T}} \log \paren{2 w_e\paren{1+\alpha_3}
    \widetilde R_{f^\star} ^{(t)}}=\frac{1}{T}\sum_{t \in \sqb{T}} \log \paren{2\paren{1+\alpha_3}w_e
    \cdot  \widetilde \ell_{e,f^\star}^{(t)} /\widetilde c_{e,f^\star}^{(t)}} \\
    & \overset{(b)}{=}\frac{1}{T}\sum_{t \in \sqb{T}}
    \paren{\log \paren{\frac{\widetilde c_{e,f^\star}^{(t+1)}\cdot  \sum_{g \in S_e}\widetilde \ell_{e,g}^{(t)}}{ \widetilde c_{e,f^\star}^{(t)}}}}+\log \parens {2\parens{1+\alpha_3}} \\
  & =\frac{1}{T}\sum_{t \in \sqb{T}}
    \paren{\log \paren{\frac{\widetilde c_{e,f^\star}^{(t+1)}}{ \widetilde c_{e,f^\star}^{(t)}}} +
    \log\paren{\sum_{g \in S_e}\widetilde \ell_{e,g}^{(t)}}}+\log \parens {2\parens{1+\alpha_3}} \\
  & \overset{(c)} {\leq} \frac{1}{T}
    \log\paren{ \frac{\widetilde c_{e,f^\star}^{(T+1)}}{\widetilde c_{e,f^\star}^{(1)}}} +
    \log \paren{\frac{1}{T}\sum_{t \in \sqb{T}}\sum_{g \in S_e}\widetilde \ell_{e,g}^{(t)}} +
    \log \parens{2\parens{1+\alpha_3}} \\
  & \overset{(d)} {\leq} \frac{1}{T} \log r + \log \paren{\frac{1}{T}\sum_{t \in \sqb{T}}
    \sum_{g \in S_e}\widetilde \ell_{e,g}^{(t)}}+\log\parens{2\parens{1+\alpha_3}} \\
  & \overset{(e)}{=} \log \paren{z_e}
  \end{align*}
  where inequality $(a)$ holds since $\log \paren{R_f (c)}$ is convex with
  respect to $c$ (see~\cref{le:effective-resistance-convexity}), equality $ (b)$
  follows from definition of $\mathsf{WeightCompute}$
  (\cref{eq:weight-compute}), inequality $(c)$ follows from the concavity of
  $\log$, and inequality $(d)$ arises from the fact that
  \[
  \frac{\widetilde c_{e,f^\star}^{(T+1)}}{\widetilde c_{e,f^\star}^{(1)}}\leq
  \frac{w_e}{{w_e}/ \parens{\abss{e} -1}}\leq r,
  \]
  the last equality $(e)$ follows directly from the definition of $z_e$ in
  \cref{prop:overestimate-preparation}, with the parameter $C$ selected as in
  line 8 of \cref{alg:QHLSO}.
\end{proof}

\begin{proof}[Proof of \cref{thm:overestimate-quantum}]
  By taking $\alpha_1=\alpha_2=0.1$ as constants and $T =\log r $, $z$ becomes
  a $4n$-overestimate according to \cref{prop:claim-QHLSO}.
  It remains to analyze the time complexity of the algorithm.

  First, we note that the $\mathsf{WeightInitialize}(\cO_H)$ procedure runs in
  $\widetilde O \parens{1}$ time, as established in \cref{prop:weight-init}.
  In each round $t \in \sqb{T}$, $\mathsf{GraphSparsify}(U_{G (t)}, \alpha_1)$
  is executed in $\widetilde O\parens{r\sqrt{mnr}}$ time, following
  \cref{thm:apers-spectral-sparse}, where $\widetilde O\parens {r}$ accounts for
  the query cost of $U_{G (t)}$.
  The resulting graph $\widetilde G^{(t)} $ is a $\alpha_1$-spectral sparsifier
  of $G^{(t)}$, and a crucial fact is that $\widetilde G^{(t )}$ is sparse and
  the number of edges in $\widetilde G^{(t)}$ is $\widetilde O\parens{n}$.
  Hence, we can initialize the data structures
  $\mathsf{UGraphStore}\parens{\widetilde G^{(t )}}$ and
  $ \mathsf{EffectiveResistance}(\widetilde G^{(t)}, \alpha_2) $ in
  $\widetilde O\parens{n}$ time, as per
  \cref{prop:quantum-effective-resistance-oracle} and
  \cref{prop:quantum-underlying-graph-store}.
  The procedure $\mathsf{WeightCompute} \parens{\cO_H, \cR^{(t)}, \cG ^{(t)}}$
  provides $U_{G (t+1)}$, with each query taking $\widetilde O\parens{r}$ time
  according to \cref{prop:weight-compute}.
  In the final step of the algorithm, $\cZ$ can be initialized in
  $\widetilde O \parens{n}$ time and each $\cZ.\mathsf{Query}$ can be executed
  in $\widetilde O\parens{r} $ time, following
  \cref{prop:overestimate-preparation} with $\tau_t =\widetilde O\parens {n}$
  and $\iota_t = \widetilde O\parens {1}$.

  To summarize, the total preprocessing time is
  $\widetilde O\parens{r \sqrt{mnr}}$, and the per-query time is
  $\widetilde O\parens{r}$.
\end{proof}

\section{Proof of \cref{thm:quantum-hypergraph-sparsification}}\label{sec:proof-of-hypergraph-sparsification}

The proof of correctness for the algorithm follows closely the chaining proofs
in~\citet{lee2023spectral} and~\citet{jambulapati2023chaining}. In particular,
we rely on the following crucial technical bound from~\citet{lee2023spectral},
which is derived using Talagrand's generic chaining method.

\begin{lemma}[Corollary 2.13 in~\citet{lee2023spectral}]\label{le:core-lemma-in-lee2023spectral}
  Let $A: \mathbb{R}^n \to \mathbb{R}^s$ be a linear map, with
  $a_1,\ldots ,a_s$ representing the rows of $A$.
  The functions $\phi_1,\ldots,\phi_m:\mathbb{R}^s \to \mathbb{R}$ are in the
  form of $\phi_i \parens{x}=\max_{j \in S_i} w_i\abss{\angs{a_j,x}}$ for some
  $S_i \subseteq \sqb{s}$ and $w \in \sqb{0,1}^{S_i}$.
  Let $D =\max_{i \in \sqb{m}}\abs{S_i}$.
  Then, for any $T\subseteq B^n_2:=\sets{x \in \mathbb{R}^n : \norm{x}^2\leq 1}$, the
  following inequality holds:
  \begin{equation*}
    \E \sup_{x \in T} \sum_{j =1}^m \xi_j \phi_{j}\parens{x}^2 \leq C_{0} \cdot
    \norm{A}_{2 \to \infty} \sqrt{\log\parens{s+n}\log D} \cdot
    \sup_{x \in T} \paren{\sum_{j =1}^m \phi_{j}\parens{x}^2}^{1/2},
  \end{equation*}
  for some universal constant $C_{0}$.
  The variables $\xi_1,\ldots,\xi_m$ are i.i.d.
  Bernoulli random variables taking values $\pm1$, and
  $\norms{A}_{2 \to \infty}$ is defined by
  $\norms{A}_{2 \to \infty}:=\max\sets{\norm{Ax}_{\infty}: x\in B^n_2}$.
\end{lemma}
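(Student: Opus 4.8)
The plan is to read the left-hand side as the supremum of a \emph{Bernoulli process} and control it by generic chaining; no symmetrization step is needed, since the randomness is already $\pm1$. Set $Z_x := \sum_{j=1}^m \xi_j\,\phi_j(x)^2$ for $x\in T$. Because $\E\xi_j=0$, every $Z_x$ is centered, so for any fixed $x_0\in T$ we have $\E\sup_{x\in T}Z_x\le\E\sup_{x\in T}\parens{Z_x-Z_{x_0}}$. For fixed $x,y$ the increment $Z_x-Z_y=\sum_j\xi_j\parens{\phi_j(x)^2-\phi_j(y)^2}$ is a Rademacher sum, hence sub-Gaussian with variance proxy $d(x,y)^2:=\sum_j\parens{\phi_j(x)^2-\phi_j(y)^2}^2$. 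By Talagrand's majorizing measure theorem for sub-Gaussian processes, $\E\sup_{x\in T}\parens{Z_x-Z_{x_0}}\le C\,\gamma_2\parens{T,d}$, so it remains to bound $\gamma_2\parens{T,d}$.

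First I would reduce the canonical metric $d$ to a rescaled $\ell_\infty$-metric. Factor $\phi_j(x)^2-\phi_j(y)^2=\parens{\phi_j(x)-\phi_j(y)}\parens{\phi_j(x)+\phi_j(y)}$. Each $\phi_j$ is $1$-Lipschitz with respect to the seminorm $x\mapsto\max_{k\in S_j}w_j\abss{\angs{a_k,x}}$, which is dominated by $\norms{A(x-y)}_\infty$ since $w_j\le1$; hence $\abss{\phi_j(x)-\phi_j(y)}\le\norms{A(x-y)}_\infty$. Combining with $\sum_j\parens{\phi_j(x)+\phi_j(y)}^2\le 2\sum_j\parens{\phi_j(x)^2+\phi_j(y)^2}\le 4R^2$, where $R:=\sup_{z\in T}\parens{\sum_j\phi_j(z)^2}^{1/2}$, yields $d(x,y)\le 2R\,\norms{A(x-y)}_\infty$. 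By homogeneity of $\gamma_2$ in the metric, $\gamma_2\parens{T,d}\le 2R\,\gamma_2\parens{T,\norms{A\cdot}_\infty}$, so the whole problem reduces to estimating the $\gamma_2$-functional of $T\subseteq B^n_2$ under the pullback by $A$ of the $\ell_\infty^s$-metric.

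For that estimate I would combine two covering-number bounds at complementary scales. Since every row satisfies $\norm{a_k}_2\le\norms{A}_{2\to\infty}$, Maurey's empirical method (dual Sudakov) gives $\log N\parens{B^n_2,\norms{A\cdot}_\infty,\epsilon}\lesssim\parens{\norms{A}_{2\to\infty}/\epsilon}^2\log s$ for coarse $\epsilon$, while the volumetric bound $\log N\parens{B^n_2,\norms{A\cdot}_\infty,\epsilon}\le n\log\parens{3\norms{A}_{2\to\infty}/\epsilon}$ (from $\norms{Ax}_\infty\le\norms{A}_{2\to\infty}\norm{x}_2$) wins for fine $\epsilon$; splitting Dudley's entropy integral at the crossover $\epsilon_0\approx\norms{A}_{2\to\infty}\sqrt{(\log s)/n}$ produces $\gamma_2\parens{T,\norms{A\cdot}_\infty}\lesssim\norms{A}_{2\to\infty}\sqrt{\log s}\,\log n$, i.e. the claimed factor up to an extra $\log$ and with no dependence on the arity $D=\max_j\abss{S_j}$. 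The point of $D$ is that each $\phi_j$ is a maximum of only $\le D$ linear forms, and this is exactly what should convert the extra $\log$ loss into the sharp factor $\sqrt{\log D}$. To capture it I would instead run a refined two-level generic chaining in which, at the appropriate scale, selecting the index $k\in S_j$ active at a given point is paid for as a maximum of at most $D$ sub-Gaussian variables (contributing $\sqrt{\log D}$), while the geometric chaining of $T$ under $\norms{A\cdot}_\infty$ separately contributes $\norms{A}_{2\to\infty}\sqrt{\log(s+n)}$, and these two complexities combine \emph{inside} the square root.

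The hard part is precisely this coupling. A naive treatment forces the ``max'' arity and the $\ell_\infty$-chaining to multiply, producing something like $\norms{A}_{2\to\infty}\sqrt{D\log(s+n)}$ or $\norms{A}_{2\to\infty}\sqrt{\log s}\,\log n$ rather than $\norms{A}_{2\to\infty}\sqrt{\log(s+n)\log D}$; avoiding that requires Talagrand's generic chaining (rather than Dudley's integral) organized so that the selection of the active index inside each $\phi_j$ is resolved at a single well-chosen level of the chain — this is the technical core of Lee's argument. Everything above it, namely the symmetrization-free setup, the Lipschitz reduction of the canonical metric to a rescaled $\ell_\infty$-metric, and the covering-number estimates, is routine, and I would dispatch it quickly.
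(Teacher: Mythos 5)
First, note that the paper does not prove this lemma at all: it is imported verbatim as Corollary~2.13 of \citet{lee2023spectral}, so there is no in-paper proof to match your argument against. Judged on its own terms, your proposal has a genuine gap at exactly the point you flag. The setup (no symmetrization needed, sub-Gaussianity of $Z_x=\sum_j\xi_j\phi_j(x)^2$ with respect to $d(x,y)=\bigl(\sum_j(\phi_j(x)^2-\phi_j(y)^2)^2\bigr)^{1/2}$, and the Lipschitz factorization $d(x,y)\le 2R\,\norms{A(x-y)}_\infty$) is correct. But once you pass to $\gamma_2\parens{T,\norms{A\cdot}_\infty}$, the parameter $D$ has been erased: that functional depends only on $A$ and $T$, not on the arities $\abss{S_j}$, so no refinement of the covering-number analysis of $\parens{T,\norms{A\cdot}_\infty}$ can ever produce the factor $\sqrt{\log D}$. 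Your own Dudley computation confirms this, landing at $\norms{A}_{2\to\infty}\sqrt{\log s}\,\log n$, which is strictly weaker than the claimed $\norms{A}_{2\to\infty}\sqrt{\log(s+n)\log D}$ whenever $D$ is small (e.g.\ $D=O(1)$ for constant-rank hypergraphs, which is precisely the regime the paper cares about).

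The "refined two-level generic chaining in which the selection of the active index is paid for as a maximum of $D$ sub-Gaussian variables" is not an argument but a description of the theorem you are trying to prove; making it work requires abandoning the clean reduction to a single metric and instead chaining the process $x\mapsto\sum_j\xi_j\phi_j(x)^2$ directly, with admissible sequences whose cardinalities are budgeted between the geometry of $T$ under $\norms{A\cdot}_\infty$ and the per-function index selection — this is the actual content of Lee's Corollary~2.13 and of the closely related arguments in \citet{jambulapati2023chaining}. Everything before that point in your writeup is routine and correct; the lemma itself remains unproven by your sketch.
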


To provide clarity on how this lemma applies, we explain its connection to
hypergraphs. Let $H = (V, E, w)$ be a hypergraph with $\abss{E}=m, \abss{V}=n$,
and weights $w \in \mathbb{R}^E_{\geq 0}$. The rank of the hypergraph is $r =\max_{e
    \in E}\abss{e}$. In the lemma, $n$ and $m$ correspond to the number of
vertices and hyperedges, respectively. The functions $\phi_i$ capture the
maximization over all edges in the clique generated by replacing each
hyperedge, corresponding to the energy of the hyperedges. The number $s$ refers
to the number of edges in the complete graph $K_n$, i.e., $s = n(n-1)/2$. The
parameter $D$ represents the maximum number of edges in the clique generated by
replacing each hyperedge, i.e., $D = r(r-1)/2$.

Before proving the \cref{thm:quantum-hypergraph-sparsification}, we
present the following fact.
\begin{proposition}\label{prop:energy-comparison}
  Let $H=\parens{V,E,w}$ be a hypergraph with $n$ vertices, and let
  $G=\parens{V,F,c}$ represent its underlying graph.
  For any $x \in \mathbb{R}^n$ such that $x\perp 1$, the inequality
  $Q_H\parens{L_G^{+ /2}x}\geq \norm{x}^2$ holds.
\end{proposition}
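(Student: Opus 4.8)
The plan is to reduce the statement to the pointwise comparison $Q_H(y) \ge y^\top L_G y$ for \emph{every} $y \in \mathbb{R}^n$, and then to plug in $y = L_G^{+/2}x$ and read off the claimed bound from the spectral theory of the Laplacian pseudoinverse.

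First I would prove the pointwise inequality. Fix a hyperedge $e$. By \cref{def:underlying-graph}, the edges of $G$ coming from $e$ are exactly $\{(e,f): f\in\binom e2\}$ with nonnegative weights satisfying $\sum_{f\in\binom e2}c_{e,f}=w_e$. Writing $f=\{i,j\}$, a weighted average is at most the weighted maximum, so
\[
\sum_{f=\{i,j\}\in\binom e2} c_{e,f}\,(y_i-y_j)^2
\;\le\;\Bigl(\sum_{f\in\binom e2}c_{e,f}\Bigr)\max_{\{i,j\}\subseteq e}(y_i-y_j)^2
\;=\;w_e\,Q_e(y).
\]
Summing over $e\in E$, the left-hand side is $y^\top L_G y$ by \cref{eq:quadratic-laplacian} and the right-hand side is $Q_H(y)$ by \cref{eq:energy}, giving $y^\top L_G y \le Q_H(y)$ for all $y$.

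Next I would substitute $y=L_G^{+/2}x$. Since $L_G^{+/2}$ and $L_G$ are both functions of $L_G$ through its spectral decomposition, they commute and $L_G^{+/2}L_G L_G^{+/2}=L_G^{+}L_G=\Pi$, the orthogonal projector onto the range of $L_G$, i.e.\ onto $(\ker L_G)^\perp$. Hence
\[
Q_H\paren{L_G^{+/2}x}\;\ge\;\paren{L_G^{+/2}x}^\top L_G\paren{L_G^{+/2}x}\;=\;x^\top\Pi x\;=\;\norm{\Pi x}^2 .
\]
It then remains to argue $\Pi x = x$: because the underlying graph $G$ is connected (each hyperedge induces a connected clique on its vertices), $\ker L_G=\operatorname{span}(\mathbf 1)$, so the hypothesis $x\perp\mathbf 1$ places $x$ in $(\ker L_G)^\perp$, whence $\Pi x=x$ and $\norm{\Pi x}^2=\norm{x}^2$.

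The only delicate point is this last step: the equality $x^\top\Pi x=\norm{x}^2$ really requires $x$ to be orthogonal to the \emph{entire} kernel of $L_G$, not merely to $\mathbf 1$, which is why connectivity of $G$ must be invoked (if $G$ were disconnected one would only get $Q_H(L_G^{+/2}x)\ge\norm{\Pi x}^2$, possibly strictly below $\norm x^2$). Everything else is a routine combination of the "average $\le$ maximum" bound with the standard identity $L_G^{+}L_G=\Pi$, so I do not anticipate further obstacles.
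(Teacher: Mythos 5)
Your argument is correct and mirrors the paper's proof exactly: establish the pointwise bound $Q_H(y) \ge y^\top L_G y$ by an ``average $\le$ maximum'' argument over each hyperedge, then substitute $y = L_G^{+/2}x$ and use $L_G^{+/2}L_G L_G^{+/2} = \Pi$. You are in fact more careful than the paper, which silently passes from $x^\top \Pi x$ to $\norm{x}^2$ in its closing line ``taking $v = L_G^{+/2}x$ achieves the desired''; your remark that this step needs $\ker L_G = \operatorname{span}(\mathbf 1)$ (i.e.\ connectivity of $G$) is a genuine, if standard, hidden assumption that the paper leaves implicit.
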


\begin{proof}
  For any $v \in \mathbb{R}^n$, we have
  \begin{align*}
    Q_H\paren{v}
    & = \sum_{e \in E} w_e \max_{\sets{i,j} \subseteq e}\parens{v_i -v_j}^2 \\
    & = \sum_{e \in E} \parens{\sum_{\sets{i,j}\subseteq e} c_{ij}^e}
      \max_{\sets{i,j} \subseteq e}\parens{v_i -v_j}^2\\
    & \geq \sum_{e \in E} \sum_{\sets{i,j}\subseteq e}  c_{ij}^e \paren{v_i -v_j}^2 \\
    & = v^\top L_G v.
  \end{align*}
  Taking $v= L_G^{+/2}x$ achieves we desired.
\end{proof}

\begin{proof}[Proof of \cref{thm:quantum-hypergraph-sparsification}]
  Utilizing \cref{alg:QHLSO} we can obtain a query access to $\nu$-overestimate
  $z$ with $\nu=2 \parens{1+\alpha_2} C_1 \cdot n= O\paren{n}$ in
  $\widetilde O\parens{r \sqrt{mnr}}$ time.
  Each query can be executed in $\widetilde O\parens{r}$ time.
  Combining with quantum sampling algorithm (\cref{cor:quantum-prob-sample}), we
  can sample a subset $\widetilde E \subseteq E$ of size
  $M = \widetilde O(n/\epsilon^{2})$ in
  $\widetilde O\parens{\sqrt{Mm} \cdot r} =\widetilde O\parens{r \sqrt{mn}/\epsilon}$
  time.
  The quantum sum estimation (\cref{thm:quantum-sum-estimate}) is executed in
  $\widetilde O\parens{r \sqrt{m}}$ time, obtaining $\widetilde s$ as
  $0.1$-approximation of $s=\norms{z}_1$.
  We then move on to demonstrate the correctness of the algorithm, showing that
  the output $\widetilde H$ is indeed an $\epsilon$-spectral sparsifier of $H$.

  For a hyperegde $e \in E$, we introduce the following notations
  \begin{align*}
    \mu_e & ={z_e}/{\widetilde s}, \\
    a_{ij} & = L_G^{+ /2} \paren{\delta_i -\delta_j}, \quad \forall \set{i,j} \in\sqb{n}, \\
    {a_{ij}^e} & =\sqrt{{\widetilde s w_e}/{z_ {e}}}\cdot  a_{ij}, \quad
                 \forall \set{i,j}\subseteq e, \\
    \phi_ e\paren{x} & =\max \set{\abss{\ang{{a_{ij}^e} , x}}:
                        \forall \set{i,j}\subseteq e}, \quad  \forall x \in\mathbb{R}^n, \\
    \phi^2_ e\paren{x} & =\max \set{\ang{{a_{ij}^e} , x}^2:
                         \forall \set{i,j}\subseteq e},\quad  \forall x \in\mathbb{R}^n,
  \end{align*}
  where $R _e :=\max \sets{R_f: f \in \binom{e}{2}}$.
  Suppose we sample a hyperedge sequence
  $ \sigma_{\mu}= \bigl(e_\mu^{(1)} , \ldots, e_\mu^{(M)} \bigr)$ such that
  each element $e$ is sampled with probability proportional to $\mu_e$ (also
  proportional to $z_e$).
  For the new obtained hypergraph $ H_{\mu}$, the weight of sampled hyperedge
  is
  \begin{equation*}
    w_e^\mu = \frac{\#\set{t\in \sqb{M}:e_\mu^{(t)} =e}}{M}
    \cdot\frac{w_e}{\mu_e},\quad\forall e\in E.
  \end{equation*}
  Recall the definition that
  $Q_e\parens{x}=\max_{\sets{u,v}\subseteq e}\parens{x_u-x_v}^2$, the energy of
  sampled hypergraph $H_\mu$ is given by
  \begin{equation*}
    Q_ {H_\mu}=\frac{1}{M} \sum_{t= 1}^M  \frac{w_{e_\mu^{(t)}}}{\mu_{e_\mu^{(t)}}}
    Q_{e_\mu^{(t)}} \paren{x}.
  \end{equation*}
  We want to choose sample time  $M$ sufficiently large such that
  \begin{equation}
    \underset {H_\mu}{\E}\sqb{\abss{Q_ H \paren{x} -
        Q_{H_\mu}\paren{x}}}\leq \epsilon\cdot Q_H\paren{x},
    \quad\forall x  \in \mathbb{R}^n.
  \end{equation}
  
Equivalently, it suffices to show that
  \begin{equation*}
    \underset {H_\mu}{\E}  \max_{v : Q_H\paren{v}\leq 1}
    \abss{Q_H \paren{v}- Q_{H_\mu}\paren{v}} \leq \epsilon.
  \end{equation*}
  It is worth noting that for $\forall x\in \mathbb{R}^n$,
  \begin{equation*}
    \frac{w_e}{\mu_e}Q_e\paren{L^{+/2}x}=\frac{w_e}{z_e}
    \cdot\widetilde s \max_{\set{i,j}\subseteq e}
    \ang{L^{+ /2}x,\delta_i -\delta_j}^2 =
    \max_{\set {i,j}\subseteq e}\ang{x,a_{ij}^e}^2=\phi_e^2\paren{x}.
  \end{equation*}
  Then we have
  \begin{equation}
    \label{eq:main-0}
    Q_{H_\mu}\paren{L^{+ /2}x}= \frac{1}{M}\sum_{j\in \sqb{M}}\phi_j^2\paren{x}
  \end{equation}
  where $\phi_j\paren{x}$ corresponds to the $j$-th sampled hyperedge using
  $\mu$.
  Let $H_\mu^\prime$ be an independent copy of $ H_\mu $, and
  $\xi_t ,t\in\sqb{M}$ be the i.i.d.
  Bernoulli $\pm 1$ random variables.
  Note that
  $\E_{H_\mu}\sqb{Q_{H_\mu}\paren{x}}=\frac{\widetilde s}{s} Q_{H}\paren{x}$
  and $\widetilde s/s \in \sqb{0.9, 1.1}$.
  By convexity of the absolute value function, for any $x\in T:=\set{x\in \mathbb{R}^n:Q_H \paren{L^{+/2}x}\leq 1}$, we
  have
  \begin{equation*}
    \begin{aligned}
      & \underset {H_\mu}{\E}\max_{x \in T} {\abss{Q_ H \paren{L^{+/2}x}
        - Q_{H_\mu}\paren{L^{+/2}x}}} \\
      & \overset{\textup{concavity}}{\leq} \underset {H_\mu ^\prime}{\E}
        \underset {H_\mu}{\E} \max_{x \in T}{\abss{\frac{s}{\widetilde s}
        Q_ { H_\mu^\prime} \paren{L^{+/2}x} -Q_{H_\mu}\paren{L^{+/2}x}}} \\
      & \leq \frac{s}{\widetilde s} \cdot
        \underset {H_\mu ^\prime}{\E}  \underset {H_\mu}{\E}\max_{x \in T}
        {\abss{Q_ { H_\mu^\prime} \paren{L^{+/2}x}
        -Q_{H_\mu}\paren{L^{+/2}x}}}+
        \paren{\frac{s}{\widetilde s}-1}
        \underset {H_\mu}{\E}\max_{x \in T}{\abss{Q_{H_\mu}\paren{L^{+/2}x}}} \\
      & \overset{\textup{\cref{eq:main-0}}}{=} \frac{s}{\widetilde s}\cdot
        \underset {H_\mu ^\prime}{\E}  \underset {H_\mu}{\E}\max_{x \in T}
        {\abss{\frac{1}{M} \sum_{t\in \sqb{M}} \phi_{e_\mu^{(t)}}^2
        \paren{x} - \phi_{{e_{\mu}^{(t)}}^\prime}^2\paren{x}}}+
        \paren{\frac{s}{\widetilde s}-1}\underset{H_\mu}{\E}\max_{x \in T}
        {\abss{\frac{1}{M} \sum_{t\in \sqb{M}} \phi_{e_\mu^{(t)}}^2\paren{x}}} \\
      & =\frac{s}{\widetilde s} \cdot\underset{\xi}{\E}
        \underset {H_\mu^\prime}{\E} \underset {H_\mu}{\E}\max_{x \in T}
        {\abss{\frac{1}{M}  \sum_{t\in \sqb{M}} \xi_t\paren{\phi_{e_\mu^{(t)}}^2\paren{x} -
        \phi_{{e_{\mu}^{(t)}}^ \prime}^2\paren{x}}}}+
        \paren{\frac{s}{\widetilde s}-1}\underset{\xi}{\E}\underset {H_\mu}{\E}\max_{x \in T}
        {\abss{\frac{1}{M} \sum_{t\in \sqb{M}} \xi_t\phi_{e_\mu^{(t)}}^2\paren{x}}} \\
      & \leq  \frac{2s}{\widetilde s}   \cdot\underset{\xi}{\E}
        \underset {H_\mu}{\E}\max_{x \in T} {\abss{\frac{1}{M} \sum_{t\in \sqb{M}} \xi_t
        \phi^2_ {e_\mu^{(t)}}\paren{x}}}+
        \paren{\frac{s}{\widetilde s}-1}\underset{\xi}{\E}
        \underset {H_\mu}{\E}\max_{x \in T}{\abss{\frac{1}{M}\sum_{t\in \sqb{M}} \xi_t
        \phi^2_ {e_\mu^{(t)}}\paren{x}}} \\
      &= \paren{\frac{3s}{\widetilde s}-1}\cdot \underset{\xi}{\E}
        \underset {H_\mu}{\E} \max_{x \in T} \abss{{\frac{1}{M}   \sum_{t\in \sqb{M}} \xi_t
        \phi^2_ {e_\mu^{(t)}}\paren{x}}}.
    \end{aligned}
  \end{equation*}
Note that for any function $f$, the inequality $\max_{x\in T} |f(x)| \leq \max\{\max _{x\in T}f(x), 0\} + \max\{\max_{x\in T} -f(x), 0\}$ holds. Now, let $f\parens{x}=\frac{1}{M}\sum_{t\in \sqb{M}}\xi_t  \phi^2_ {e_\mu^{(t)}}\paren{x}$. The second term $0$ in $\max\sets{\cdot,0}$, can be attained by the first term, since $\lim_{v\to 1}Q_{H}\parens{v}=0$, combined with the identity \cref{eq:main-0}. Consequently, we have 
  \begin{equation}
    \label{eq:main-1}
    \begin{aligned}
      & \underset {H_\mu}{\E}\max_{x \in T} {\abss{Q_ H \paren{L^{+/2}x}
        - Q_{H_\mu}\paren{L^{+/2}x}}} \\
        &\leq  \paren{\frac{3s}{\widetilde s}-1}\cdot \paren{\underset{\xi}{\E}
        \underset {H_\mu}{\E}\max_{x \in T}  f\parens{x} + \underset{\xi}{\E}
        \underset {H_\mu}{\E}\max_{x \in T}  -f \parens{x}}\\
      &= 2\paren{\frac{3s}{\widetilde s}-1}\cdot \underset{\xi}{\E}
        \underset {H_\mu}{\E} \max_{x \in T} {{\frac{1}{M}   \sum_{t\in \sqb{M}} \xi_t
        \phi^2_ {e_\mu^{(t)}}\paren{x}}}.
    \end{aligned}
  \end{equation}
The last equality holds because $\xi_t, t \in \sqb{T}$ are i.i.d. Bernoulli $\pm 1$ random variables.
  
Consider the random process
  $V_x=\frac{1}{M}\sum_{j \in \sqb{M}}\xi_j \phi^2_j\paren{x}$, where
  $\phi_j\paren{x}$ corresponds to the $j$-th sampled hyperedge using $\mu$,
  $x\in T$. By applying \cref{le:core-lemma-in-lee2023spectral} with the linear map $A:\mathbb{R}^n\to \mathbb{R}^{n(n-1)/2}$ defined as 
  \begin{equation*}
  \parens{Ax}_{ij}:=\max_{e\in E:\sets{i,j}\in \binom{e}{2}}\norms{a_{ij}^e}\cdot\ang{x,{a_{ij}}/{\norms{a_{ij}}}},
  \end{equation*} and using  \cref{prop:energy-comparison} to ensure that $T \subseteq B_2^n$, as required by \cref{le:core-lemma-in-lee2023spectral},
  we obtain:
  \begin{equation}
    \label{eq:main-2}
    \begin{aligned}
      \underset{\xi}{\E}\sup_{x \in T} V_x
      & \leq C_{0} \cdot \frac{\norms{A}_{2\to \infty}\cdot
        \sqrt{\log \paren{n (n -1)/2 +n} \log \paren{r (r-1)/2}}}{\sqrt {M}}
        \max_{x \in T}\paren{\frac{1}{M}\sum_{j \in \sqb{M}}   \phi _j^2 \paren{x}}^{1/2} \\
      & \leq 2 C_{0}\cdot \frac{\norms{A}_{2\to \infty}\cdot
        \sqrt{\log n \log r}}{\sqrt {M}}\max_{x \in T}
        \paren{\frac{1}{M}\sum_{j \in \sqb{M}} \phi _j^2 \paren{x}}^{1/2}
    \end{aligned}
  \end{equation}
  where
  \begin{align*}
    \norms{A}_{2\to \infty}
    & =\max\sets{\norm{Ax}_{\infty}: x\in B^n_2} =\max_{e \in E} \max_{\sets{i,j}\subseteq e}\norm{a_{ij}^e} \\
    & = \max_{e \in E} \max_{\sets{i,j}\subseteq e}
      \sqrt{\frac{\widetilde s w_{e}}{z_e}\cdot R_{ij}} = \sqrt{\widetilde s}\cdot \max_{e \in E} \max_{\sets{i,j}\subseteq e}
      \sqrt{\frac{w_{e} R_{ij}}{z_e} }\\
    &\leq \sqrt{\widetilde s}.
  \end{align*}
  Observe that the component in RHS  of \cref{eq:main-2} can be written as
  \begin{equation*}
    \max_{x \in T} \frac{1}{M}\sum_{j \in \sqb{M}}   \phi _j^2 \paren{x} =
    \max_{v : Q_H \paren{v}\leq 1} \frac{1}{M}\sum_{j \in \sqb{M}} \phi _j^2
    \paren{L^{+/2} v}=\max_{v :Q_{H} \paren{v}\leq 1} Q_{H_\mu}\paren{v}.
  \end{equation*}
The first equality holds because $Q_H\parens{x}=Q_H\parens{x^\prime}$ whenever $x-x^\prime \in \operatorname{ker}\parens{L_G}$.
  Then we have
  \begin{equation*}
    \begin{aligned}
      \tau :=\underset {H_\mu}{\E}  \max_{v : Q_H\paren{v}\leq 1}
      \abss{Q_H \paren{v}- Q_{H_\mu}\paren{v}}
      & = \underset {H_\mu}{\E}  \max_{x \in T}
        \abss{Q_H \paren{L^{+ /2}x}- Q_{H_\mu}\paren{L^{+ /2} x}} \\
      & \overset{\textup{\cref{eq:main-1}}} {\leq} \paren{\frac{6s}{\widetilde s}-2}
        \cdot\underset{\epsilon}{\E}  \underset {H_\mu}{\E}
        \max_{x \in T}{{\frac{1}{M}   \sum_{t\in \sqb{M}} \epsilon_t
        \phi^2_ {e_\mu^{(t)}}\paren{x}}} \\
      & \overset{\textup{\cref{eq:main-2}}} {\leq}4\paren{\frac{3s}{\widetilde s}-1}
        \cdot C_{0} \cdot {\sqrt{\widetilde s   \log n \log r/M}}\cdot
        \underset {H_\mu} {\E}\paren{\max_{v :Q_{H} \paren{v}\leq 1} Q_{H_\mu}\paren{v}}^{1/2} \\
      & \overset{\textup{concavity}}{\leq}4\paren{\frac{3s}{\widetilde s}-1}
        \cdot C_{0} \cdot { \sqrt{\widetilde s     \log n \log r /M}}\cdot
        \paren{\underset {H_\mu} {\E} \max_{v :Q_{H} \paren{v}\leq 1} Q_{H_\mu}\paren{v}}^{1/2} \\
      & \leq 4\paren{\frac{3s}{\widetilde s}-1}\cdot C_{0}\cdot
        \sqrt{\widetilde s    \log n \log r /M}\cdot \paren{1+\tau}^{1/2} \\
      & \leq 4\paren{\frac{3s}{\widetilde s}-1}  \cdot C_{0} \cdot
        { \sqrt{\widetilde s    \log n \log r /M}} \cdot \paren{1+ \frac{1}{2}\tau} \\
      & \leq  10  C_{0} \cdot { \sqrt{\widetilde s    \log n \log r /M}}
        \cdot \paren{1+ \frac{1}{2}\tau}.
    \end{aligned}
  \end{equation*}
  Therefore, we have $\tau \leq 20 C_{0} \sqrt{\widetilde s \log n \log r /M}$
  whenever $M \geq 100 C_{0}^{2}\widetilde s \log n \log r$.
  Choosing
  $M := 400 C^2_{0} {\widetilde s {\log n \log r}}/{\epsilon^2}=
  \Theta\parens{n \log n \log r /\epsilon^2}$
  yields
  \begin{equation*}
    \underset {H_\mu}{\E}  \max_{v : Q_H\paren{v}\leq 1}
    \abss{Q_H \paren{v}- Q_{H_\mu}\paren{v}}=\tau \leq \epsilon.
  \end{equation*}

\end{proof}


\end{document}